\newif\ifARX\ARXtrue
\DeclareMathOperator{\VCSP}{VCSP}
\DeclareMathOperator{\BLP}{BLP}
\newcommand{\fpol}{\fpolVECTOR}
\newcommand{\fpolVECTOR}{\ensuremath{\calO^{(m)}\to\Rnn}}
\newcommand{\fpolMinusOne}{\fpolMinusOneVECTOR}
\newcommand{\fpolMinusOneVECTOR}{\ensuremath{\calO^{(m-1)}\to\Rnn}}
\newcommand{\tuple}[1]{\ensuremath{\langle #1\rangle}}
\newcommand{\R}{\mbox{$\mathbb R$}}
\newcommand{\Q}{\mbox{$\mathbb Q$}}
\newcommand{\Qnn}{\mbox{$\mathbb Q_{\geq 0}$}}
\newcommand{\Qnnc}{\mbox{$\overline{\mathbb Q}_{\geq 0}$}}
\newcommand{\Qc}{\mbox{$\overline{\mathbb Q}$}}
\newcommand{\Rnn}{\mbox{$\mathbb R_{\geq 0}$}}
\newcommand{\Rnnc}{\mbox{$\overline{\mathbb R}_{\geq 0}$}}
\newcommand{\Rc}{\mbox{$\overline{\mathbb R}$}}
\newcommand{\mm}[2]{\langle #1, #2 \rangle}
\newcommand{\meet}{\wedge}
\newcommand{\join}{\vee}
  \DeclareMathOperator{\supp}{supp}
\long\def\ignore#1{}
\def\myps[#1]#2{\includegraphics[#1]{#2}}
\def\br(#1,#2){{\langle #1,#2 \rangle}}
\def\setZ[#1,#2]{{[ #1 .. #2 ]}}
\def\q={\quad=\quad}
\def\qq={\qquad=\qquad}
\def\calC{{\cal C}}
\def\calI{{\cal I}}
\def\calO{{\cal O}}
\def\psfile[#1]#2{}
\def\psfilehere[#1]#2{}
\def\unitvec(#1){{{\bf u}_{#1}}}
\DeclareMathOperator{\dom}{\textup{\texttt{dom}}}
\newcommand{\bg}{\mbox{$\bf g$}}
\newcommand{\bh}{\mbox{$\bf h$}}
\newcommand{\eqdef}{=} 
\newtheorem{theoremX}{Theorem}
\newtheorem{lemmaX}[theoremX]{Lemma}
\newtheorem{propositionX}[theoremX]{Proposition}
\newtheorem{corollaryX}[theoremX]{Corollary}
\newtheorem{definitionX}{Definition}
\newtheorem{theoremRESTATED}{Theorem}
\newtheorem{lemmaRESTATED}{Lemma}
\newtheorem{remarkX}{Remark}
\newtheorem{exampleX}{Example}
\newtheorem{exampleRESTATED}{Example}
\def\examplefont{} 
 \theoremstyle{remark}
\begin{document}
\fi

\title{The power of linear programming for general-valued CSPs\thanks{Part of this work
(by J. Thapper and S. \v{Z}ivn\'y) appeared in the \emph{Proceedings of the 53rd
Annual IEEE Symposium on Foundations of Computer Science} (FOCS), pp. 669--678,
2012~\cite{tz12:focs}. Part of this work (by V. Kolmogorov) appeared in the
\emph{Proceedings of the 40th International Colloquium on Automata, Languages
and Programming} (ICALP), pp. 625--636, 2013~\cite{Kolmogorov13:icalp}. 
Vladimir Kolmogorov is supported by the European Research Council under the European Unions Seventh Framework Programme
(FP7/2007-2013)/ERC grant agreement no 616160.
Stanislav \v{Z}ivn\'y is supported by a Royal Society University Research Fellowship.
}}

\ifARX 

\author{Vladimir Kolmogorov\\
Institute of Science and Technology, Austria \\ 
\texttt{vnk@ist.ac.at}
\and
Johan Thapper\\
Universit\'e Paris-Est, Marne-la-Vall\'ee, France\\
\texttt{thapper@u-pem.fr}
\and
Stanislav \v{Z}ivn\'{y}\\
University of Oxford, UK\\
\texttt{standa.zivny@cs.ox.ac.uk}
}

\date{\today}
\maketitle

\else

\author{
Vladimir Kolmogorov\thanks{Institute of Science and Technology Austria, Austria (vnk@ist.ac.at)}
\and
Johan Thapper\thanks{Universit\'e Paris-Est, Marne-la-Vall\'ee, France (thapper@u-pem.fr)}
\and 
Stanislav \v{Z}ivn\'{y}\thanks{University of Oxford, UK (standa@cs.ox.ac.uk)}
}

\begin{document}
\maketitle

\fi

\begin{abstract}
Let $D$, called the domain, be a fixed finite set and let $\Gamma$,
called the valued constraint language, be a fixed set of functions of the
form $f:D^m\to\mathbb{Q}\cup\{\infty\}$, where different functions might have
different arity $m$. We study the \emph{valued constraint satisfaction problem}
parametrised by $\Gamma$, denoted by VCSP$(\Gamma)$. These are minimisation
problems given by $n$ variables and the objective function given by a sum of
functions from $\Gamma$, each depending on a subset of the $n$ variables.
For example, if $D=\{0,1\}$ and $\Gamma$ contains all ternary $\{0,\infty\}$-valued
functions, VCSP($\Gamma$) corresponds to 3-SAT.
More generally, if $\Gamma$ contains only $\{0,\infty\}$-valued functions,
VCSP($\Gamma$) corresponds to
CSP($\Gamma$).
If $D=\{0,1\}$ and $\Gamma$ contains all ternary $\{0,1\}$-valued
functions, VCSP($\Gamma$) corresponds to Min-3-SAT, in which the goal is to
minimise the number of unsatisfied clauses in a 3-CNF instance.
\emph{Finite-valued} constraint languages contain functions that take on
only rational values and not infinite values.
%

Our main result is a precise algebraic characterisation of valued constraint
languages whose instances can be solved \emph{exactly} by the \emph{basic linear
programming relaxation} (BLP). For a valued constraint language
$\Gamma$, BLP is a decision procedure for $\Gamma$ if and only if $\Gamma$
admits a symmetric fractional polymorphism of \emph{every} arity.
For a finite-valued constraint language $\Gamma$, BLP is a decision procedure if
and only if $\Gamma$ admits a symmetric fractional polymorphism of \emph{some}
arity, or equivalently, if $\Gamma$ admits a symmetric fractional polymorphism
of \emph{arity 2}.

Using these results, we obtain tractability of several novel classes of problems, including problems over valued constraint
languages that are: (1) submodular on \emph{arbitrary lattices}; (2)
$k$-submodular on \emph{arbitrary finite domains};
(3) weakly (and hence strongly) tree-submodular on \emph{arbitrary trees}. 
\end{abstract}

\ifARX
 \maketitle
\fi

\ifARX
\medskip\noindent
\textbf{Keywords}: 
\else
\begin{keywords}
\fi
valued constraint satisfaction, fractional polymorphisms,
submodularity, bisubmodularity, linear programming
\ifARX
\else
\end{keywords}
\fi

\ifARX
\else
%
\begin{AMS}
08A70, 
68Q25, 
68Q17, 
90C27  
\end{AMS}
\fi

\section{Introduction}

\subsection{Constraint Satisfaction}

The constraint satisfaction problem  provides a common framework for many
theoretical and practical problems in computer science~\cite{Hell08:survey}. An
instance of the \emph{constraint satisfaction problem} (CSP) consists of a
collection of variables that must be assigned labels from a given domain subject
to specified constraints~\cite{Montanari74:constraints}. The CSP is equivalent
to the problem of evaluating conjunctive queries on
databases~\cite{Kolaitis00:jcss}, and to the homomorphism problem for relational
structures~\cite{Feder98:monotone}.

The classic 3-COLOUR problem can be seen as the following CSP: the domain
consists of three labels corresponding to the three colours; the variables
correspond to the vertices of the graph; and the constraints specify that the
variables corresponding to adjacent vertices have to be assigned different
labels. 

The CSP is NP-complete in general and thus we are interested in
restrictions which give rise to tractable classes of problems. One possibility
is to restrict the structure of the instances~\cite{Grohe07:compl,Marx13:jacm}.
Following Feder and Vardi~\cite{Feder98:monotone}, we restrict the constraint
language; that is, all constraint relations in a given instance must belong to a
fixed, finite set of relations on the domain. The most successful approach to
classifying the language-restricted CSP is the so-called algebraic
approach~\cite{Jeavons97:closure,Jeavons98:algebraic,Bulatov05:classifying},
which has led to several complexity
classifications~\cite{Bulatov06:3-elementJACM,Bulatov11:conservative,Barto09:siam,Barto11:lics}
and algorithmic characterisations~\cite{Barto14:jacm,Idziak10:siam} going beyond
the seminal work of Schaefer~\cite{Schaefer78:complexity}.

\subsection{Valued Constraint Satisfaction}

The CSP deals with only feasibility issues: Is there a solution satisfying certain
constraints? In this work we are interested in problems that capture both
feasibility and optimisation issues: What is the best solution satisfying
certain constraints? Problems of this form can be cast as valued constraint
satisfaction problems~\cite{z12:complexity,jkz14:survey}.

An instance of the \emph{valued constraint satisfaction problem} (VCSP) is given
by a collection of variables that must be assigned labels from a given domain
with the goal to \emph{minimise} the objective function that is given by the sum
of cost functions, each depending on some subset of the
variables~\cite{Cohen06:complexitysoft}. The cost functions can take on finite
rational values and positive infinity. The VCSP framework is very robust and has also
been studied under different names such as Min-Sum problems, Gibbs energy
minimisation, Markov Random Fields,  Conditional Random Fields and others in
different contexts in computer
science~\cite{Lauritzen96,Wainwright08,Crama11:book}. 

The CSP corresponds to the special case of the VCSP when the codomain of all
cost functions is $\{0,\infty\}$. Given a CSP instance, the Max-CSP consists in
determining the maximum possible number of satisfied constraints, or equivalently with
respect to exact solvability, the minimum number of unsatisfied constraints. The
Max-CSP corresponds to the case of the VCSP when the codomain of all cost
functions is $\{0,1\}$.

The VCSP is NP-hard in general and thus we are interested in the restrictions
which give rise to tractable classes of problems. As for the CSP, one can
restrict the structure of the instances~\cite{Gottlob09:icalp}. 
We will be interested in restricting the \emph{valued constraint language}; that
is, all cost functions in a given instance must belong to a fixed set of cost
functions on the domain. The ultimate goal is to understand the computational
complexity of all valued constraint languages, that is, determine which
languages give rise to classes of problems solvable in polynomial time and which
languages give rise to classes of problems that are NP-hard. Languages of the
former type are called \emph{tractable}, and languages of the latter type are
called \emph{intractable}.

Given the generality of the VCSP, it is not surprising that only few valued
constraint languages have been completely classified as tractable or
intractable. In particular, only Boolean (on a 2-element domain)
languages~\cite{Cohen06:complexitysoft,cz11:cp-mwc} and conservative (containing
all $\{0,1\}$-valued unary cost functions) languages~\cite{kz13:jacm} have been
completely classified with respect to exact solvability. 

Extending the notion of (generalised) arc consistency for the
CSP~\cite{Mackworth77:consistency,Freuder78:synthesizing} and several previously
studied notions of arc consistencies for the VCSP~\cite{Cooper04:aij-arc},
Cooper et al.\ introduced optimal soft arc consistency
(OSAC)~\cite{Cooper08:minimizing,Cooper10:osac}, which is a linear programming
relaxation of a given VCSP instance.
In fact, the VCSP problem has a natural linear programming (LP)
relaxation, proposed independently by a number of authors
\cite{Schlesinger76,Koster98,Chekuri04:sidma,Wainwright05,Carleton05,Kun12:itcs,Werner07:pami}. 
This relaxation is referred to as the \emph{basic} LP relaxation (BLP) of VCSP
as it is the first level in the Sheralli-Adams hierarchy \cite{Sherali1990},
which provides successively tighter LP relaxation of an integer LP. 
The BLP relaxation of a VCSP instance is known to be equivalent to the dual
(Lagrangian) decomposition of the instance
in which the subproblems are chosen as the individual constraints
\cite{Johnson07,Komodakis-PAMI-2011,Sontag-optbook2011}.
It is known that OSAC is at least as tight as BLP.\footnote{The difference
between BLP and OSAC is that (the dual of) OSAC has only one variable for all
constraints with the same scope (seen as a set) of variables. In BLP, different
constraints yield different BLP variables even if the scopes (seen as sets) are
the same.}

Apart from exact solvability of the CSP and its optimisation variants, the
approximability of the Max-CSP has attracted a lot of
attention~\cite{Creignouetal:siam01,Khanna01:approximability,Jonsson09:tcs}.
Under the assumption of the unique games conjecture~\cite{khot10:coco},
Raghavendra has shown that the optimal approximation ratio for the finite-valued
CSP is achieved by the basic semidefinite programming
relaxation~\cite{Raghavendra08:stoc,Raghavendra}. Recently, the classes of the
Max-CSP that are robustly approximable have been
characterised~\cite{Dalmau13:robust,Kun12:itcs,Barto12:stoc}. Specifically, Kun
et al.\ have studied the question of which classes of the Max-CSP can be
robustly approximated using BLP~\cite{Kun12:itcs}. Moreover, the power of BLP
with respect to constant-factor approximation of finite-valued CSPs has been recently
studied~\cite{Ene13:soda,Dalmau15:soda}.

More details on the complexity of the CSP can be found in~\cite{Hell08:survey}
and more details on the complexity of the VCSP can be found in the recent
survey~\cite{jkz14:survey}.

\subsection{Contributions}

We study the power of the \emph{basic linear programming relaxation} (BLP). Our
main result is a precise characterisation of valued constraint languages for
which BLP is a decision procedure. In other words, we characterise valued
constraint languages over which VCSP instances can be solved \emph{exactly} by
the BLP, i.e., when the BLP has integrality gap 1.

The characterisation is algebraic in terms of \emph{fractional
polymorphisms}~\cite{Cohen06:expressive}.
For a valued constraint language $\Gamma$ with codomain the set of rationals
with infinity, BLP is a decision procedure for $\Gamma$ if and only if $\Gamma$
admits a symmetric fractional polymorphism of \emph{every} arity.
For a valued constraint language $\Gamma$ with codomain the set of rationals
(so-called finite-valued languages),
BLP is a decision procedure if and only if any of the following equivalent
statements is satisfied: $\Gamma$ admits a symmetric fractional polymorphism of
\emph{every} arity; $\Gamma$ admits a symmetric fractional polymorphism of
\emph{some} arity; $\Gamma$ admits a symmetric fractional
polymorphism of \emph{arity 2}; $\Gamma$ admits a fractional polymorphism $\omega$ such that the
support of $\omega$ generates a symmetric operation (possibly of different arity
than the arity of $\omega$).

Our work links solving VCSP instances exactly using linear programming and the algebraic machinery for
the language-restricted VCSP introduced by Cohen et al.\ in~\cite{ccjz11:mfcs,cccjz13:sicomp}. Part of the proof is
inspired by the characterisation of the width-1
CSP~\cite{Feder98:monotone,Dalmau99:set}. The two main technical contributions
are the construction of a symmetric fractional polymorphism of a general-valued
language (Theorem~\ref{th:BLP:generate}) and the construction of symmetric
fractional polymorphisms of all arities of a finite-valued language
(Theorem~\ref{thm:finite:gen}). In order to prove these two results, we present
two techniques: a ``tree cutting'' argument, used in Section~\ref{sec:generate1}
to prove Theorem~\ref{th:BLP:generate}, and an argument based on a ``graph of
generalised operations'', used in Section~\ref{sec:finite} to prove
Theorem~\ref{thm:finite:gen}. 

Our results allow us to demonstrate that several valued constraint languages are
tractable; that is, VCSP instances over these languages can be solved exactly
using BLP. Languages not previously known to be tractable include: (1)
submodular languages on \emph{arbitrary lattices}; (2) $k$-submodular languages
on \emph{arbitrary finite domains} ($k=2$ corresponds to bisubmodularity); (3)
weakly (and hence strongly) tree-submodular languages on \emph{arbitrary trees}. 
The complexity of (subclasses of) these languages has been mentioned explicitly
as open problems
in~\cite{DeinekoJKK08,Krokhin08:max,Kolmogorov11:mfcs,Huber12:ksub}.

\subsection{Follow-up work}

Since the announcement of our results~\cite{tz12:focs}, they have already been
used to settle the complexity of the minimum 0-extension
problem~\cite{Hirai13:soda}, the complexity of the 3-element finite-valued VCSP~\cite{hkp14:sicomp},
and the complexity of the 3-element Min-Sol problems and conservative Min-Cost-Hom
problems~\cite{Uppman13:icalp}. Moreover, the last two authors have recently
shown that for finite-valued constraint languages, the condition of admitting a
symmetric fractional polymorphism of arity 2 is also necessary for tractability~\cite{tz13:stoc}.

\subsection{Combinatorial Optimisation}

Throughout the paper we assume that the objective function in our problem is
represented as a sum of functions each defined on some subset of the variables.
There is a rich tradition in combinatorial optimisation of studying problems in
which the objective function to be optimised is represented by a value-giving
\emph{oracle}. In this model, a problem is tractable if it can be solved in
polynomial time using only polynomially many queries to the oracle (where the
polynomial is in the number of variables). Any query to the oracle can
be easily simulated in linear time in the VCSP model. Consequently, a
tractability result (for a class of functions) in the value oracle model
automatically carries over to the VCSP model, while hardness results
automatically carries over in the opposite direction.

One class of functions that has received particular attention in the value
oracle model is the class of submodular functions. There are several known
algorithms for minimising a (finite-valued) submodular function using only a
polynomial number of calls to a value-giving oracle, see, for instance, Iwata's
survey~\cite{Iwata08:sfm-survey}.
Most previously discovered tractable valued constraint languages are somewhat related
to submodular functions on distributive
lattices~\cite{Cohen06:complexitysoft,Cohen08:Generalising,Jonsson11:cp,kz13:jacm}.
However, some VCSP instance with submodular functions can be solved much more
efficiently than by using these general approaches~\cite{zcj09:dam}. 

Whilst submodular functions given by an oracle can be minimised in
pseudopolynomial time on diamonds~\cite{Kuivinen11:do-diamonds}, in
polynomial
time on distributive lattices~\cite{Schrijver00:submodular,Iwata01:submodular}
and the pentagon~\cite{Krokhin08:max}, and
several constructions on lattices preserving tractability have been
identified~\cite{Krokhin08:max}, it is an interesting open open question as to
what happens on non-distributive lattices. Similarly, $k$-submodular functions
given by an oracle can be minimised in polynomial-time on domains of size
three~\cite{Fujishige06:bisubmodular}, but the complexity is unknown on domains
of larger size~\cite{Huber12:ksub}. It is known that strongly tree-submodular
functions given by an oracle can be minimised in polynomial time on binary
trees~\cite{Kolmogorov11:mfcs}, but the complexity is open on general
(non-binary) trees. Similarly, it is known that weakly tree-submodular functions
given by an oracle can be minimised in polynomial time on chains and
forks~\cite{Kolmogorov11:mfcs}, but the complexity on (even binary) trees is
open.

\section{Background}

In this section we describe the necessary background for the rest of the paper.
We start with some basic notation.
We denote $\Qnn=\{x\in\Q\:|\:x\ge 0\}$, $\Qc=\Q\cup\{\infty\}$ and $\Qnnc=\Qnn\cup\{\infty\}$.
We  define sets of real numbers $\Rnn$, $\Rc$ and $\Rnnc$ in a similar way.
Throughout the paper we assume that $0\cdot \infty=0$ and $x\cdot\infty=\infty$ for $x>0$
(we will never use such multiplication in the case when $x<0$). Note that value $\infty$ is understood as positive infinity
(and accordingly $x<\infty$ for any $x\in\R$).

\subsection{Valued CSP}

Throughout the paper, let $D$ be a fixed finite domain. We will call the
elements of $D$ \emph{labels} (for variables).
A function $f:D^m\to\Qc$ is called
an $m$-ary cost function and we say that $f$ takes \emph{values}. The argument
of $f$ is called a \emph{labelling}. For a cost function $f$, we denote $\dom f=\{x\in
D^n\:|\:f(x)<\infty\}$. 

A \emph{language} $\Gamma$ is a set of cost functions of possibly different
arities. A language $\Gamma$ is called \emph{finite-valued} if the codomain of
every $f\in\Gamma$ is $\Q$. If $\Gamma$ is not finite-valued
we may emphasise this fact by calling $\Gamma$ \emph{general-valued}.

\begin{definitionX}\label{def:vcsp}
An instance $\calI$ of the valued constraint
satisfaction problem (VCSP) is a function $D^V\rightarrow \Qc$ given
by
\begin{equation*}
f_\calI(x)\ =\ \sum_{t\in T} f_t(x_{v(t,1)},\ldots,x_{v(t,n_t)}). \label{eq:Cost}
\end{equation*}
It is specified by a finite set of variables $V$, finite set of terms
$T$, cost functions $f_t : D^{n_t}\rightarrow\Qc$ of arity $n_t$ and
indices $v(t, k)\in V$ for $t\in T , k =1,\ldots, n_t$. The indices of term
$t\in T$ give the \emph{scope} of the cost function $f_t$.
A solution to $\calI$ is a labelling (also an assignment) $x\in D^V$ with the
minimum total value.
The instance $\calI$ is called a $\Gamma$-instance if all terms $f_t$ belong to $\Gamma$.
\end{definitionX}

The class of optimisation problems consisting of all $\Gamma$-instances is referred to as $\VCSP(\Gamma)$.
A language $\Gamma$ is called tractable if $\VCSP(\Gamma')$ can be solved in polynomial time for \emph{each} finite $\Gamma'\subseteq\Gamma$.
It is called NP-hard if $\VCSP(\Gamma')$ is NP-hard for \emph{some} finite $\Gamma'\subseteq\Gamma$.

\subsection{The basic LP relaxation} 

Let $\mathbb M_n$ be the set of probability distributions over
labellings in $D^n$, i.e.\ $\mathbb M_n=\{\mu\ge 0\:|\:\sum_{x\in D^n}\mu(x)=1\}$.
We also denote $\Delta=\mathbb M_1$; thus, $\Delta$ is the standard ($|D|-1$)-dimensional simplex.
The corners of $\Delta$ can be identified with elements in $D$.
For a distribution $\mu\in\mathbb M_n$ and a variable $v\in\{1,\ldots,n\}$ let
 $\mu_{[v]}\in \Delta$ be the marginal probability of distribution $\mu$ for $v$:
\begin{equation*}
\mu_{[v]}(a)\ = \sum_{x\in D^n:x_v=a} \mu(x) \qquad \forall a \in D.
\end{equation*}
Given an instance $\calI$, we define the value $\BLP(\calI)$ as follows:
\begin{eqnarray}
&& \hspace{-85pt} \BLP(\calI)\ =\ \min\ \sum_{t\in T}\sum_{x\in \dom f_t}\mu_t(x)f_t(x)  \label{eq:BLP} \\
 \mbox{s.t.~~} (\mu_t)_{[k]}&=&\alpha_{v(t,k)} \hspace{20pt} \forall t\in T,k\in\{1,\ldots,n_t\} \nonumber \\
\mu_t&\in&\mathbb M_{n_t}                      \hspace{29pt} \forall t\in T \nonumber \\
\mu_t(x)&=&0                                   \hspace{43pt} \forall t\in T,x\notin \dom f_t \nonumber \\
\alpha_v&\in&\Delta                            \hspace{40pt} \forall v\in V \nonumber 
\end{eqnarray}
If there are no feasible solutions then $\BLP(\calI)=\infty$.
All constraints in this system are linear, therefore this is a linear program. 
We call it the \emph{basic LP relaxation} of $\calI$ (BLP).
We say that BLP \emph{solves} $\calI$ if 
$\BLP(\calI)=\min_{x\in D^n}f_\calI(x)$.
We say that BLP solves a language $\Gamma$ if it solves all instances
$\calI\in\VCSP(\Gamma)$.

\subsection{Fractional polymorphisms}

We denote by $\calO^{(m)}$ the set of $m$-ary operations $g:D^m\rightarrow D$.
An operation $g:D^2\to D$ of arity 2 is called \emph{binary}.
An $m$-ary projection on the $i$th coordinate is the operation
$e^{(m)}_i:D^m\rightarrow D$ defined by $e^{(m)}_i(x_1,\ldots,x_m)=x_i$.
Let $S_m$ be the symmetric group on $\{1,\dots,m\}$. An operation
$g\in\calO^{(m)}$ is called \emph{symmetric} if it is invariant with respect to
any permutation of its arguments:
$g(x_1,\ldots,x_m)=g(x_{\pi(1)},\ldots,x_{\pi(m)})$ for any permutation $\pi \in
S_m$ and any $(x_1,\ldots,x_m)\in D^m$. 
The set of symmetric
operations in $\calO^{(m)}$ will be denoted by $\calO^{(m)}_{\tt sym}$.

A \emph{fractional operation} of arity $m$ is a vector $\omega:\fpol$
satisfying $\| \omega \|_1 = 1$, where $\|\omega\|_1 = \sum_{g\in\calO^{(m)}} \omega(g)$.
We let $\supp(\omega)$ denote the support of $\omega$, defined by
$\supp(\omega)=\{g\in\calO^{(m)}\:|\:\omega(g)>0\}$.

It will often be convenient to write a fractional operation $\omega:\fpol$ as a sum
$\omega=\sum_{g\in\calO^{(m)}} \omega(g) \cdot \chi_g$, where
$\chi_g:\fpol$ denotes the vector
that assigns weight $1$ to the operation $g$ and $0$ to all other operations.

A fractional operation $\omega$ is called symmetric 
if all operations in $\supp(\omega)$ are symmetric. 

The \emph{superposition}, $h[g_1,\dots,g_n]$,  of an $n$-ary operation $h$ with
$n$ $m$-ary operations $g_1, \dots, g_n$ is the $m$-ary operation defined by 
\[ 
h[g_1,\dots,g_n](x_1,\dots,x_m) =
h(g_1(x_1,\dots,x_m),\dots,g_n(x_1,\dots,x_m)).
\]
This can also be seen as a composition $h\circ(g_1,\dots,g_n) : D^m \to D$ of
the operation 
$h:D^n\rightarrow D$ and the mapping $(g_1,\dots,g_n):D^m\rightarrow D^n$.

The \emph{superposition}, $\omega[g_1,\dots,g_n]$, of an $n$-ary fractional
  operation $\omega$ with $n$ $m$-ary operations $g_1,\dots,g_n$ is the $m$-ary
 fractional operation defined as follows: 
  \[
  \omega[g_1,\dots,g_n](h) = \sum_{\{h' \:\mid\: h = h'[g_1,\dots,g_n]\}} \omega(h').
  \]

The following example illustrates these definitions.

\begin{exampleX}\examplefont\label{ex:sub1}
Let $D=\{0,1,\ldots,d\}$ and let $\min,\max:D^2\to D$ be the two binary
operations on $D$ that return the smaller (larger) of its two arguments
respectively with respect to the natural order of integers.

The ternary operation $\min^{(3)}$ returning the smallest of its three arguments can be
obtained by the following superposition:
$\min^{(3)}(x,y,z)=\min[e^{(3)}_1,\min[e^{(3)}_2,e^{(3)}_3]]$.

Let $\omega$ be the fractional operation that assigns weight $\frac{1}{2}$ to
$\min$ and weight $\frac{1}{2}$ to $\max$. Clearly,
$\supp(\omega)=\{\min,\max\}$. Since both $\min$ and $\max$ are
symmetric operations, we have that $\omega$ is a symmetric fractional operation.

Let $\min^{(4)}_{12}$ and $\min^{(4)}_{34}$ be the two 4-ary operations that
return the smaller of its first (last) two arguments respectively. Then the
superposition of $\omega$ with $\min^{(4)}_{12}$ and $\min^{(4)}_{34}$ is the
4-ary fractional operation $\omega'=\omega[\min^{(4)}_{12},\min^{(4)}_{34}]$ that assigns
weight $\frac{1}{2}$ to the operation $\min^{(4)}$, which returns the smallest
of its four arguments, and weight $\frac{1}{2}$ to the operation
$\max[\min^{(4)}_{12},\min^{(4)}_{34}]$.
\end{exampleX}

\begin{definitionX}
For an $n$-ary cost function $f:D^n\to\Qc$ and $x^1,\ldots,x^m\in D^n$ for some $m\geq 1$, we
define the average value of $f$ applied to the labellings $x^1,\ldots,x^m$
by 
\begin{equation*}
f^m(x^1,\ldots,x^m)\ =\ \frac{1}{m}(f(x^1)+\ldots+f(x^m)).
\end{equation*}
\end{definitionX}

\begin{definitionX}\label{def:FP}
A fractional operation $\omega:\fpol$ is called a \emph{fractional polymorphism}
of the language $\Gamma$, and we say that $\Gamma$ \emph{admits} $\omega$, 
if for every cost function $f \in \Gamma$,
\begin{equation}
\sum_{g\in\calO^{(m)}} \omega(g)f(g(x^1,\ldots,x^m))\ \le\
f^m(x^1,\ldots,x^m)\qquad\forall x^1,\ldots,x^m\in \dom f.
\label{eq:FPdefinition}
\end{equation}
\end{definitionX}

We note that~\eqref{eq:FPdefinition} implies that if $g\in \supp(\omega)$
and $x^1,\ldots,x^m\in \dom f$ then $g(x^1,\ldots,x^m)\in \dom f$.

Definition~\ref{def:FP} is illustrated in
Figure~\ref{fig:FP}, which should be read from left to right. Let $f$ be an
$n$-ary cost function and $\omega:\fpol$ an $m$-ary fractional operation. 
Moreover, let $k=|\calO^{(m)}|$.
Starting with $m$ $n$-tuples $x^1,\ldots,x^m\in\dom f$, which we view as row
vectors in Figure~\ref{fig:FP}, we first apply all $m$-ary operations $g_1,\ldots,g_k$
 to these tuples componentwise, thus obtaining
the $m$-tuples $y^1,\ldots,y^k$. Inequality~\ref{eq:FPdefinition}
amounts to comparing the average of the values of $f$ applied to the tuples
$x^1,\ldots,x^m$ with
the weighted sum of the values of $f$ applied to the tuples
$y^1,\ldots,y^k$, where the weight of the $i$th tuple $y^i$ (obtained from
$g_i$) is the
weight assigned to $g_i$ by $\omega$.
\begin{figure}[hbtp]
\small
\[
\begin{array}{c}
\begin{array}{c}
{x^1}\\
{x^2}\\
\vdots\\
{x^m}\\
\end{array}
\\
\begin{array}{c}
\
\end{array}
\\
\begin{array}{c}
{y^1}=g_1({x^1},\ldots,{x^m})\\
{y^2}=g_2({x^1},\ldots,{x^m})\\
\vdots\\
{y^k}=g_k({x^1},\ldots,{x^m})\\
\end{array}
\end{array}
\begin{array}{c}
\begin{array}{cccccc}
=(x_1^1 & x_2^1 & \ldots & x_n^1) \\
=(x_1^2 & x_2^2 & \ldots & x_n^2) \\
& & \vdots & \\
=(x_1^m & x_2^m & \ldots & x_n^m) \\
\end{array}
\\
\begin{array}{c}
\
\end{array}
\\
\begin{array}{cccccc}
=(y_1^1 & y_2^1 & \ldots & y_n^1) \\
=(y_1^2 & y_2^2 & \ldots & y_n^2) \\
 & & \vdots & \\
=(y_1^k & y_2^k & \ldots & y_n^k) \\
\end{array}
\\
\end{array}
\begin{array}{l}
\stackrel{f}{\longrightarrow}
\left.
\begin{array}{c}
f({x^1})\\
f({x^2})\\
\vdots\\
f({x^m})\\
\end{array}
\right\}\mbox{\normalsize{$\displaystyle \frac{1}{m}\sum_{i=1}^{m}f({x^i})$}}
\\
\begin{array}{c}
\qquad\qquad\qquad\qquad \mbox{\rotatebox{270}{$\geq$}}
\end{array}
\\
\stackrel{f}{\longrightarrow}
\left.
\begin{array}{c}
f({y^1})\\
f({y^2})\\
\vdots\\
f({y^k})\\
\end{array}
\right\}\mbox{\normalsize{$\displaystyle \sum_{i=1}^{k}\omega(g_i)f(y^i)$}}
\\
\end{array}
\]
\caption{Definition of a fractional polymorphism.}\label{fig:FP}
\end{figure}

%


\begin{exampleX}\examplefont
A simple example of an $m$-ary fractional operation is the 
vector $\omega:\fpol$ defined by $\omega(e^{(m)}_i)=1/m$ for all $1\leq i\leq m$ and $\omega(h)=0$ for any
$m$-ary operation $h$ that is not a projection.
It follows from
Definition~\ref{def:FP} that $\omega$ is a fractional polymorphism of every cost function $f$
and in fact~(\ref{eq:FPdefinition}) holds with equality in this case.
\end{exampleX}

\begin{exampleX}\examplefont
Recall
from Example~\ref{ex:sub1}
the fractional operation $\omega$ defined on $D$ that assigns weight $\frac{1}{2}$ to $\min$
and weight $\frac{1}{2}$ to $\max$. 
In this special case, 
(\ref{eq:FPdefinition}) simplifies to
\begin{equation*}
f(\min(x^1,x^2)) + f(\max(x^1,x^2))\ \le\ f(x^1) + f(x^2) \qquad\forall
x^1,x^2\in \dom f.
\end{equation*}

For $D=\{0,1\}$, a cost
function $f$ that satisfies (\ref{eq:FPdefinition}) with this $\omega$ is called
\emph{submodular}~\cite{Schrijver00:submodular}. 
\end{exampleX}

\begin{remarkX}\examplefont\label{rem:prob}
One can equivalently view fractional polymorphisms in a probabilistic setting.
A fractional operation $\omega:\fpol$ is a fractional polymorphism of $\Gamma$
if $\omega$ is a probability distribution over $\calO^{(m)}$,
and every cost function $f\in\Gamma$
satisfies,
\begin{equation*}
\displaystyle\mathbb{E}_{g\sim\omega} f(g(x^1,\ldots,x^m)) \ \le\
f^m(x^1,\ldots,x^m)\qquad\forall x^1,\ldots,x^m\in \dom f.
\label{eq:FPdefinition2} 
\end{equation*}
%
\end{remarkX}



\section{Results}

In this section we will state our main results together with some algorithmic
consequences. The rest of the paper will be devoted to the proofs of the
results.

\subsection{The power of BLP}
Let $\Gamma$ be a language such that the set $\Gamma$ is countable. 
First, we give a precise characterisation of the power of BLP for
(general-valued) languages.


\begin{theoremX}\label{th:BLP:general}
BLP solves $\Gamma$ if and only if $\Gamma$ admits a symmetric fractional
polymorphism of \emph{every} arity $m\ge 2$.
\end{theoremX}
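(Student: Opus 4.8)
The plan is to prove the two implications separately; only the ``only if'' direction requires real work.

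\medskip
\noindent\emph{The ``if'' direction} (symmetric fractional polymorphisms of all arities $\Rightarrow$ BLP solves $\Gamma$). I would argue it directly on a fixed instance $\calI$. The inequality $\BLP(\calI)\le\min_{x}f_\calI(x)$ is immediate since integral assignments are BLP-feasible, and if $\BLP(\calI)=\infty$ then the BLP is infeasible and hence so is $\calI$; so assume there is an optimal BLP solution $\{\mu_t\}_{t\in T},\{\alpha_v\}_{v\in V}$, which may be taken rational since the BLP has rational data (the values $\infty$ merely delete the variables $\mu_t(x)$ with $x\notin\dom f_t$). Choose an integer $N\ge 2$ clearing all denominators, and for each term $t$ write $\mu_t=\frac1N\sum_{i=1}^N\delta_{x^{t,i}}$ with $x^{t,1},\dots,x^{t,N}\in\dom f_t$. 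Let $\omega$ be a symmetric fractional polymorphism of $\Gamma$ of arity $N$. Applying \eqref{eq:FPdefinition} to each $f_t$ on these $N$ labellings (the right-hand side equals $\sum_x\mu_t(x)f_t(x)$) and summing over $t$ gives $\sum_{g}\omega(g)\sum_t f_t(g(x^{t,1},\dots,x^{t,N}))\le\sum_t\sum_x\mu_t(x)f_t(x)=\BLP(\calI)$. The key observation is that the marginal constraints $(\mu_t)_{[k]}=\alpha_{v(t,k)}$ force the multiset $\{x^{t,1}_k,\dots,x^{t,N}_k\}$ to depend only on the variable $v(t,k)$ and not on $(t,k)$; since each $g\in\supp(\omega)$ is symmetric, $g$ applied to that multiset is well defined, so $g$ yields an assignment $z^g\in D^V$ with $z^g_{v(t,k)}=g(x^{t,1}_k,\dots,x^{t,N}_k)$, and then $\sum_t f_t(g(x^{t,1},\dots,x^{t,N}))=f_\calI(z^g)$. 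The left-hand side above is thus a convex combination of values $f_\calI(z^g)\ge\min_x f_\calI(x)$, so $\min_x f_\calI(x)\le\BLP(\calI)$, and equality follows.

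\medskip
\noindent\emph{The ``only if'' direction} (BLP solves $\Gamma$ $\Rightarrow$ symmetric fractional polymorphism of every arity $m\ge 2$). Here I would fix $m\ge 2$, fix a finite $\Gamma'\subseteq\Gamma$, and first show $\Gamma'$ admits a symmetric fractional polymorphism of arity $m$. Being such a polymorphism amounts to a system of finitely many linear inequalities in the variables $\{\omega(g):g\in\calO^{(m)}_{\tt sym}\}$ — one inequality $\sum_g\omega(g)f(g(x^1,\dots,x^m))\le f^m(x^1,\dots,x^m)$ per $f\in\Gamma'$ and per tuple $(x^1,\dots,x^m)\in(\dom f)^m$ — together with $\omega\ge0$ and $\sum_g\omega(g)=1$, where an operation $g$ with $f(g(\vec x))=\infty$ is forbidden to lie in $\supp(\omega)$. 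If this LP were infeasible, the Farkas lemma (over $\Q$, by rationality of the data) would yield nonnegative integers $\lambda_{f,\vec x}$ such that $\sum_{f,\vec x}\lambda_{f,\vec x}f(g(\vec x))>\sum_{f,\vec x}\lambda_{f,\vec x}f^m(\vec x)$ for \emph{every} symmetric $g$ (reading $\infty$ on the left for operations failing to preserve some relevant $\dom f$; if no symmetric operation preserves all of them, I would instead take $\lambda\equiv 1$ below). I would then build the $\Gamma'$-instance $\calI$ whose variables are the size-$m$ multisets over $D$ and which contains, for each $(f,\vec x)$, exactly $\lambda_{f,\vec x}$ copies of a term with cost function $f$ whose $k$-th argument is the variable $\{x^1_k,\dots,x^m_k\}$. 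An integral assignment of $\calI$ corresponds to a symmetric operation $g\in\calO^{(m)}_{\tt sym}$, with cost $\sum_{f,\vec x}\lambda_{f,\vec x}f(g(\vec x))$; whereas putting $\mu_t=\frac1m\sum_i\delta_{x^i}$ on each term, together with the induced uniform marginals $\alpha_{\{a_1,\dots,a_m\}}=\frac1m\sum_i\delta_{a_i}$ on the variables, is BLP-feasible with value $\sum_{f,\vec x}\lambda_{f,\vec x}f^m(\vec x)$. Hence $\BLP(\calI)<\min_x f_\calI(x)$, contradicting that BLP solves $\Gamma\supseteq\Gamma'$. Finally, to pass from finite $\Gamma'$ to $\Gamma$: for the fixed $m$, each single inequality cuts out a compact subset of the finite-dimensional simplex over $\calO^{(m)}_{\tt sym}$ (finite, as $D$ is finite), and the step just completed shows these subsets have the finite-intersection property, so their total intersection is nonempty — a symmetric fractional polymorphism of $\Gamma$ of arity $m$. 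As $m\ge 2$ was arbitrary, this finishes the proof.

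\medskip
\noindent I expect the main obstacle to be getting the instance in the ``only if'' direction exactly right. Two points need care: the variables must be indexed by \emph{multisets}, not tuples, so that integral optima of $\calI$ correspond precisely to \emph{symmetric} operations (with tuple-indexed variables the projections would drop the integral optimum to $\sum_{f,\vec x}\lambda_{f,\vec x}f^m(\vec x)$ and the contradiction evaporates); and the $\{0,\infty\}$-parts of the cost functions must be tracked throughout — both inside the Farkas argument, where one restricts attention to the symmetric operations that preserve all the relevant domains, and in isolating the degenerate case where there are no such operations. Everything else (rationality of the BLP optimum, feasibility of the uniform-marginal solution, the convexity estimate) is routine.
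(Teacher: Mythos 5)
Your proof follows essentially the same approach as the paper's: a direct argument from a rational BLP optimum for the ``if'' direction, and a Farkas-plus-explicit-instance argument (with variables indexed by size-$m$ multisets / points of $\Delta^{(m)}$) for the ``only if'' direction, with a compactness step to pass from finite $\Gamma'$ to $\Gamma$. The ``if'' direction and the compactness step are fine.

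There is, however, a real gap in the ``only if'' direction that the paper addresses explicitly and you do not: the Farkas coefficients $\lambda_{f,\vec{x}}$ must be taken \emph{strictly} positive, not merely nonnegative. Here is why it matters. Your instance $\calI$ contains $\lambda_{f,\vec{x}}$ copies of the term associated to $(f,\vec{x})$, so a pair with $\lambda_{f,\vec{x}}=0$ contributes nothing to $f_\calI$. Consider a symmetric operation $g$ that preserves $\dom f$ for every $(f,\vec{x})$ with $\lambda_{f,\vec{x}}>0$ but violates $\dom f'$ for some $(f',\vec{x}')$ with $\lambda_{f',\vec{x}'}=0$. Such a $g$ lies outside the index set of the Farkas system (it is not one of the ``allowed'' $g$ over which you quantified the primal LP), so Farkas gives you no strict inequality $\sum\lambda f(g(\vec{x}))>\sum\lambda f^m(\vec{x})$ for it; and your parenthetical remark ``reading $\infty$ on the left'' does not help, because $0\cdot\infty=0$ and the $\lambda=0$ terms are absent from $\calI$, so $f_\calI(g)$ is genuinely finite for such $g$. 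Thus $\min_g f_\calI(g)>\BLP(\calI)$ is not established and the contradiction evaporates. The paper closes this by observing that any feasible $y$ can be perturbed to a strictly positive, integer-valued $y'$ (e.g.\ $y'(\cdot)\in[Cy(\cdot),Cy(\cdot)+1]$ for $C$ large enough, since the Farkas inequalities are strict and hence have slack); with $y'>0$ every term appears in $\calI$ and every $g$ outside $\Omega^{(m)}_\Gamma$ gets cost $\infty$. You should insert this perturbation step (or, equivalently, replace $\lambda$ by $\lambda+1$ after clearing denominators and scaling) before building $\calI$. Your handling of the degenerate case where no symmetric operation preserves all domains, via $\lambda\equiv 1$, is exactly this fix restricted to that case — it just needs to be applied uniformly.
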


Second, we give a sufficient condition for the existence of symmetric fractional
polymorphisms. But first we need a standard definition from universal algebra.

A set $\calC$ of operations is called a \emph{clone} if it contains all
projections and is closed under superposition; that is, $\calC$ contains
$e^{(m)}_i$ for all $m\geq 1$ and $1\leq i\leq m$, and if
$h,g_1,\ldots,g_n\in\calC$ then $h[g_1,\ldots,g_n]\in\cal C$, where $h$ is an
$n$-ary operation and $g_1,\ldots,g_n$ are operations of the same
arity.
A set $\calO$ of operations is said to \emph{generate} $g$ if $g$ belongs to the
smallest clone containing $\calO$; in other words, $g$ can be obtained by
superpositions of operations from $\calO$ and projections. 

\begin{exampleX}\label{ex:max}\examplefont
If $\calO$ contains the binary maximum operation $\max:D^2\to D$ that returns
the larger of its two arguments (with respect to some total order on $D$), then
$\calO$ can generate the $m$-ary operation $\max^{(m)}$ that returns the largest of
its $m$ arguments by
$\max^{(m)}(x_1,\ldots,x_m)=\max(x_1,\max(x_2,\ldots,\max(x_{m-1},x_m)\ldots)).$
\end{exampleX}

\begin{theoremX}\label{th:BLP:generate}
Suppose that, for every $n \geq 2$, 
$\Gamma$ admits a fractional polymorphism $\omega_n$
such that $\supp(\omega_n)$ generates a symmetric $n$-ary operation. 
Then, $\Gamma$ admits a symmetric fractional polymorphism of every arity $m \geq 2$.
\end{theoremX}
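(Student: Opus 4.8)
The plan is to argue in three stages. First (1), upgrade the hypothesis to: for every $n\ge2$, $\Gamma$ admits an $n$-ary fractional polymorphism whose support \emph{contains} a symmetric $n$-ary operation. Then (2), for a fixed arity $m$, build a sequence of $m$-ary fractional polymorphisms of $\Gamma$ on which the total weight placed on non-symmetric operations tends to $0$. Finally (3), pass to a limit point, which exists because $\calO^{(m)}$ is finite.

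For stage~(1) I would first record a routine closure fact: if $\omega$ is a $k$-ary fractional polymorphism of $\Gamma$ and $\tau_1,\dots,\tau_k$ are $n$-ary fractional polymorphisms of $\Gamma$, then the averaged superposition $\sum_{t_1,\dots,t_k}\tau_1(t_1)\cdots\tau_k(t_k)\cdot\omega[t_1,\dots,t_k]$ is an $n$-ary fractional polymorphism of $\Gamma$ whose support contains $g[t_1,\dots,t_k]$ for every $g\in\supp(\omega)$ and every $t_i\in\supp(\tau_i)$; both statements come from the same one-line averaging calculation behind Figure~\ref{fig:FP}, using that the support of a fractional polymorphism preserves $\dom f$ for every $f\in\Gamma$ and that $\|\cdot\|_1=1$ is preserved (only the marginals $\tau_i$ enter the calculation). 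Since each projection $e^{(n)}_j$ lies in the support of the fractional polymorphism $\frac1n\sum_{i=1}^n\chi_{e^{(n)}_i}$ (admitted by every language), induction on the term witnessing that $\supp(\omega_n)$ generates the symmetric $n$-ary operation $s_n$ yields, for every $n\ge2$, an $n$-ary fractional polymorphism $\omega_n'$ of $\Gamma$ with $s_n\in\supp(\omega_n')$. (Equivalently: the set of operations occurring in supports of fractional polymorphisms of $\Gamma$ is a clone and contains a symmetric operation of each arity.)

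For stage~(2), fix $m\ge2$, put $R=m!$, fix an $R$-ary fractional polymorphism $\omega_R'$ of $\Gamma$ with a symmetric $R$-ary operation $s_R\in\supp(\omega_R')$, and set $\sigma=\omega_R'(s_R)>0$. I would inductively construct $m$-ary fractional polymorphisms $\bar\Omega_k$ of $\Gamma$ that are invariant under permuting arguments, starting from $\bar\Omega_0=\frac1{m!}\sum_{\pi\in S_m}\omega_m'[e^{(m)}_{\pi(1)},\dots,e^{(m)}_{\pi(m)}]$. Given such $\bar\Omega_k$, its (finite) support splits into $S_m$-orbits $O_1,\dots,O_t$ of sizes $r_j$ (each dividing $m!$) on which $\bar\Omega_k$ is constant; couple $R$ copies of $\bar\Omega_k$ by choosing an orbit $O_j$ with probability equal to its $\bar\Omega_k$-mass and then placing its $r_j$ members, each repeated $R/r_j$ times, into the $R$ coordinates uniformly at random. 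Each coordinate then has marginal $\bar\Omega_k$, so by the averaging calculation of stage~(1) the fractional operation $\Omega_{k+1}$ obtained by averaging $\omega_R'[h_1,\dots,h_R]$ over $(h_1,\dots,h_R)$ drawn from this coupling is an $m$-ary fractional polymorphism of $\Gamma$; let $\bar\Omega_{k+1}$ be its symmetrisation over $S_m$ (still a fractional polymorphism, and permutation-invariant). The gain comes from the coupling: in every outcome the multiset $\{h_1,\dots,h_R\}$ is a union of full $S_m$-orbits and hence invariant under permuting the arguments of the $h_i$, so $h'[h_1,\dots,h_R]$ is symmetric whenever $h'$ is symmetric; and in the event that the chosen orbit is a symmetric singleton --- which has probability exactly $p_k:=\sum_{g\in\calO^{(m)}_{\tt sym}}\bar\Omega_k(g)$, since all coordinates then equal one symmetric operation --- the superposition $h'[h_1,\dots,h_R]$ is symmetric for \emph{every} $h'$. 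Hence $\Omega_{k+1}$, and therefore $\bar\Omega_{k+1}$, places weight at least $p_k+(1-p_k)\sigma$ on symmetric operations, i.e.\ $1-p_{k+1}\le(1-\sigma)(1-p_k)$; as $\sigma>0$ is a fixed constant, $1-p_k\le(1-\sigma)^k(1-p_0)\to0$.

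For stage~(3): $m$-ary fractional operations form a compact, finite-dimensional simplex (as $\calO^{(m)}$ is finite), so $(\bar\Omega_k)$ has a convergent subsequence with limit $\omega^{\ast}$. Each inequality defining ``fractional polymorphism of $\Gamma$'' is closed and, for a given $f$, involves only coordinates at operations preserving $\dom f$, at which all $\bar\Omega_k$ and hence $\omega^{\ast}$ vanish; so $\omega^{\ast}$ is an $m$-ary fractional polymorphism of $\Gamma$, and since the weight on non-symmetric operations is continuous and tends to $0$, $\supp(\omega^{\ast})\subseteq\calO^{(m)}_{\tt sym}$. Thus $\omega^{\ast}$ is a symmetric $m$-ary fractional polymorphism, and $m\ge2$ was arbitrary. (For a fixed $m$ only the cases $n=m$ and $n=m!$ of the hypothesis are used.) I expect the technical heart, and the step likeliest to conceal a subtlety, to be the coupling in stage~(2): the $R$ copies of $\bar\Omega_k$ must be correlated just enough that the bottom row is always $S_m$-orbit-closed --- so that a single symmetric $h'$ at the top drags the whole superposition into $\calO^{(m)}_{\tt sym}$ --- while every coordinate still has exactly the marginal $\bar\Omega_k$, so that $\Omega_{k+1}$ remains a fractional polymorphism. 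Permutation-invariance of $\bar\Omega_k$ (re-established by symmetrising at each step) is precisely what makes such a coupling possible, and the uniform choice $R=m!$ lets a single $\omega_R'$ drive every iteration.
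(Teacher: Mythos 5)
Your proof is correct, but it is a genuinely different route from the paper's. The paper proves this theorem as an application of its Expansion Lemma (Lemma~\ref{lemma:expansion}): it takes $\mathbb G$ to be the set of $S_m$-orbits of $\calO^{(m)}$ (unordered collections), $\mathbb G^\ast$ the singleton orbits, and for an orbit ${\bf g}$ of size $n$ it builds an expansion operator by iteratively mixing in superpositions with the $\omega_n$ from the hypothesis, using the depth $d$ at which the generated symmetric $n$-ary operation $t$ first appears to guarantee that $t[g_1,\dots,g_n]$ (symmetric and $m$-ary) lands in the support. You bypass the Expansion Lemma entirely. Stage~(1), the clone-closure of ``operations appearing in supports of fractional polymorphisms'', does not appear in the paper's proof of this theorem (it builds the needed superpositions inside the expansion operator instead), and it lets you replace ``generates'' by ``contains'', which the paper never needs to do. Stage~(2) is the main divergence: where the paper manipulates one orbit at a time, you fix the single arity $R=m!$, work with permutation-invariant (weight-symmetric) $m$-ary fractional polymorphisms $\bar\Omega_k$, and exploit a coupling whose bottom row is always a full union of $S_m$-orbits, so that a symmetric $h'$ at the top always yields a symmetric superposition; this gives the geometric contraction $1-p_{k+1}\le(1-\sigma)(1-p_k)$ directly. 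Stage~(3) is the same compactness step as the non-constructive proof of the Expansion Lemma. The trade-off: the paper's argument (with the constructive Expansion Lemma) yields a finite construction and reuses machinery developed for Lemma~\ref{lemma:idem} and Theorem~\ref{thm:finite:gen}; yours is more self-contained and, as you note, invokes the hypothesis only at $n=m$ and $n=m!$, which is a nice sharpening not recorded in the paper. One small precision worth adding if you write this up: the ``closure fact'' in Stage~(1) should be stated for an arbitrary coupling $\lambda$ on $(\calO^{(n)})^k$ whose $k$ marginals are fractional polymorphisms (not just product couplings), since that is exactly the form you invoke for $\bar\Omega_0$ (deterministic projections coupled through a random $\pi$) and for $\Omega_{k+1}$; as you observe, only the marginals enter the averaging computation, and the support condition on each marginal is what guarantees $t_j(x^1,\dots,x^n)\in\dom f$ so that the inner $\omega$-inequality applies.
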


The following is an immediate consequence of Theorems~\ref{th:BLP:general}
and~\ref{th:BLP:generate}.

\begin{corollaryX}
BLP solves $\Gamma$ if and only if for every $n\ge 2$, $\Gamma$ admits a
fractional polymorphism $\omega_n$ such that $\supp(\omega_n)$ generates a
symmetric $n$-ary operation.
\end{corollaryX}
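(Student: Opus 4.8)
The plan is simply to chain Theorem~\ref{th:BLP:general} with Theorem~\ref{th:BLP:generate}; since the corollary is billed as an immediate consequence, there is essentially no new work beyond one trivial observation in the forward direction.

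For the forward implication I would assume that BLP solves $\Gamma$. Theorem~\ref{th:BLP:general} then supplies, for each arity $m\ge 2$, a symmetric fractional polymorphism $\omega_m$ of $\Gamma$. Fixing $n\ge 2$ and taking this $\omega_n$, I would note that $\supp(\omega_n)$ is nonempty (because $\|\omega_n\|_1=1$) and consists entirely of symmetric $n$-ary operations; picking any $g\in\supp(\omega_n)$, the identity $g=g[e^{(n)}_1,\dots,e^{(n)}_n]$ shows that $g$ lies in the clone generated by $\supp(\omega_n)$. Hence $\supp(\omega_n)$ generates a symmetric $n$-ary operation, which is exactly the required condition.

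For the reverse implication I would assume that for every $n\ge 2$ the language $\Gamma$ admits a fractional polymorphism $\omega_n$ whose support generates a symmetric $n$-ary operation. Theorem~\ref{th:BLP:generate} then gives that $\Gamma$ admits a symmetric fractional polymorphism of every arity $m\ge 2$, and Theorem~\ref{th:BLP:general} converts this conclusion into the statement that BLP solves $\Gamma$. Combining the two implications yields the claimed equivalence.

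There is no real obstacle here: the only point requiring a moment's care is the wording of the forward direction, namely that a symmetric fractional operation automatically satisfies the generation condition because every operation generates itself and the support contains only symmetric $n$-ary operations. Everything else is a direct citation of the two theorems.
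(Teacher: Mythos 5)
Your proposal is correct and follows exactly the route the paper intends: the corollary is stated there as an immediate consequence of Theorems~\ref{th:BLP:general} and~\ref{th:BLP:generate}, and your argument is precisely that chaining, with the one small (and correct) observation that a symmetric fractional polymorphism of arity $n$ automatically satisfies the generation hypothesis because its nonempty support contains a symmetric $n$-ary operation, which generates itself.
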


Finally, we give, in Theorem~\ref{thm:tfae}, a more refined characterisation of the power of BLP for
finite-valued languages. 
It is based on the following result.

\begin{theoremX}\label{thm:finite:gen}
Suppose that a finite-valued language $\Gamma$ admits
a symmetric fractional polymorphism of arity $m-1\ge 2$. Then
$\Gamma$ admits a symmetric fractional polymorphism of arity $m$.
\end{theoremX}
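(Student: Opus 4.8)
The plan is to build the required $m$-ary symmetric fractional polymorphism from the given $(m-1)$-ary one, which I will call $\omega$, by an iterative ``symmetrisation'' procedure. Since tractability concerns only finite subsets, I may assume $\Gamma$ is finite. Then the set $\Omega$ of all $m$-ary fractional polymorphisms of $\Gamma$ is a non-empty (it contains the uniform combination of the projections $e^{(m)}_1,\dots,e^{(m)}_m$) compact convex polytope, and the goal becomes: show that $\Omega$ contains an element whose support lies in $\calO^{(m)}_{\tt sym}$.

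A convenient starting point inside $\Omega$ is the \emph{leave-one-out} fractional operation $L$, which assigns, for each $i\in\{1,\dots,m\}$ and each $h\in\supp(\omega)$, weight $\tfrac1m\,\omega(h)$ to the $m$-ary operation $(x^1,\dots,x^m)\mapsto h(x^{j_1},\dots,x^{j_{m-1}})$ with $\{j_1,\dots,j_{m-1}\}=\{1,\dots,m\}\setminus\{i\}$ (well defined since $\omega$ is symmetric). That $L\in\Omega$ follows by averaging the defining inequality~\eqref{eq:FPdefinition} for $\omega$ over the $m$ choices of $i$ and using $\sum_i \tfrac{1}{m-1}\sum_{j\ne i}f(x^j)=\sum_j f(x^j)$. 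Note however that $L$ is only symmetric in the weaker sense of being invariant under the natural $S_m$-action on $\calO^{(m)}$; its support consists of coordinate-ignoring operations, so it is not yet of the desired form.

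Next I would collect the closure properties of $\Omega$ that the iteration will exploit: $\Omega$ is closed under convex combinations and under the $S_m$-action, and — crucially — it is closed under superposition with $\omega$, in the sense that if $\tau_1,\dots,\tau_{m-1}\in\Omega$ then the $m$-ary fractional operation obtained by sampling $h\sim\omega$ and $g_1,\dots,g_{m-1}$ independently from $\tau_1,\dots,\tau_{m-1}$ and outputting $\bar x\mapsto h(g_1(\bar x),\dots,g_{m-1}(\bar x))$ again lies in $\Omega$. This is proved by conditioning on the $g_j$'s: each $g_j(\bar x)$ is a legal argument of every $f\in\Gamma$ (as noted after Definition~\ref{def:FP}), so one applies the inequality for $\omega$ and then averages the inequalities for the $\tau_j$'s.

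The heart of the argument — and the step I expect to be the main obstacle — is to combine $L$ with these closure operations so as to reach an element of $\Omega$ whose support is genuinely in $\calO^{(m)}_{\tt sym}$, rather than merely $S_m$-invariant as a fractional operation. Here I would set up the ``graph of generalised operations'': iterate the leave-one-out/compose-with-$\omega$ step, viewing it as a finite-state Markov chain on $m$-tuples in $(D^n)^m$ whose transitions are $S_m$-equivariant and along which the potential $\mathbb{E}\big[\tfrac1m\sum_{i}f(y^i)\big]$ is non-increasing for every $f\in\Gamma$. The difficulty is to argue that a suitable limiting (stationary, or recurrent) configuration of this chain — obtained by compactness together with the monotone potential, and here the finite-valuedness of $\Gamma$ is essential so that these potentials stay finite and the configuration space is fully accessible — defines a fractional operation in $\Omega$ that is not just $S_m$-invariant but supported on symmetric operations, i.e.\ that the iteration does not merely \emph{approach} symmetry but that finiteness of $D$ and of $\Gamma$ forces it to be \emph{attained}. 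An alternative ending would be to show the iteration stabilises at a fractional polymorphism whose support \emph{generates} a symmetric $m$-ary operation and then feed this into the mechanism behind Theorem~\ref{th:BLP:generate}; in either case Theorem~\ref{thm:finite:gen} follows.
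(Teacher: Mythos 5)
Your setup is on the right track and matches the paper's opening moves: the leave--one--out fractional operation $L$ is exactly the paper's starting generalised fractional polymorphism $\mathds{1}$ pushed through one application of $\omega$, and the closure properties of $\Omega$ under convex combinations, the $S_m$-action, and superposition with $\omega$ are precisely the tools the paper uses (encapsulated in the Expansion Lemma, Lemma~\ref{lemma:expansion}). You also correctly locate the obstacle: the iteration readily produces fractional operations that are $S_m$-invariant as distributions but not supported on $\calO^{(m)}_{\tt sym}$, and Example~\ref{example:STPcycle} shows this cannot be overcome without using finite-valuedness.

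The gap is that the paragraph you flag as ``the heart of the argument'' never becomes an argument. The compactness/Markov-chain/monotone-potential heuristic does not show that symmetry is \emph{attained}: a limit of $S_m$-invariant distributions is again just $S_m$-invariant, and nothing in that outline forces the support to consist of symmetric operations. The paper's route is quite different and more structural. It works not with $m$-ary fractional operations but with \emph{ordered collections} $\mathbf{g}\in\calO^{(m\to m)}$, builds the directed graph $(\mathbb G,E)$ whose edges are $\mathbf{g}\mapsto\mathbf{g}^s$ for $s\in\supp(\omega)$, and uses the Expansion Lemma to land on a generalised fractional polymorphism supported on the sink strongly connected components. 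Then it applies Farkas' lemma twice (Lemma~\ref{lemma:lambdaWeights}) to produce non-negative weight vectors $\lambda$; taking the corresponding linear combinations of the $\omega$-inequalities yields \emph{both} directions of an inequality between $f^m$-values, and this is exactly where finite-valuedness is used — one needs to subtract finite quantities across the inequality to turn it into an equality (Lemma~\ref{th:cyclicPol} and the remark after its restatement). From these equalities one deduces (Lemma~\ref{th:permutationInvariance}) that on $Range_n(\mathbf{g}^\ast)$ the quantity $f^m$ is invariant under any coordinate-wise permutation, so post-composing each sink collection with a fixed sorting map $\mathbf{p}$ gives the desired symmetric fractional polymorphism. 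None of this appears in your proposal.

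Your ``alternative ending'' is also not available. Theorem~\ref{th:BLP:generate} needs, \emph{for every $n\ge 2$}, a fractional polymorphism $\omega_n$ whose support generates a symmetric $n$-ary operation (and its proof in fact invokes this hypothesis for various $n$ equal to sizes of $\sim$-equivalence classes). From a single symmetric fractional polymorphism of arity $m-1$ you do not get this for $n=m$ in general: a commutative operation need not generate any symmetric ternary operation unless it is also associative, and establishing that some fractional polymorphism of $\Gamma$ has support generating a symmetric $m$-ary operation is essentially the same difficulty as the theorem itself, not a shortcut around it.
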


\begin{theoremX} \label{thm:tfae}
Suppose that $\Gamma$ is finite-valued. The following are equivalent:
\begin{enumerate}
\item
BLP solves $\Gamma$;
\item
$\Gamma$ admits a symmetric fractional polymorphism of \emph{every} arity $m \geq 2$;
\item
$\Gamma$ admits a symmetric fractional polymorphism of \emph{some} arity $m \geq 2$;
\item
$\Gamma$ admits a symmetric fractional polymorphism of \emph{arity 2};
\item
For every $n \geq 2$, $\Gamma$ admits a fractional polymorphism $\omega_n$ such that
$\supp(\omega_n)$ generates a symmetric $n$-ary operation.
\end{enumerate}
\end{theoremX}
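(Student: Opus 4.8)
The plan is to derive all five equivalences from the three substantive results already stated — Theorems~\ref{th:BLP:general}, \ref{th:BLP:generate}, and~\ref{thm:finite:gen} — together with two elementary manipulations of fractional polymorphisms. I would organise it as $(1)\Leftrightarrow(2)$, $(2)\Leftrightarrow(5)$, and a cycle $(2)\Rightarrow(3)\Rightarrow(4)\Rightarrow(2)$.

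The first two are quick. $(1)\Leftrightarrow(2)$ is literally Theorem~\ref{th:BLP:general} (and does not use finite-valuedness). $(5)\Rightarrow(2)$ is literally Theorem~\ref{th:BLP:generate}. For $(2)\Rightarrow(5)$: given a symmetric fractional polymorphism $\omega_n$ of arity $n$, any $g\in\supp(\omega_n)$ (the support is nonempty since $\|\omega_n\|_1=1$) is a symmetric $n$-ary operation lying in the clone generated by $\supp(\omega_n)$, so $\supp(\omega_n)$ generates a symmetric $n$-ary operation and $\omega_n$ witnesses~(5). Also $(2)\Rightarrow(3)$ is trivial, and $(4)\Rightarrow(2)$ is an induction on arity via Theorem~\ref{thm:finite:gen}: arity $2$ yields arity $3$ (take $m=3$), then arity $4$, and so on, giving a symmetric fractional polymorphism of every arity $m\ge2$ — this is the step that uses that $\Gamma$ is finite-valued.

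The one new ingredient is $(3)\Rightarrow(4)$: convert a symmetric fractional polymorphism $\omega$ of arity $m\ge2$ into one of arity $2$. If $m$ is odd (so $m\ge3$), first apply Theorem~\ref{thm:finite:gen} to pass to a symmetric fractional polymorphism $\widetilde\omega$ of the even arity $m+1$; otherwise set $\widetilde\omega=\omega$. Writing the even arity as $2\ell$, form the superposition $\omega'=\widetilde\omega[\,e^{(2)}_1,\dots,e^{(2)}_1,e^{(2)}_2,\dots,e^{(2)}_2\,]$ with $\ell$ copies of $e^{(2)}_1$ followed by $\ell$ copies of $e^{(2)}_2$; this pushes each $g\in\supp(\widetilde\omega)$ to the binary operation $(x,y)\mapsto g(x,\dots,x,y,\dots,y)$ with $\ell$ entries $x$ and $\ell$ entries $y$. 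Such an operation is symmetric because a permutation of the arguments of $g$ swaps the two equal-size blocks, so $\omega'$ is symmetric. And $\omega'$ is a fractional polymorphism: for $f\in\Gamma$ and $x^1,x^2\in\dom f$, feeding the $2\ell$ tuples $x^1,\dots,x^1,x^2,\dots,x^2$ ($\ell$ copies of each) into the defining inequality of $\widetilde\omega$ yields $\sum_h\omega'(h)f(h(x^1,x^2))\le\frac1{2\ell}(\ell f(x^1)+\ell f(x^2))=\tfrac12(f(x^1)+f(x^2))$, which is exactly~\eqref{eq:FPdefinition} for $\omega'$ (note the tuples fed in all lie in $\dom f$).

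At the level of this theorem the only real obstacle is $(3)\Rightarrow(4)$, and within it the odd case: a naive restriction of an odd-arity symmetric fractional polymorphism to arity $2$ produces the unbalanced average $\frac{\ell+1}{2\ell+1}f(x^1)+\frac{\ell}{2\ell+1}f(x^2)$ and need not satisfy~\eqref{eq:FPdefinition}, which is why Theorem~\ref{thm:finite:gen} (and hence finite-valuedness) is invoked to fix the parity before restricting. All the genuine difficulty is contained in Theorems~\ref{th:BLP:general}, \ref{th:BLP:generate}, and~\ref{thm:finite:gen}, which we take as given; the remainder is bookkeeping plus the short computation above.
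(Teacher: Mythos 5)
Your proof is correct and follows essentially the same route as the paper: the same three theorems do the work, $(3)\Rightarrow(4)$ is handled identically by first using Theorem~\ref{thm:finite:gen} to reach even arity and then superposing with binary projections (your blocked arrangement of $\ell$ copies of $e^{(2)}_1$ then $e^{(2)}_2$ is equivalent to the paper's alternating $e^{(2)}_1,e^{(2)}_2,\dots$ because the operations in the support are symmetric), and the remaining implications are dispatched the same way.
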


\begin{proof}
The equivalence between statements (1) and (2) is a special case of Theorem~\ref{th:BLP:general}.
The implications $(2) \implies (4) \implies (3)$ and $(2) \implies (5)$ are trivial.
The implication $(4) \implies (2)$ follows from Theorem~\ref{thm:finite:gen}.
Assume $(3)$. By Theorem~\ref{thm:finite:gen}, we may assume that $m$ is even. 
Let $e^{(2)}_1$ and $e^{(2)}_2$ be the two binary projections on the domain of $\Gamma$.
Then $\omega[e^{(2)}_1,e^{(2)}_2,\dots,e^{(2)}_1,e^{(2)}_2]$ is a binary symmetric fractional polymorphism of $\Gamma$, so (4) follows.
Finally, the implication $(5) \implies (3)$ follows from Theorem~\ref{th:BLP:generate}.
\end{proof}

Note that the finite-valuedness assumption in Theorems~\ref{thm:finite:gen} and~\ref{thm:tfae}
is essential: for general-valued languages these theorems do not hold as the
following example demonstrates.

\begin{exampleX}\label{example:STPcycle}\examplefont
Let $D=\{a,b,c\}$ and consider the binary operation $g:D^2\to D$ defined by
$g(x,x)=x$ for $x\in D$,
$g(a,b)=g(b,a)=b$, $g(b,c)=g(c,b)=c$, and $g(a,c)=g(c,a)=a$ ($g$ corresponds to
the oriented cycle $a\to b\to c\to a$). 
Note that $g$ is symmetric. Moreover, $g$ is also conservative, that is,
$g(x,y)\in\{x,y\}$ for all $x,y\in D$. Any operation that is symmetric and
conservative is called a tournament operation~\cite{Cohen08:Generalising}.
Consider the fractional operation $\omega$ defined by $\omega(g)=1$. It is
known that any general-valued constraint language 
admitting $\omega$ is tractable~\cite{Cohen08:Generalising} ($\omega$ is called
a tournament pair in~\cite{Cohen08:Generalising}).
%

Let $f:D^2\to\Qnnc$ be the following binary cost function: $f(x,y)=0$ if
$(x,y)\in\{(a,b),(b,c),(c,a)\}$ and $f(x,y)=\infty$ otherwise. Let
$\Gamma=\{f\}$. It can be verified that $\Gamma$ admits $\omega$ as a fractional
polymorphism and thus is tractable.

We now show, however, that $\Gamma$ does not admit any ternary symmetric
fractional polymorphism. Let $h:D^3\to D$ be an arbitrary ternary symmetric
operation. Since $\dom f=\{(a,b),(b,c),(c,a)\}$, we have that $h$ applied to the
tuples $(a,b)$, $(b,c)$, and $(c,a)$ componentwise gives a tuple $(x,x)$ for
some $x\in D$ but $(x,x)\not\in\dom f$ for any $x\in D$. Thus no ternary
fractional polymorphism of $\Gamma$ can have a symmetric ternary operation in
its support. By Theorem~\ref{th:BLP:general}, BLP does \emph{not} solve
$\Gamma$.
\end{exampleX}

\subsection{Examples of languages solved by BLP} 
\label{subsec:examples}

We now give examples of languages that are solved by BLP.
In some cases, the tractability of these languages was known before,
while in others, we present here the first proof of their tractability.

A binary operation $g:D^2\to D$ is \emph{idempotent} if $g(x,x)=x$ for all $x\in D$,
\emph{commutative} if $g(x,y)=g(y,x)$ for all $x,y\in D$, and \emph{associative} if
$g(x,g(y,z))=g(g(x,y),z))$ for all $x,y,z\in D$. A binary operation $g:D^2\to D$
is a \emph{semilattice operation} if $g$ is idempotent, commutative, and
associative.
The $\max$-operation of Example~\ref{ex:max} is an example of a semilattice operation.
In the same way that $\max$ generates $\max^{(m)}$, $m \geq 2$,
every semilattice operation $g:D^2\to D$
generates symmetric operations of all arities.
In particular, a symmetric
operation $g^{(m)}:D^m\to D$ can be obtained from $g$ by
\[
g^{(m)}(x_1,\ldots,x_m) = g(x_1,g(x_2,\ldots,g(x_{m-1},x_m)\ldots)).
\]
Consequently, we obtain the following result.

\begin{corollaryX}[of Theorem~\ref{th:BLP:general} and Theorem~\ref{th:BLP:generate}]\label{cor:tractability}
If $\Gamma$ admits a fractional polymorphism with
a semilattice operation in its support, then BLP solves $\Gamma$.
\end{corollaryX}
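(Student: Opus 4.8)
The plan is to combine the two main theorems already proved above, exploiting the fact recalled immediately before the corollary that a single semilattice operation already generates symmetric operations of \emph{all} arities. So suppose $\Gamma$ admits a fractional polymorphism $\omega:\calO^{(2)}\to\Rnn$ whose support $\supp(\omega)$ contains a semilattice operation $g:D^2\to D$ (the arity of $\omega$ itself is immaterial; for concreteness take it binary).

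First I would check the hypothesis of Theorem~\ref{th:BLP:generate}. Fix $n\ge 2$ and take $\omega_n=\omega$. Since $g\in\supp(\omega_n)$, the smallest clone containing $\supp(\omega_n)$ contains the smallest clone containing $\{g\}$, and the latter contains the $n$-ary operation
\[
g^{(n)}(x_1,\ldots,x_n)=g(x_1,g(x_2,\ldots,g(x_{n-1},x_n)\ldots)).
\]
Because $g$ is idempotent, commutative, and associative, $g^{(n)}$ is invariant under every permutation of its arguments, i.e.\ it is symmetric; this is the standard fact that a commutative associative idempotent binary operation, iterated, yields a well-defined fold over any finite multiset of labels. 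Hence, for every $n\ge 2$, $\supp(\omega_n)$ generates a symmetric $n$-ary operation, so the hypothesis of Theorem~\ref{th:BLP:generate} is met — with the \emph{same} $\omega$ serving as $\omega_n$ for all $n$. Applying Theorem~\ref{th:BLP:generate} then gives that $\Gamma$ admits a symmetric fractional polymorphism of every arity $m\ge 2$, and applying Theorem~\ref{th:BLP:general} converts this into the statement that BLP solves $\Gamma$.

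There is essentially no obstacle: the corollary is a formal consequence of Theorems~\ref{th:BLP:general} and~\ref{th:BLP:generate}, and the only point that deserves to be spelled out is why one fractional polymorphism with a semilattice operation in its support simultaneously witnesses the hypothesis of Theorem~\ref{th:BLP:generate} for all arities $n$ — namely, that the clone generated by a single semilattice operation already contains a symmetric operation of each arity, which is precisely the $g^{(m)}$ construction recalled in the paragraph preceding the corollary.
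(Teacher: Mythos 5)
Your proof is correct and takes the same route as the paper: a semilattice operation in $\supp(\omega)$ generates a symmetric $n$-ary operation $g^{(n)}$ for every $n$, so the single fractional polymorphism $\omega$ serves as $\omega_n$ for all $n$ in Theorem~\ref{th:BLP:generate}, and Theorem~\ref{th:BLP:general} then yields the conclusion. (One small quibble: since a semilattice operation is by definition binary, $\omega$ is necessarily a binary fractional polymorphism rather than ``immaterial arity,'' but this does not affect the argument.)
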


Most previously identified tractable languages have been defined via \emph{binary
multimorphisms}, which are a special case of binary fractional
polymorphisms~\cite{Cohen06:complexitysoft}.
A binary multimorphism $\mm{g_1}{g_2}$ of a language $\Gamma$ is a binary
fractional polymorphism $\omega$ of $\Gamma$ such that
$\omega(g_1)=\omega(g_2)=1/2$, where $g_1,g_2:D^2\rightarrow D$. For a binary
multimorphism $\mm{g_1}{g_2}$, the fractional polymorphism
inequality~(\ref{eq:FPdefinition}) simplifies to, 
%
\[
f(g_1(x^1,x^2)) + f(g_2(x^1,x^2))\ \leq\ f(x^1) + f(x^2)\qquad \forall f \in \Gamma, x^1,x^2 \in \dom f.
\]

With the exception of skew bisubmodularity, the languages discussed below are
all defined by binary multimorphisms.

\medskip


\subsection*{Submodularity on a lattice}
Let $(D; \meet, \join)$ be an \emph{arbitrary} lattice on $D$, where $\meet$ and
$\join$ are the meet and join operations, respectively. Let $\Gamma$ be a
language admitting the multimorphism $\mm{\meet}{\join}$; such languages are
called \emph{submodular} on the lattice $(D;\meet,\join)$. 
The operations $\meet$ and $\join$ of any lattice are semilattice operations,
hence
Corollary~\ref{cor:tractability} shows that BLP solves $\Gamma$. The
tractability of submodular languages was previously known only for distributive
lattices~\cite{Schrijver00:submodular,Iwata01:submodular}.
Moreover, several
tractability-preserving operations on lattices have been identified 
in~\cite{Krokhin08:max}.
Finally, it is known that VCSP instances over submodular languages on diamonds
can be minimised in pseudopolynomial time~\cite{Kuivinen11:do-diamonds}.

\subsection*{Symmetric tournament pair}
A binary operation $g:D^2\to D$ is
conservative if $g(x,y)\in\{x,y\}$ for all $x,y\in D$. A binary operation
$g:D^2\to D$ is a tournament operation if $g$ is commutative and conservative.
The dual of a tournament operation $g$ is the unique tournament operation $g'$
satisfying $g(x,y)\neq g'(x,y)$ for all $x\neq y$. The multimorphism
$\mm{g_1}{g_2}$ is a \emph{symmetric tournament pair} (STP) if both $g_1$ and
$g_2$ are tournament operations and $g_2$ is the dual of
$g_1$~\cite{Cohen08:Generalising}.
If $\Gamma$ is a finite-valued language with an STP multimorphism
$\mm{g_1}{g_2}$ then $\Gamma$ also admits a submodularity multimorphism
discussed above. This result is implicitly contained
in~\cite{Cohen08:Generalising} and a full proof is given in
Appendix~\ref{sec:STP}. Consequently, BLP solves $\Gamma$ by
Corollary~\ref{cor:tractability}. This also follows from Theorem~\ref{thm:tfae}.

\subsection*{$k$-Submodularity}
Let $D=\{0,1,\ldots,k\}$ and let $\Gamma$ be a language defined on $D$ that
admits the multimorphism $\mm{\min_0}{\max_0}$~\cite{Cohen06:complexitysoft},
where $\min_0(x,x)=x$ for all $x\in D$ and $\min_0(x,y)=0$ for all $x,y\in D,
x\neq y$; $\max_0(x,y)=0$ if $0\neq x\neq y\neq 0$ and $\max_0(x,y)=\max(x,y)$
otherwise, where $\max$ returns the larger of its two arguments with respect to
the normal order of integers; such languages are known as
\emph{$k$-submodular}~\cite{Huber12:ksub}.
Since $\min_0$ is a semilattice operation, BLP solves $\Gamma$ by
Corollary~\ref{cor:tractability}. 
The tractability of $k$-submodular languages was previously open for
$k>2$~\cite{Huber12:ksub}.

Applications of $k$-submodular functions can be found in~\cite{Gridchyn:ICCV13,Wahlstroem:SODA14}.

\subsection*{Bisubmodularity}
The special case of $k$-submodularity for $k=2$ is known as
\emph{bisubmodularity}. The tractability of (finite-valued) bisubmodular
languages was previously known only using a general algorithm for minimising
bisubmodular set
functions~\cite{Fujishige06:bisubmodular,McCormick10:bisubmodular}. 

\subsection*{Skew bisubmodularity}
Let $D=\{0,1,2\}$ with the partial order satisfying $0<1$ and $0<2$. Recall the definition of
the operations $\min_0$ and $\max_0$ from the description of $k$-submodularity
above. We define
$\max_1(x,y)=1$ if $0\neq x\neq y\neq 0$ and $\max_1(x,y)=\max(x,y)$ otherwise,
where $\max$ returns the larger of its two arguments with respect to the normal
order of integers.
A language $\Gamma$ defined on $D$ is called \emph{$\alpha$-bisubmodular}, for some
real $0<\alpha\leq 1$, if $\Gamma$ admits a fractional polymorphism $\omega$
defined by $\omega(\min_0)=1/2$, $\omega(\max_0)=\alpha/2$, and
$\omega(\max_1)=(1-\alpha)/2$. 
(Note that $1$-bisubmodular languages are bisubmodular languages 
discussed above.)
A language that is $\alpha$-bisubmodular for some $\alpha$ is called
\emph{skew bisubmodular}.
Since $\min_0$ is a semilattice operation, BLP solves $\Gamma$ by Corollary~\ref{cor:tractability}.
The tractability of skew bisubmodular languages was first observed
in~\cite{hkp14:sicomp} using an extended abstract of this paper~\cite{tz12:focs}.
%

\subsection*{Strong tree-submodularity}
Assume that the labels in the domain $D$ are arranged into a tree $T$. The tree
induces a partial order: $a\preceq b$ if $a$ is an ancestor of $b$, that is, if
$a$ lies on the unique path from $b$ to the root of $T$. 
Given
$a,b\in T$, let $P_{ab}$ denote the unique path in $T$ between $a$ and $b$ of
length (=number of edges) $d(a,b)$, and let $P_{ab}[i]$ denote the $i$-th vertex
on $P_{ab}$, where $0\leq i\leq d(a,b)$ and $P_{ab}[0]=a$. Let $\tuple{g_1,g_2}$
be two binary commutative operations defined as follows: given $a$ and $b$,
let $a_1=P_{ab}[\lfloor d/2\rfloor]$ and $a_2=P_{ab}[\lceil d/2\rceil]$. If
$a_2\preceq a_1$ then swap $a_1$ and $a_2$ so that $a_1\preceq a_2$. Finally,
$g_1(a,b)=g_1(b,a)=a_1$ and $g_2(a,b)=g_2(b,a)=a_2$.
Let $\Gamma$ be a language admitting the multimorphism $\tuple{g_1,g_2}$; such
languages are called \emph{strongly tree-submodular}. Since $g_1$ is a
semilattice operation, BLP solves $\Gamma$ by
Corollary~\ref{cor:tractability}. The tractability of finite-valued strongly
tree-submodular languages on binary trees has been shown
in~\cite{Kolmogorov11:mfcs} but the tractability of strongly tree-submodular
languages on non-binary trees was left open.

\subsection*{Weak tree-submodularity}
Assume that the labels in the domain $D$ are arranged into a tree $T$.
For $a,b\in T$, let $g_1(a,b)$ be defined as the highest common ancestor of $a$
and $b$ in $T$; that is, the unique node on the path $P_{ab}$ that is ancestor
of both $a$ and $b$.  We define $g_2(a,b)$ as the unique node on the path
$P_{ab}$ such that the distance between $a$ and $g_2(a,b)$ is the same as the
distance between $b$ and $g_1(a,b)$.
Let $\Gamma$ be a language admitting the multimorphism $\tuple{g_1,g_2}$; such
languages are called \emph{weakly tree-submodular}.
Since $g_1$ is a semilattice operation, BLP solves $\Gamma$ by
Corollary~\ref{cor:tractability}. The tractability of finite-valued weakly
tree-submodular languages on chains\footnote{A chain is a binary tree in which
all nodes except leaves have exactly one child.} and forks\footnote{A fork is a
binary tree in which all nodes except leaves and one special node have exactly
one child. The special node has exactly two children.} has been shown
in~\cite{Kolmogorov11:mfcs} and left open for all other trees. Weak
tree-submodularity generalises the above-discussed concept of strong
tree-submodularity in the sense that any language that is strongly
tree-submodular is also weakly tree-submodular~\cite{Kolmogorov11:mfcs}. Weak
tree-submodularity also generalises the above-discussed concept of
$k$-submodularity, which corresponds to the special case with a tree on $k+1$
vertices consisting of a root node with $k$ children.

\subsection*{$1$-Defect chain}
In our final example, Corollary~\ref{cor:tractability} does not suffice
to prove that BLP solves the specific languages.
Instead, we refer directly to Theorems~\ref{th:BLP:general}
and~\ref{th:BLP:generate} (and also Theorem~\ref{thm:tfae} in the special case
of finite-valued languages).
Let $b$ and $c$ be two distinct elements of $D$ and let $(D;<)$ be a partial order
which relates all pairs of elements except for $b$ and $c$. A pair 
$\tuple{g_1,g_2}$, where $g_1,g_2:D^2\rightarrow D$ are two binary operations, is a
\emph{1-defect chain} multimorphism if $g_1$ and $g_2$ are both commutative and satisfy the following
conditions:
\begin{itemize}
\item If $\{x,y\}\neq\{b,c\}$, then $g_1(x,y)=x\meet y$ and $g_2(x,y)=x\join y$.
\item If $\{x,y\}=\{b,c\}$, then $\{g_1(x,y),g_2(x,y)\}\cap\{x,y\}=\emptyset$, and
$g_1(x,y)<g_2(x,y)$.
\end{itemize}

The tractability of finite-valued languages admitting a 1-defect chain multimorphism has
been shown in~\cite{Jonsson11:cp}. 
By Theorem~\ref{thm:tfae}, the BLP solves any such language.
We now show a more general result: BLP also solves general-valued languages admitting a
1-defect chain multimorphism.

We consider the case when $g_1(b,c)<b,c$ and set $g=g_1$.
(An analogous argument works in the case when $g_2(b,c)>b,c$.)
Using $g$, we construct a symmetric $m$-ary operation $h^{(m)}(x_1,\dots,x_m)$ for each
$m$. Consequently, BLP solves $\Gamma$ by Theorem~\ref{th:BLP:general} and
Theorem~\ref{th:BLP:generate}.

Let $h_1,\ldots,h_M$ be the $M=\binom{m}{2}$ terms $g(x_i,x_j)$.
Let
\[
h^{(m)}=g(h_1,g(h_2,\dots,g(h_{M-1},h_M)\dots)).
\]
There are three possible cases:

\begin{itemize}

\item 

$\{b,c\} \not\subseteq {x_1, \dots,x_m}$. Then $g$ acts as $\meet$, which is a
semilattice operation, hence so does $h^{(m)}$.

\item

$\{b,c\} \subseteq \{x_1, \dots,x_m\}$ and $g(b,c)\leq x_1,\dots,x_m$. Then
$h_i=g(b,c)$ for some $1\leq i\leq M$, and $g(h_i,h_j)=g(b,c)$ for all $1\leq
j\leq M$, so $h^{(m)}(x_1,\dots,x_m)=g(b,c)$.

\item

$\{b,c\} \subseteq \{x_1,\dots,x_m\}$ and $x_p \leq g(b,c)$, for some $1\leq p \leq m$. 
By choice of $g$, $x_p\not\in\{b,c\}$ and we can additionally choose $p$ so that
$x_p\leq x_1,\ldots,x_m$. Then $g(x_p,x_q)=x_p$ for all $1\leq q\leq m$ so
$h_i=x_p$ for some $1\leq i\leq M$ and $g(h_i,h_j)=x_p$ for all $1\leq j\leq M$,
so $h^{(m)}(x_1,\dots,x_m)=x_p$.

\end{itemize}

\subsection{Finding a solution}\label{sec:selfreduce}

Let $\mathcal{I}$ be an instance of $\VCSP(\Gamma)$ and assume that the BLP
solves $\Gamma$. We will now justify this terminology by showing how to obtain an
actual assignment that optimises $\mathcal{I}$.

The basic idea is that of self-reduction: we iteratively assign labels
to the first variable of $\mathcal{I}$ and test whether the partially assigned instance
has the same optimum as $\mathcal{I}$.
When such a label is found, we proceed with the next variable.
After $n \cdot |D|$ steps, where
$n$ is the number of variables of $\mathcal{I}$, we are guaranteed
to have found an optimal assignment.
This method requires that we can find the optimum of a partially assigned
instance.
In order to do this, we need the following technical lemma which is proved in
Section~\ref{sec:generate2}.

\begin{lemmaX}\label{lemma:idem}
There exists a subset $D' \subseteq D$ such that
if $\Gamma$ admits an $m$-ary symmetric fractional polymorphism, then it admits
an $m$-ary symmetric fractional polymorphism $\omega$ such that, for all $g \in
\supp(\omega)$,
\begin{enumerate}
\item\label{item:core}
$g(x,x,\dots,x) \in D'$ for all $x \in D$;
\item\label{item:idem}
$g(x,x,\dots,x) = x$ for all $x \in D'$.
\end{enumerate}
\end{lemmaX}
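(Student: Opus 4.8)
The plan is to reduce the claim to a statement about \emph{unary} fractional polymorphisms and then lift one back to arity $m$. If $\Gamma$ admits no $m$-ary symmetric fractional polymorphism we may take $D':=D$ and the implication is vacuous, so assume one exists. Write $\delta_g(x):=g(x,\dots,x)$. Substituting $x^1=\dots=x^m=x$ into~\eqref{eq:FPdefinition} shows that for any $m$-ary fractional polymorphism $\omega$ of $\Gamma$, the distribution $\omega^\partial$ placing weight $\sum\{\omega(g):\delta_g=u\}$ on each unary operation $u$ is a unary fractional polymorphism of $\Gamma$. Unary fractional polymorphisms compose (the distribution of $u'\circ u$, with $u,u'$ drawn independently, is again one, using the remark after Definition~\ref{def:FP} that supports preserve $\dom f$), so the unary operations occurring in supports of unary fractional polymorphisms of $\Gamma$ form a submonoid $U$ of the full transformation monoid on $D$. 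Finally, for $m$-ary \emph{symmetric} fractional polymorphisms $\omega_1,\omega_2$, the distribution $\omega_1\diamond\omega_2$ of the symmetric operation $(x_1,\dots,x_m)\mapsto\delta_{g_1}(g_2(x_1,\dots,x_m))$ is again an $m$-ary symmetric fractional polymorphism, with diagonal distribution $\omega_1^\partial\circ\omega_2^\partial$; hence the set $\Delta$ of diagonal distributions of $m$-ary symmetric fractional polymorphisms of $\Gamma$ is a nonempty compact convex set of unary fractional polymorphisms, closed under composition.

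I would then fix $D'$ as follows. Since $U$ is finite it contains an idempotent transformation; pick $\pi\in U$ idempotent with $|\pi(D)|$ minimal and set $D':=\pi(D)$, so that $\pi$ is a retraction of $D$ onto $D'$ and $D'$ depends only on $\Gamma$. Minimality gives, by iterating, the key property that every $u\in U$ with $u(D)\subseteq D'$ in fact satisfies $u(D)=D'$ and restricts to a bijection of $D'$; these restrictions form a subgroup $\mathcal{P}\le\operatorname{Sym}(D')$. Starting from any $m$-ary symmetric fractional polymorphism and a unary fractional polymorphism $\rho$ with $\pi\in\supp(\rho)$, composing powers of $\rho$ with it and passing to a Ces\`aro limit inside $\Delta$ (legitimate since $\Delta$ is compact, convex and composition-closed) produces an $m$-ary symmetric fractional polymorphism all of whose support operations $g$ satisfy $\delta_g(D)\subseteq D'$ — this is property~\ref{item:core} — and whose diagonals therefore restrict to elements of $\mathcal{P}$ on $D'$.

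The step I expect to be the main obstacle is upgrading this to property~\ref{item:idem}, i.e.\ forcing the diagonals to act as the \emph{identity} on $D'$ rather than merely to map into it, so that they become genuine retractions onto $D'$. Iteration and averaging alone cannot achieve this, since an idempotent probability measure on the finite group $\mathcal{P}$ is the uniform measure on a subgroup (not a point mass), so composing a fractional polymorphism with itself only spreads the $D'$-restrictions of its diagonals over the subgroup of $\mathcal{P}$ they generate. The resolution must exploit the minimality of $D'$ together with the \emph{symmetry} of the operations in the support in an essential way: one shows that the identity is the only element of $\mathcal{P}$ that can arise as the $D'$-restriction of a diagonal of an $m$-ary symmetric fractional polymorphism, by feeding suitably chosen tuples — constant tuples, and tuples that are coordinate permutations of one another — into~\eqref{eq:FPdefinition} for the cost functions of $\Gamma$ together with operations witnessing membership in $U$, obtaining a contradiction with minimality of $|D'|$ (or with symmetry) otherwise. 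Granting this, the fractional polymorphism constructed in the previous paragraph already has both~\ref{item:core} and~\ref{item:idem}, which completes the proof; alternatively one could build the required idempotent symmetric operation directly by a generation argument in the spirit of the paper's other constructions.
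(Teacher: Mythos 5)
Your setup is exactly the paper's: reduce to unary fractional polymorphisms (the ``diagonal'' $\delta_g$), use inclusion-maximality to get a unary fractional polymorphism $\nu$ whose support is a monoid under composition, define $D'$ as the image of a minimal-rank element (equivalently, of an idempotent of minimal rank), and observe that the $D'$-restrictions of support operations with image $D'$ form a group $\mathcal{P}\le\operatorname{Sym}(D')$. You also correctly diagnose why iterating and Ces\`aro-averaging cannot finish the job: an idempotent probability measure on the finite group $\mathcal{P}$ is uniform on a subgroup, not a point mass at the identity. The problem is the resolution you half-commit to --- that the identity is the \emph{only} element of $\mathcal{P}$ that can occur as the $D'$-restriction of a diagonal of an $m$-ary symmetric fractional polymorphism --- is simply false. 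Symmetry of $g$ places no constraint whatsoever on the unary map $\delta_g$; any language invariant under a nontrivial automorphism $\sigma$ of $D'$ can admit symmetric fractional polymorphisms whose support operations have diagonals restricting to $\sigma$ on $D'$. So the claim you hope to prove by ``feeding suitably chosen tuples into~\eqref{eq:FPdefinition}'' cannot hold, and no amount of exploiting minimality of $D'$ will rescue it.

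What the paper actually uses to close the gap is Lemma~\ref{lemma:expansion} (the Expansion Lemma), which is a mass-moving argument, not an averaging one, and this distinction is precisely what sidesteps the subgroup obstruction you identified. One takes $\mathbb G=\{\{g\}:g\in\calO^{(m)}_{\tt sym}\}$, lets $\mathbb G^\ast$ be those singletons satisfying both~(\ref{item:core}) and~(\ref{item:idem}), and defines ${\sf Exp}(\{g\})$ as left-composition of $g$ with $h'\sim\nu$. Validity is immediate from the definition of a unary fractional polymorphism. Non-vanishing holds because for each reachable $\{g\}$ the diagonal of $h\circ g$ (with $h$ of minimal rank, $D'=h(D)$) restricts to some $\sigma\in\mathcal{P}$, and composing once more with $i\in\supp(\nu)$ satisfying $i|_{D'}=\sigma^{-1}$ lands in $\mathbb G^\ast$. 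The Expansion Lemma --- proved either by the tree-pruning construction or by the short maximality argument (take $\rho^\ast$ maximising $\rho^\ast(\mathbb G^\ast)$; a non-vanishing path out of any residual bad collection strictly increases this quantity, a contradiction) --- then produces a generalised fractional polymorphism supported \emph{entirely} on $\mathbb G^\ast$. That mechanism, which can concentrate all the mass on the good collections rather than merely spread it uniformly over a subgroup, is the missing ingredient; without reproducing something equivalent to it, your argument does not go through.
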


The utility of Lemma~\ref{lemma:idem} can be described as follows: whenever we have
an appropriate fractional polymorphism that maps into a sub-domain $D'$
and that is idempotent on $D'$, then we can infer
that there is an optimal solution just consisting of labels from $D'$.
Such languages are made no more complex by adding
constants restricting variables to be particular labels from $D'$.
In particular, we can use BLP to solve partially assigned instances.

\begin{propositionX}
Let $\Gamma$ be an arbitrary valued constraint language
and let $\calI$ be an instance of $\VCSP(\Gamma)$.
If BLP solves $\Gamma$, then
an optimal assignment of $\calI$ can be found in polynomial time.
\end{propositionX}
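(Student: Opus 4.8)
The plan is to formalise the self-reduction sketch that precedes the statement, using Lemma~\ref{lemma:idem} as the key technical ingredient. First, observe that since BLP solves $\Gamma$, Theorem~\ref{th:BLP:general} gives that $\Gamma$ admits a symmetric fractional polymorphism of every arity $m \ge 2$; in particular it admits one of some arity, so Lemma~\ref{lemma:idem} applies and yields a subset $D' \subseteq D$ together with, for each relevant arity $m$, a symmetric fractional polymorphism $\omega_m$ whose support operations map all of $D$ into $D'$ and act idempotently on $D'$. The first step is to show that $\calI$ has an optimal assignment taking values only in $D'$: apply any $g \in \supp(\omega_m)$ componentwise to $m$ copies of an arbitrary optimal assignment $x^\star$; by the fractional polymorphism inequality applied term-by-term to $f_\calI$ (which is itself expressible as a nonnegative combination covered by the definition, so the inequality is inherited), the resulting assignment $g(x^\star,\dots,x^\star)$ has value $\le f_\calI(x^\star)$, hence is also optimal, and by property~\ref{item:core} it lies in $(D')^V$.

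The second step is to set up the augmented language. For each $d \in D'$ let $u_d : D \to \Qc$ be the unary cost function with $u_d(d) = 0$ and $u_d(a) = \infty$ for $a \ne d$, and let $\Gamma' = \Gamma \cup \{u_d : d \in D'\}$. The claim is that BLP still solves $\Gamma'$. By Theorem~\ref{th:BLP:general} it suffices to check that $\Gamma'$ admits a symmetric fractional polymorphism of every arity; I would take the $\omega_m$ from Lemma~\ref{lemma:idem} and verify that each $u_d$ also admits $\omega_m$. This is exactly where properties~\ref{item:core} and~\ref{item:idem} are used: for $u_d$ the only relevant tuples are $x^1 = \dots = x^m = (d)$, and $g(d,\dots,d) = d \in D'$ by idempotency on $D'$ (note $d \in D'$), so the left-hand side of~\eqref{eq:FPdefinition} equals $u_d(d) = 0 = u_d^m(d,\dots,d)$. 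Hence $\omega_m$ is a symmetric fractional polymorphism of $\Gamma'$ for every $m$, and BLP solves $\Gamma'$.

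The third step is the self-reduction loop itself. Enumerate the variables $v_1,\dots,v_n$ of $\calI$. Maintain a partial assignment; at stage $i$, having fixed $v_1 \mapsto d_1,\dots,v_{i-1}\mapsto d_{i-1}$ with $d_j \in D'$, form for each candidate $d \in D'$ the instance $\calI_i^d \in \VCSP(\Gamma')$ obtained from $\calI$ by adding the unary terms $u_{d_1}(v_1),\dots,u_{d_{i-1}}(v_{i-1}), u_d(v_i)$, and compute $\BLP(\calI_i^d)$, which equals $\min_{x} f_{\calI_i^d}(x)$ since BLP solves $\Gamma'$. By the first step (applied to $\calI_i^{d}$, whose feasible assignments with finite value already lie in the right coordinates, and noting the argument of step~1 goes through for the partially-constrained instance because $\omega_m$ is a fractional polymorphism of $\Gamma'$), there is some $d \in D'$ for which this minimum equals the optimum of the instance from the previous stage; pick such a $d$ as $d_i$. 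After $n$ stages we have a full assignment $x \in (D')^V$ whose value equals $\BLP(\calI)$, hence an optimal assignment of $\calI$. Each stage solves at most $|D'| \le |D|$ linear programs of size polynomial in $|\calI|$ (the BLP of Definition~\ref{def:vcsp}'s instance plus $O(n)$ unary terms), and $|D|$ is a constant, so the whole procedure runs in polynomial time.

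The main obstacle is the second step: establishing that the augmented language $\Gamma'$ is still solved by BLP. Everything hinges on Lemma~\ref{lemma:idem} producing a single subdomain $D'$ that works \emph{simultaneously} for all arities, so that the same $D'$-valued optimum exists for every partially-constrained instance encountered in the loop; without the uniform idempotency-on-$D'$ property one could not add the constant unary functions while preserving the symmetric-fractional-polymorphism-of-every-arity condition. The remaining parts — the componentwise-averaging argument of step~1 and the complexity bookkeeping of step~3 — are routine given the machinery already set up in the paper.
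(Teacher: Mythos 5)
Your proposal is correct and follows essentially the same route as the paper: invoke Lemma~\ref{lemma:idem} to obtain the sub-domain $D'$ together with idempotency on $D'$, augment $\Gamma$ with the constant unary functions $\{c_d : d \in D'\}$, observe that the resulting language still admits the symmetric fractional polymorphisms of Lemma~\ref{lemma:idem} (hence is still solved by BLP), and then run the $n\cdot|D'|$-round self-reduction. One small expositional slip worth noting: in your third step you write ``by the first step applied to $\calI_i^d$'' when what you actually need is the first step applied to the \emph{previous-stage} instance $\calI_{i-1}^{d_{i-1}}$ (to guarantee it has a $D'$-valued optimum whose value at $v_i$ can serve as the witness $d$); the paper makes this cleaner by fixing an optimal $s$ of the current partially-constrained instance and then computing that $g[s,\dots,s]$ is simultaneously optimal for $\calI$ and for the constrained instance and takes a value in $D'$ at $v_{k+1}$, via a short displayed chain of inequalities using the fractional polymorphism of $\Gamma$ directly. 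But the underlying argument is the same and your version is sound.
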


\begin{proof}
Let $D' \subseteq D$ be the set in Lemma~\ref{lemma:idem}.
Let $c_d$ be the unary cost function with $c_d(x) = \infty$ for $x \neq d$ and
$c_d(d) = 0$,
and let $\Gamma_c = \Gamma \cup \{ c_d \mid d \in D' \}$.
It follows from Theorem~\ref{th:BLP:general} and
Lemma~\ref{lemma:idem}(\ref{item:idem}) that the BLP solves $\Gamma_c$.
We can now apply self-reduction to obtain an optimal assignment of
$\mathcal{I}$.
Let $x = (x_1, \dots, x_n)$ be the variables of $\mathcal{I}$.
The idea is to successively try each possible label $d \in D'$ for $x_1$ by
adding the term $c_d(x_1)$ to the sum of $\mathcal{I}$.
The modified instance is an instance of $\VCSP(\Gamma_c)$ so we can use the BLP
to obtain its optimum.
If, for some $d_1 \in D'$, the optimum of the modified instance matches that of
$\mathcal{I}$, then we know that there exists an optimal solution to
$\mathcal{I}$ in which $x_1$ is assigned $d_1$ and we can proceed with the next
variable.

We now claim that this procedure always terminates with an optimal assignment.
In particular, we must show that if the optimum of $\mathcal{I}$ is equal to the
optimum of the instance $\mathcal{I}'$ obtained by adding $\sum_{i=1}^k
c_{d_i}(x_i)$, $d_i \in D'$, $k \geq 1$, to the sum of $\mathcal{I}$, then we
can always find an optimal solution of $\mathcal{I}'$ that assigns $d_{k+1}$ to
$x_{k+1}$, for some $d_{k+1} \in D'$.
Let $s : V \to D$ be an optimal solution to $\mathcal{I}'$.
Note that $s(x_i) = d_i$ for all $i = 1, \dots, k$.
Let $\omega$ be a fractional polymorphism of $\Gamma$ satisfying
(\ref{item:core}) and (\ref{item:idem}) in Lemma~\ref{lemma:idem}.
Then, with $\mathcal{I}$ represented as in Definition~\ref{def:vcsp} on
page~\pageref{def:vcsp},
and with the notation $x^t = (x_{v(t,1)},\dots,x_{v(t,n_t)})$,
\begin{align*}
f_{\mathcal{I}'}(s(x))\ &= \ 
f_{\mathcal{I}}(s(x)) + \sum_{i=1}^k c_{d_i}(s(x_i))\\
&\geq \sum_{t \in T} \sum_{g\in \supp(\omega)} \omega(g)
f_t(g(s(x^t),\dots,s(x^t)))\\
&= \ \sum_{g\in \supp(\omega)} \omega(g)   f_{\mathcal{I}}(g[s,\dots,s](x))
\geq \ \min_{x \in D^n} f_{\mathcal{I}}(x).
\end{align*}

Since $f_{\mathcal{I}'}(s(x)) = \min_{x \in D^n} f_{\mathcal{I}}(x)$, we
conclude that $g[s,\dots,s]$ is an optimal solution to $\mathcal{I}$, for each
$g \in \supp(\omega)$. Every $g \in \supp(\omega)$ satisfies (\ref{item:idem})
in Lemma~\ref{lemma:idem}, so $g[s,\dots,s](x_i) = d_i$ for all $i = 1, \dots,
k$, and it follows that $g[s,\dots,s]$ is also an optimal solution to
$\mathcal{I}'$. Finally, $g$ satisfies (\ref{item:core}) in
Lemma~\ref{lemma:idem}, so $g[s,\dots,s](x_{k+1}) \in D'$, from which the claim
follows.
\end{proof}


\section{Characterisation of general-valued languages}


In this section we will prove the main characterisation of general-valued languages
solved by BLP.

\renewcommand{\thetheoremRESTATED}{\ref{th:BLP:general}}
\begin{theoremRESTATED}[restated]
BLP solves $\Gamma$ if and only if $\Gamma$ admits a symmetric fractional
polymorphism of \emph{every} arity $m\ge 2$.
\end{theoremRESTATED}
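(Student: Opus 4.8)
The plan is to prove the two directions separately, starting with the easier "only if" direction and then doing the substantive "if" direction.

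\medskip

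\noindent\textbf{The ``only if'' direction.} Suppose BLP solves $\Gamma$; I want to produce, for each $m \geq 2$, a symmetric fractional polymorphism of arity $m$. The idea is to encode the defining inequality~\eqref{eq:FPdefinition} for a symmetric $\omega$ as the statement that BLP solves a particular (gadget) instance exactly. Fix a finite subset $\Gamma' \subseteq \Gamma$ and an arity $m$. For each cost function $f \in \Gamma'$ and each choice of tuples $x^1,\dots,x^m \in \dom f$, build an instance on variables indexed by the ``columns'' of the $m \times n$ array: the instance that places the constraint $f$ on the appropriate scope. One then observes that a labelling corresponds to picking, for each coordinate block, one of the rows $x^1,\dots,x^m$ (or rather an $S_m$-symmetric combination), and the LP optimum of these gadgets, by assumption equal to the integral optimum, yields — via LP duality or directly by reading off an optimal fractional solution — a probability distribution over symmetric $m$-ary operations satisfying the required inequality for all $f \in \Gamma'$. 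A compactness argument (the set $\calO^{(m)}$ is finite, $\Gamma$ is countable) then extends this to a single symmetric fractional polymorphism working for all of $\Gamma$. This is essentially the standard "LP solves $\Rightarrow$ there is a good fractional operation" argument; the only care needed is to make the gadget force symmetry, which is done by symmetrising over $S_m$.

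\medskip

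\noindent\textbf{The ``if'' direction.} Suppose $\Gamma$ admits a symmetric fractional polymorphism $\omega_m$ of every arity $m \geq 2$; I want to show BLP solves $\Gamma$. Let $\calI$ be a $\Gamma$-instance with $n$ variables and let $\opt = \min_{x \in D^V} f_\calI(x)$. Since $\BLP(\calI) \leq \opt$ always, it suffices to show $\BLP(\calI) \geq \opt$. Take an optimal BLP solution $(\mu_t, \alpha_v)$. The plan is a rounding-by-averaging argument: choose a large $m$ and approximate each marginal $\alpha_v \in \Delta$ and each $\mu_t \in \mathbb{M}_{n_t}$ by rational distributions with common denominator $m$, so that $\alpha_v$ is represented as a multiset of $m$ labels for variable $v$, consistently with $\mu_t$ being a multiset of $m$ labellings in $\dom f_t$ whose column-$k$ multiset is exactly that of $v(t,k)$. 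Assemble, for each variable $v$, the length-$m$ tuple $z^v \in D^m$ of its labels; applying the symmetric $m$-ary operation(s) in $\supp(\omega_m)$ coordinatewise across the columns turns each $\mu_t$-block into an application of $\omega_m$ to $m$ tuples of $\dom f_t$, so by the fractional polymorphism inequality the $\omega_m$-weighted value is at most the $\mu_t$-average, i.e.\ at most $\BLP(\calI)$ summed over $t$; hence some operation $g \in \supp(\omega_m)$ produces an assignment of value $\leq \BLP(\calI)$, giving $\opt \leq \BLP(\calI)$. The subtlety is that the multiset at $v$ feeding into $f_t$ must be consistent across all terms $t$ sharing $v$; symmetry of $\omega_m$ is exactly what makes $g$ depend only on the multiset, so the resulting assignment $g(z^v)$ at $v$ is well-defined independently of which term we look through. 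The rational approximation with a common denominator for finitely many LP vertices is where $\Gamma$ countable and the $\omega_m$ of all arities get used: one needs $m$ divisible by a suitable integer, and since the BLP polytope is rational one may even take the optimum itself to be rational, or pass to a limit over $m$.

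\medskip

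\noindent\textbf{Main obstacle.} The delicate point is the simultaneous consistency of the marginals: one must realise an optimal (or near-optimal) BLP solution as an actual collection of multisets — one multiset of $m$ labellings per term — that are globally consistent on shared variables. Constructing such a joint realisation is not automatic from the local marginal constraints of BLP (BLP only guarantees pairwise/term-wise consistency, not a global distribution), so the argument must instead keep the $\mu_t$ separate and only glue at the level of the single variable $v$ via the multiset of its $m$ labels; verifying that $\omega_m$ being symmetric makes $g(z^v)$ well-defined across terms, and that the rational-denominator approximation can be pushed through uniformly (hence the need for polymorphisms of every arity, to take $m \to \infty$), is the technical heart of the proof.
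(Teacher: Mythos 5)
Your ``if'' direction (every arity has a symmetric fractional polymorphism $\Rightarrow$ BLP solves $\Gamma$) is essentially the paper's argument, and you correctly put your finger on the crux: a symmetric $m$-ary operation depends only on the multiset of its $m$ arguments, so applying it to the ``length-$m$ tuple of labels at $v$'' gives a well-defined assignment regardless of which term's $\mu_t$ one looks through. The paper formalises this via the identification of $\calO^{(m)}_{\tt sym}$ with $\Omega^{(m)}=\{\text{mappings }\Delta^{(m)}\to D\}$ (Proposition~\ref{prop:sym}). Two small points: the paper avoids your approximation-with-common-denominator step entirely by observing that an LP with rational data has a rational optimal solution, so the BLP optimum itself can be taken of the form $p/m$ exactly (you note this as an alternative). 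And countability of $\Gamma$ is used in the \emph{other} direction (the compactness step), not here.

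The ``only if'' direction, however, has a genuine gap. Your gadget description — ``variables indexed by the columns of the $m\times n$ array,'' with $f$ placed on ``the appropriate scope,'' followed by ``symmetrising over $S_m$'' and ``via LP duality or directly by reading off an optimal fractional solution'' — does not specify a construction that works, and the decisive ideas are missing. First, indexing variables by columns means variables range over $D^m$, so a labelling of the instance is an arbitrary $m$-ary operation, not a symmetric one; ``symmetrising over $S_m$'' is left unexplained. The paper instead takes the variable set to be $\Delta^{(m)}$ itself (the multisets of size $m$ from $D$), so that an integral labelling of the instance is \emph{automatically} a mapping $\Delta^{(m)}\to D$, i.e.\ a symmetric $m$-ary operation, and constraints with equal marginals automatically share variables. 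Second, and more importantly, you cannot ``read off'' a probability distribution over operations from an optimal BLP solution of a gadget: an integral optimum is a single operation, and the fractional optimum is not a distribution over operations. The paper's proof is a contradiction argument: assume no symmetric fractional polymorphism of arity $m$ exists; then the defining linear system~\eqref{eq:Farkas0} is infeasible, so Farkas' lemma produces a nonnegative (and, after scaling, integer) weight vector $y$; the instance $\calI$ is built with one term per tuple $(f,\hat\mu,\hat\alpha)\in\Gamma^+$ repeated $y(f,\hat\mu,\hat\alpha)$ times, on variable set $\Delta^{(m)}$; and one exhibits a feasible BLP solution (namely $\alpha_v=v$ and $\mu_t=\hat\mu$) whose value is strictly below the integral optimum, contradicting that BLP solves $\Gamma$. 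Without the $\Delta^{(m)}$-indexing and without Farkas supplying the weights, your sketch does not close; ``LP duality'' names the right tool, but it must be applied to the fractional-polymorphism feasibility system, not to the BLP of an unspecified gadget.
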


Recall that $\mathbb M_n$ is the set of probability distributions over
labellings in $D^n$ and $\Delta=\mathbb M_1$.
For an integer $m\ge 1$ we denote by ${\mathbb M}_n^{(m)}$ the set of vectors
$\mu\in{\mathbb M}_n$ such that all components of $\mu$ are rational numbers of
the form $p/m$, where $p\in\mathbb Z$. We also denote $\Delta^{(m)}={\mathbb
M}^{(m)}_1$, and define $\Omega^{(m)}$ to be the set of mappings
$\Delta^{(m)}\rightarrow D$.

A vector $\alpha\in\Delta^{(m)}$ can be viewed as a multiset over $D$ of size $m$,
or equivalently as an element of the quotient space $D^m/\sim$ where $\sim$
is the equivalence relation defined by permutations.
Therefore, a symmetric mapping $g:D^m\rightarrow D$ can be equivalently viewed as a mapping $g:\Delta^{(m)}\rightarrow D$.
This gives a natural isomorphism between $\calO^{(m)}_{\tt sym}$ and $\Omega^{(m)}$.

A symmetric fractional polymorphism $\omega$ can thus be viewed either as a
probability distribution over $\calO^{(m)}_{\tt sym}$ or as a probability
distribution over $\Omega^{(m)}$. In the proposition below we use both views
interchangeably.

\begin{propositionX}\label{prop:sym}
Let $\omega$ be a symmetric fractional operation of arity $m$ and let $f$ be
a cost function of arity $n$. Then $\omega$ is a fractional polymorphism of
$f$ (i.e.,~\eqref{eq:FPdefinition} holds) if and only if for every
$\mu\in\mathbb M_n^{(m)}$,
\begin{equation}
\mathbb{E}_{g\sim\omega} f(g(\mu_{[1]},\ldots,\mu_{[n]})) \leq
\mathbb{E}_{x\sim\mu} f(x),
\label{eq:FPdefinitionSym}
\end{equation}
where $\mu_{[i]}\in \Delta^{(m)}$ is the marginal probability of distribution
$\mu$ for the $i$th coordinate.
\end{propositionX}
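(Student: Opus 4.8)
The plan is to show that the two inequalities \eqref{eq:FPdefinition} and \eqref{eq:FPdefinitionSym} are simply restatements of each other once one passes between the ``tuple'' view of a symmetric operation $g:D^m\to D$ and the ``multiset'' view $g:\Delta^{(m)}\to D$. The key observation is that a tuple $(x^1,\dots,x^m)$ of $n$-tuples $x^i\in D^n$ is exactly the same data as a multiset $\mu\in\mathbb M_n^{(m)}$ (namely $\mu$ assigns mass $1/m$ times the multiplicity of each $x\in D^n$ among $x^1,\dots,x^m$), and that under this correspondence the column-wise action of $g$ matches the marginals: the $i$th coordinate of the multiset $\{x^1,\dots,x^m\}$ is precisely the multiset $\{x^1_i,\dots,x^m_i\}=\mu_{[i]}\in\Delta^{(m)}$, so $g(x^1,\dots,x^m)$ (applied column-wise, which is well-defined and symmetric-invariant) equals $g(\mu_{[1]},\dots,\mu_{[n]})$ in the multiset view.

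First I would spell out this correspondence carefully: given $x^1,\dots,x^m\in D^n$, define $\mu\in\mathbb M_n^{(m)}$ by $\mu(x)=\frac1m|\{j : x^j=x\}|$, and conversely given $\mu\in\mathbb M_n^{(m)}$ recover an $m$-tuple (up to permutation) by listing each $x\in D^n$ with multiplicity $m\mu(x)$. Then I would record the two identities that make the proof go through: (a) $\mathbb E_{x\sim\mu}f(x)=\frac1m\sum_{j=1}^m f(x^j)=f^m(x^1,\dots,x^m)$, which is immediate from the definition of $\mu$; and (b) for any symmetric $g\in\calO^{(m)}_{\tt sym}$, viewing $g$ as a map $\Delta^{(m)}\to D$ via the isomorphism with $\Omega^{(m)}$ discussed just before the proposition, one has $g(\mu_{[1]},\dots,\mu_{[n]})=g(x^1,\dots,x^m)$ where on the right $g$ is applied column-wise; this is because the $i$th column $(x^1_i,\dots,x^m_i)$, as a multiset, is exactly $\mu_{[i]}$. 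Combining (a) and (b), the left- and right-hand sides of \eqref{eq:FPdefinitionSym} are term-by-term equal to the left- and right-hand sides of \eqref{eq:FPdefinition} (using the probabilistic reformulation of \eqref{eq:FPdefinition} from Remark~\ref{rem:prob}, $\mathbb E_{g\sim\omega}f(g(x^1,\dots,x^m))$, with all $g\in\supp(\omega)$ symmetric).

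For the equivalence of the quantifiers, the ``only if'' direction is trivial: every $\mu\in\mathbb M_n^{(m)}$ arises from some $x^1,\dots,x^m\in D^n$, and if they all lie in $\dom f$ then $\mu$ is supported on $\dom f$; conversely if $\mu$ has some mass outside $\dom f$ then $\mathbb E_{x\sim\mu}f(x)=\infty$ and \eqref{eq:FPdefinitionSym} holds vacuously, so it suffices to check it for $\mu$ supported on $\dom f$, which corresponds exactly to the range $x^1,\dots,x^m\in\dom f$ quantified in \eqref{eq:FPdefinition}. For ``if'', given arbitrary $x^1,\dots,x^m\in\dom f$ we pass to the associated $\mu\in\mathbb M_n^{(m)}$, which is supported on $\dom f$, and read \eqref{eq:FPdefinition} off from \eqref{eq:FPdefinitionSym} via identities (a) and (b).

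I do not expect any serious obstacle here; the statement is essentially a bookkeeping lemma translating Definition~\ref{def:FP} into the distributional language needed for the LP duality argument that follows. The one point requiring a little care is the well-definedness in identity (b) — that applying a symmetric $g$ column-wise to $x^1,\dots,x^m$ depends only on the multiset of columns, hence factors through the marginals $\mu_{[i]}$ — but this is exactly the content of the isomorphism $\calO^{(m)}_{\tt sym}\cong\Omega^{(m)}$ already established in the text, so it can be invoked directly. I would also note explicitly that both sides of \eqref{eq:FPdefinitionSym} are finite whenever $\mu$ is supported on $\dom f$ (using the remark after Definition~\ref{def:FP} that $g$ maps $\dom f$-tuples into $\dom f$ for $g\in\supp(\omega)$), so no $\infty-\infty$ issues arise.
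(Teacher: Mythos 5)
Your proposal is correct and takes essentially the same route as the paper: the paper's own proof is a two-line remark that the left- and right-hand sides of \eqref{eq:FPdefinitionSym} rewrite as $\sum_{g\in\Omega^{(m)}}\omega(g)f(g(\mu_{[1]},\ldots,\mu_{[n]}))$ and $\sum_{x\in D^n}\mu(x)f(x)$ and that the claim then follows from Definition~\ref{def:FP} and Remark~\ref{rem:prob}, relying on the tuple/multiset identification set up just before the proposition. You have simply made that identification and the quantifier bookkeeping (including the $\infty$-on-the-right-hand-side case) explicit, which is a faithful expansion rather than a different argument.
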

\begin{proof}
The left-hand side of~\eqref{eq:FPdefinitionSym} is equal $\sum_{g\in\Omega^{(m)}}
\omega(g)f(g(\mu_{[1]},\ldots,\mu_{[n]}))$ and the right-hand side
is equal $\sum_{x\in D^n}\mu(x)f(x)$. The claim
thus follows from Definition~\ref{def:FP} and Remark~\ref{rem:prob}.
\end{proof}

\begin{proof}(of Theorem~\ref{th:BLP:general})
To prove Theorem~\ref{th:BLP:general}, we need to establish the following:
BLP solves $\Gamma$ if and only if for every $m\ge 2$ there exists a probability distribution $\omega$ over $\Omega^{(m)}$ that satisfies~\eqref{eq:FPdefinitionSym}.

{``$\Leftarrow$''}: Suppose that $\Gamma$ admits a symmetric fractional polymorphism of every arity.
Let $\calI$ be a $\Gamma$-instance with $n$ variables. We need to show that system~\eqref{eq:BLP} for $\calI$
has an integral minimiser.

It is well known that every LP with
rational coefficients has an optimal solution with rational coefficients (for
instance, this is a direct consequence of Fourier-Motzkin
elimination)~\cite{Schrijver86:ILP}. Therefore,
since the LP~\eqref{eq:BLP} has rational coefficients,
it has an optimal solution  $\{\alpha,\mu_t\}$ such that all variables are rational numbers of the form $p/m$ for some integers $p,m$ with $m\ge 1$.
We can assume that $m\ge 2$, otherwise the claim is trivial.

Let $\omega$ be a symmetric fractional polymorphism of $\Gamma$ of arity $m$.
For any $n$-ary cost function $f$ and $\mu\in\mathbb M_n^{(m)}$, we denote by
$f(\mu)$ the expectation $f(\mu)=\mathbb{E}_{x\sim\mu} f(x)$. 

Using~\eqref{eq:BLP} and~\eqref{eq:FPdefinitionSym},
we can write
\begin{align}
\BLP(\calI)=
\sum_{t\in T}\sum_{x\in \dom f_t}\mu_t(x)f_t(x)
&=
\sum_{t\in T}f_t(\mu_t) \notag \\
&\geq
\sum_{t\in T}\sum_{g\in\Omega^{(m)}} \omega(g)f(g(\mu_{v(t,1)}),\ldots,g(\mu_{v(t,n_t)})) \notag \\
&=
\sum_{g\in\Omega^{(m)}} \omega(g)f_\calI(g(\mu)),
\label{eq:GALSFASGAS}
\end{align}
where $g$ is applied component-wise, i.e., if
$\alpha=(\alpha_1,\ldots,\alpha_n)\in[\Delta^{(m)}]^n$, then
$g(\alpha)=(g(\alpha_1),\ldots,g(\alpha_n))\in D^n$. (Note that in the second
equality in~\eqref{eq:GALSFASGAS} we have used $0\cdot\infty=0$.) Eq.~\eqref{eq:GALSFASGAS}
implies that $\BLP(\calI)\ge f_\calI(g(\mu))$ for some $g\in \supp(\omega)$,
and therefore BLP solves the instance $\calI$.

{``$\Rightarrow$''}: Let us fix $m\ge 2$, and assume that BLP solves $\Gamma$.
In this part we will use letters with a ``hat'' ($\hat\alpha$ and $\hat\mu$)
for vectors of the form $p/m$, $p\in\mathbb Z$.

First, we consider the case when $|\Gamma|$ is finite. Suppose that $\Gamma$
does  not admit a symmetric fractional polymorphism of arity $m$. Using the
notation
\begin{itemize}
\item $\Gamma^+$ is the set of tuples $(f,\hat\mu,\hat\alpha)$ such that $f$ is a function in $\Gamma$ of arity $n$,
$\hat\mu\in\mathbb M^{(m)}_n$ with $\supp(\hat\mu)\subseteq\dom f$, and
$\hat\alpha=(\hat\mu_{[1]},\ldots,\hat\mu_{[n]})\in [\Delta^{(m)}]^n$; and
\item $\Omega^{(m)}_\Gamma\subseteq\Omega^{(m)}$ is the set of mappings $g:\Delta^{(m)}\rightarrow D$ such that
$g(\hat\alpha)\in\dom f$ for all $(f,\hat\mu,\hat\alpha)\!\in\!\Gamma^+$,\!\!\!\!\!\!\!\!\!\!\!\!
\end{itemize}

the following system does not have a solution:

\begin{subequations}\label{eq:Farkas0}
\begin{eqnarray}
\sum_{g\in\Omega^{(m)}_\Gamma}\omega(g)f(g(\hat\alpha))&\le& f(\hat\mu)\qquad \forall (f,\hat\mu,\hat\alpha)\in\Gamma^+ \\ 
\sum_{g\in\Omega^{(m)}_\Gamma}\omega(g) & = & 1 \\
\omega(g)&\ge& 0 \hspace{38pt} \forall g\in\Omega^{(m)}_\Gamma
\end{eqnarray}
\end{subequations}

Since system~\eqref{eq:Farkas0} is infeasible, by Farkas' lemma~\cite{Schrijver86:ILP} the following system has a solution:
\begin{subequations}\label{eq:Farkas1}
\begin{eqnarray}
\sum_{(f,\hat\mu,\hat\alpha)\in\Gamma^+} y(f,\hat\mu,\hat\alpha)  f(\hat\mu) + z &<& 0  \\
\sum_{(f,\hat\mu,\hat\alpha)\in\Gamma^+} y(f,\hat\mu,\hat\alpha) f(g(\hat\alpha)) + z &\ge& 0 \qquad \forall g\in\Omega^{(m)}_\Gamma \\
y(f,\hat\mu,\hat\alpha) & \ge & 0 \qquad\forall (f,\hat\mu,\hat\alpha)\in\Gamma^+
\end{eqnarray}
\end{subequations}
Eliminating $z$ gives
\begin{subequations}\label{eq:Farkas2}
\begin{eqnarray}
\sum_{(f,\hat\mu,\hat\alpha)\in\Gamma^+} y(f,\hat\mu,\hat\alpha) f(g(\hat\alpha))  &>& \sum_{(f,\hat\mu,\hat\alpha)\in\Gamma^+} y(f,\hat\mu,\hat\alpha) f(\hat\mu) \qquad \forall g\in\Omega^{(m)}_\Gamma \label{eq:Farkas2:a} \\
y(f,\hat\mu,\hat\alpha) & \ge & 0 \hspace{117pt} \forall (f,\hat\mu,\hat\alpha)\in\Gamma^+\qquad
\end{eqnarray}
\end{subequations}
We claim that vector $y$ in~\eqref{eq:Farkas2} can be chosen to be integer-valued and strictly positive. To see this, observe
that if $y$ is a feasible solution then so is any vector $y'$ with $y'(f,\hat\mu,\hat\alpha)\in[C y(f,\hat\mu,\hat\alpha),C y(f,\hat\mu,\hat\alpha)+1]$ for $(f,\hat\mu,\hat\alpha)\in\Gamma^+$, 
for some sufficiently large constant $C$ (namely, $C>\frac{2}{\epsilon}\max_{f\in\Gamma,x\in\dom f} |f(x)| $ where $\epsilon>0$ is the minimum difference between the left-hand side and the right-hand side 
in~\eqref{eq:Farkas2:a}).

Let us construct an instance $\calI$ with variables $V=\Delta^{(m)}$ and the function
\begin{eqnarray}
f_\calI(x)=\sum_{(f,\hat\mu,\hat\alpha)\in\Gamma^+,\hat\alpha=(\hat\alpha_1,\ldots,\hat\alpha_n)}y(f,\hat\mu,\hat\alpha)f(x_{\hat\alpha_1},\ldots,x_{\hat\alpha_n}).
\end{eqnarray}
This can be viewed as a $\Gamma$-instance, if we simulate 
 the multiplication of $y(f,\hat\mu,\hat\alpha)$ and $f$ by repeating the latter term $y(f,\hat\mu,\hat\alpha)$ times.

Consider a mapping $g:\Delta^{(m)}\rightarrow D$. Since $V=\Delta^{(m)}$, such $g$ is
a valid labelling for the instance $\calI$, so we can evaluate $f_\calI(g)$.
If $g\in\Omega^{(m)}\setminus\Omega^{(m)}_\Gamma$ then $f_\calI(g)=\infty$ (by the definition of $\Omega^{(m)}_\Gamma$),
and if $g\in \Omega^{(m)}_\Gamma$ then $f_\calI(g)$ equals the left-hand side of eq.~\eqref{eq:Farkas2:a}.
Thus, eq.~\eqref{eq:Farkas2:a} gives
\begin{equation}
\min_{g:V\rightarrow D} f_\calI(g)  > \sum_{(f,\hat\mu,\hat\alpha)\in\Gamma^+} y(f,\hat\mu,\hat\alpha) f(\hat\mu).  \label{eq:Farkas2:a'}
\end{equation}
The left-hand side of eq.~\eqref{eq:Farkas2:a'} is the optimal value of the instance $\calI$.
We claim that 
\begin{equation}
\sum_{(f,\hat\mu,\hat\alpha)\in\Gamma^+} y(f,\hat\mu,\hat\alpha) f(\hat\mu) \ge \BLP(\calI)
\label{eq:Farkas2:last}
\end{equation}
and so $\min_{g} f_\calI(g)>\BLP(\calI)$, which contradicts the assumption that BLP solves $\Gamma$.
To prove eq.~\eqref{eq:Farkas2:last}, it suffices to specify a feasible vector of the BLP relaxation of $\calI$
(given by eq.~\eqref{eq:BLP})
whose value equals the left-hand side of \eqref{eq:Farkas2:last}. Such vector is constructed as follows:
$\alpha_{v}=v$ for all $v\in V=\Delta^{(m)}$
and $\mu_t=\hat \mu$ for all $t=(f,\hat\mu,\hat\alpha)\in\Gamma^+$.

We showed that if $\Gamma$ is finite then it admits a symmetric fractional
polymorphism of arity $m$.
Now suppose that $\Gamma$ is infinite but countable:
$\Gamma=\{f_1,f_2,\ldots\}$.
For any integer $r\ge 1$, denote by $\Gamma_r=\{f_1,\ldots,f_r\}$.
As we just showed, $\Gamma_r$ admits some symmetric fractional polymorphism
$\omega_r$ of arity $m$.
The space of symmetric fractional polymorphisms is a compact subset of $\mathbb
R^{|\Omega^{(m)}|}$,
therefore the sequence $\omega_1,\omega_2,\ldots$ has a limit vector $\omega$.
Using standard continuity arguments, we conclude that $\omega$ is a symmetric
fractional polymorphism of $\Gamma$.
This concludes the proof.
\end{proof}


\section{Constructing new fractional polymorphisms}\label{sec:construction}
In this section we introduce a generic procedure for constructing new fractional polymorphisms of a language $\Gamma$
from existing ones. This procedure will be used in the proofs of Theorems~\ref{th:BLP:generate}
and \ref{thm:finite:gen}, and Lemma~\ref{lemma:idem}.

We start with a motivating example to illustrate techniques that we will use.
\begin{exampleX}\label{ex:motivate}\examplefont
Let us consider all idempotent binary operations $D^2\rightarrow D$ for the set of labels $D=\{0,1\}$. 
In total, there are four such operations: $g_{00},g_{01},g_{10},g_{11}$ where $g_{ab}$ for $a,b\in D$ denotes the operation with
\begin{eqnarray*}
g_{ab}(0,0)=0, \quad
g_{ab}(0,1)=a, \quad
g_{ab}(1,0)=b, \quad
g_{ab}(1,1)=1.
\end{eqnarray*}
The operations $g_{01}$ and $g_{10}$ are the two binary projections $e^{(2)}_1$ and
$e^{(2)}_2$ to the first and second coordinate respectively.
The operations $g_{00}$ and $g_{11}$ are the $\min$ and $\max$ operations
with respect to the natural order on $D$.

Suppose that $\Gamma$ admits a binary fractional polymorphism $\omega=\frac{1}{3}\chi_{g_{00}}+\frac{2}{3}\chi_{g_{01}}$
(so that one operation in $\supp(\omega)$ is symmetric and the other is not), and we want to prove that $\Gamma$ admits
a symmetric binary fractional polymorphism. For  a function $f\in\Gamma$ and labellings $x,y\in\dom f$ 
we can write 
%
\begin{eqnarray}
\frac{f(x)+f(y)}{2}
&\ =\ &\frac{1}{2}\frac{f(x)+f(y)}{2} + \frac{1}{2}\frac{f(y)+f(x)}{2} \nonumber \\
&\ge& \frac{1}{2}\sum_{g\in\supp(\omega)}\omega(g) f(g(x,y)) + \frac{1}{2}\sum_{g\in\supp(\omega)}\omega(g) f(g(y,x)) \nonumber \\
&=& \frac{1}{3} f(g_{00}(x,y)) + \frac{2}{3}\frac{f(x) + f(y)}{2} \label{eq:example:inequality1}
\end{eqnarray}
%
This inequality means that vector $\rho_1=\frac{1}{3}\chi_{g_{00}}+\frac{2}{3}\frac{\chi_{g_{01}} + \chi_{g_{10}}}{2}$
is a fractional polymorphism of $\Gamma$. 
This demonstrates how we can derive new fractional polymorphisms of a language from existing ones by taking superpositions.

To obtain a symmetric fractional polymorphism of $\Gamma$, we can now use two strategies:
\begin{itemize}
\item[(i)] Cancel terms in~\eqref{eq:example:inequality1}, obtaining inequality 
$\frac{1}{3}\frac{f(x)+f(y)}{2}\ge \frac{1}{3} f(g_{00}(x,y))$. This inequality means that
vector $\chi_{g_{00}}$ is a fractional polymorphism of $\Gamma$.
\item[(ii)] Take the last term $\frac{2}{3}\frac{f(x) + f(y)}{2}$ in~\eqref{eq:example:inequality1}
and apply $\omega$ to it again. This gives a new inequality corresponding to a fractional polymorphism 
$\rho_2=\frac{5}{9}\chi_{g_{00}}+\frac{4}{9}\frac{\chi_{g_{01}} + \chi_{g_{10}}}{2}$.
By repeating this process we obtain a sequence of vectors $\rho_1,\rho_2,\rho_3,\ldots$
that are fractional polymorphisms of $\Gamma$. The weight of operation $g_{00}$ in $\rho_i$
tends to 1 as $i$ tends to infinity; thus, by taking the limit we can prove that vector $\chi_{g_{00}}$ is a fractional
polymorphism of $\Gamma$.
\end{itemize}

\end{exampleX}

Observe that in the construction in Example~\ref{ex:motivate} operations $g_{01}$ and $g_{10}$ always had the same weight in $\rho_i$.
Therefore, we were working with a subset of all possible fractional polymorphisms. 
This example motivates definitions given below.
%
%

\subsection{Generalised fractional polymorphisms}\label{sec:gfPol}

The construction of new fractional polymorphisms is based on the idea of
grouping operations in $\calO^{(m)}$ together into what we will call \emph{collections}
and working with fractional operations that assign the same weight to every operation
in a collection.
We will consider two types of collections: \emph{ordered} and \emph{unordered}.
Ordered collections are finite sequences of operations from $\calO^{(m)}$ and
unordered collections are subsets of $\calO^{(m)}$.

Let $\mathbb G$ be a fixed set of collections.
We will always assume that all collections in $\mathbb G$ are of the same type,
i.e., either ordered or unordered. (In Example~\ref{ex:motivate} above we would use $\mathbb G=\{\{g_{00}\},\{g_{11}\},\{g_{01},g_{10}\}\}$.)
For a collection ${\bf g} \in \mathbb G$, we let $|{\bf g}|$ denote its \emph{size}, i.e.,
the cardinality of the set, or the length of the sequence, depending on its type.
We write $\sum_{g \in{\bf g}}$ to denote a sum over all components of the (ordered or unordered) collection ${\bf g}$;
one has e.g.\  $\sum_{g \in{\bf g}}1=|{\bf g}|$.

For a given $\mathbb G$, we define a \emph{generalised fractional polymorphism}
$\rho$ of a language $\Gamma$ as a probability distribution over $\mathbb G$ such that,
for every cost function $f \in \Gamma$,
\begin{equation}\label{eq:gfpol}
\sum_{{\bf g}\in\mathbb G} \rho({\bf g})\sum_{g\in{\bf g}}\frac{1}{|{\bf
g}|}f(g(x^1,\ldots,x^m)) \leq f^m(x^1,\ldots,x^m), \qquad\forall
x^1,\ldots,x^m\in \dom f.
\end{equation}

To simplify notation, we will write~\eqref{eq:gfpol} as
\begin{equation}\label{eq:gfpolalt}
\sum_{{\bf g}\in\mathbb G} \rho({\bf g})f^{|\bf g|}({\bf g}(x^1,\dots,x^m)) \leq
 f^m(x^1,\dots,x^m), \qquad\forall x^1,\dots,x^m\in \dom f,
\end{equation}
where, for an unordered collection ${\bf g}=\{g_1,\dots,g_k\}$ of $m$-ary
operations, the application ${\bf g}(x^1,\dots,x^m)$ denotes the set of
labellings $\{g_1(x^1,\dots,x^m),\dots,g_k(x^1,\dots,x^m)\}$, and we define
$f^k(\{y^1,\dots,y^k\})\eqdef f^k(y^1,\dots,y^k)$ for labellings
$y^1,\dots,y^k\in D^n$. Similarly, for an ordered collection ${\bf
g}=(g_1,\dots,g_k)$ of $m$-ary operations, the application ${\bf
g}(x^1,\dots,x^m)$ denotes the sequence of labellings
\[
(g_1(x^1,\dots,x^m),\dots,g_k(x^1,\dots,x^m)).\]

As for fractional operations, we
define the $\supp(\rho)$ as the set of collections ${\bf g}$ for which $\rho({\bf g}) > 0$,
and the vector $\chi_{\bf g}$ as the vector that assigns weight $1$ to the collection ${\bf g}$
and $0$ to all other collections.

Note that $\rho$ is a generalised fractional polymorphism of $\Gamma$ if and only if
$\omega=\sum_{{\bf g}\in{\mathbb G}} \rho({\bf g}) \sum_{g\in{\bf g}} \frac{1}{|{\bf g}|} \chi_g$ is an $m$-ary fractional polymorphism of $\Gamma$.

\paragraph{Terminology for ordered collections} Observe that an ordered collection ${\bf g}=(g_1,\ldots,g_k)$ is a mapping $D^m\rightarrow
D^k$.
Denote the set of such mappings by $\calO^{(m\rightarrow k)}$. If $\mathbb G$ is a subset of $\calO^{(m\rightarrow k)}$,
then a generalised fractional polymorphism $\rho$ over $\mathbb G$ will be called a 
\emph{generalised fractional polymorphism of arity $m\rightarrow k$}.
We can identify fractional polymorphisms of arity $m$ with generalised fractional polymorphisms of arity $m\rightarrow 1$.

\subsection{Constructing generalised fractional polymorphisms}
Our goal will be to construct a generalised fractional polymorphism $\rho$
such that all collections ${\bf g}\in\supp(\rho)$ satisfy some desired property.
Let $\mathbb G^\ast\subseteq\mathbb G$ be the set of such ``good'' collections.
We consider an {\em expansion operator} {\sf Exp} that takes
a collection ${\bf g}\in\mathbb G$ and produces a probability distribution $\rho$ over $\mathbb G$.
We say that  {\sf Exp} is \emph{valid} for a language $\Gamma$ if,
for any $f\in\Gamma$
and any ${\bf g}\in\mathbb G$, the probability distribution $\rho={\sf Exp}(\bf g)$ satisfies
\begin{equation}\label{eq:valid}
\sum_{{\bf h}\in\mathbb G} \rho({\bf h})f^{|\bf h|}({\bf h}(x^1,\ldots,x^m))
\le f^{|{\bf g}|}({\bf g}(x^1,\ldots,x^m)), \qquad\forall x^1,\ldots,x^m\in \dom f.
\end{equation}
We say that the operator {\sf Exp} is \emph{non-vanishing} (with respect to the pair $(\mathbb G,\mathbb G^\ast)$)
if, for any ${\bf g}\in\mathbb G$, there exists a sequence of collections ${\bf g}_0,{\bf g}_1,\ldots,{\bf g}_r$
such that ${\bf g}_0={\bf g}$, ${\bf g}_{i+1}\in\supp({\sf Exp}({\bf g}_i))$ for $i=0,\ldots,r-1$,
 and ${\bf g}_r\in\mathbb G^\ast$.
 
The main result of this section is the following.

\begin{lemmaX}[``Expansion Lemma'']
Let {\sf Exp} be an expansion operator which is valid for the
language $\Gamma$ and non-vanishing with respect to $(\mathbb G,\mathbb G^\ast)$.
If $\Gamma$ admits a generalised fractional polymorphism $\rho$ 
with $\supp(\rho) \subseteq \mathbb G$, then it also admits a generalised fractional polymorphism
$\rho^\ast$ with $\supp(\rho^\ast) \subseteq \mathbb G^\ast$.
\label{lemma:expansion}
\end{lemmaX}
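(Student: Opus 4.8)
The plan is to prove this by iterating the expansion operator and taking a limit, mirroring strategy (ii) in Example~\ref{ex:motivate}. First I would fix a function $f\in\Gamma$ and think of a generalised fractional polymorphism as a probability distribution $\rho$ over the (countable, or finite if we restrict attention) set $\mathbb G$; the defining inequality~\eqref{eq:gfpolalt} says that for all $x^1,\dots,x^m\in\dom f$ the ``expected cost'' $\sum_{{\bf g}}\rho({\bf g})f^{|{\bf g}|}({\bf g}(x^1,\dots,x^m))$ is at most $f^m(x^1,\dots,x^m)$. The key observation is that validity of {\sf Exp} means precisely that replacing a single collection ${\bf g}$ in a generalised fractional polymorphism by the distribution ${\sf Exp}({\bf g})$ preserves the defining inequality: if $\rho$ is a generalised fractional polymorphism and we set $\rho' = \rho - \rho({\bf g})\chi_{\bf g} + \rho({\bf g})\,{\sf Exp}({\bf g})$, then $\rho'$ is again a generalised fractional polymorphism, because we only weakened (or kept) one summand via~\eqref{eq:valid}. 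More generally we may simultaneously expand every collection in the support, obtaining the ``one-step'' operator $\rho\mapsto T(\rho) := \sum_{{\bf g}\in\mathbb G}\rho({\bf g})\,{\sf Exp}({\bf g})$, which maps generalised fractional polymorphisms of $\Gamma$ to generalised fractional polymorphisms of $\Gamma$.

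Next I would iterate: starting from the given $\rho$ with $\supp(\rho)\subseteq\mathbb G$, form $\rho_0 = \rho$ and $\rho_{k+1} = T(\rho_k)$. Each $\rho_k$ is a generalised fractional polymorphism of $\Gamma$. Now I want to control the total weight sitting outside $\mathbb G^\ast$. Here I would only expand collections that are not yet in $\mathbb G^\ast$ — i.e.\ use the modified operator that fixes every ${\bf g}\in\mathbb G^\ast$ and applies {\sf Exp} only to ${\bf g}\notin\mathbb G^\ast$ — so that weight that has reached $\mathbb G^\ast$ stays there. The non-vanishing hypothesis says every ${\bf g}\in\mathbb G$ has a finite path of length $r_{\bf g}$ into $\mathbb G^\ast$ along branches of positive probability; along such a path each step carries some fixed positive fraction of the weight, so after $r_{\bf g}$ steps a positive fraction $\varepsilon_{\bf g}>0$ of the weight initially at ${\bf g}$ has been transferred into $\mathbb G^\ast$. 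Consequently, if $\mathbb G$ is finite, there is a uniform $\varepsilon>0$ and a uniform bound $R$ on path lengths, and the weight outside $\mathbb G^\ast$ is multiplied by at most $(1-\varepsilon)$ every $R$ steps; hence $\rho_k$ restricted to $\mathbb G\setminus\mathbb G^\ast$ has total mass tending to $0$.

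Finally I would pass to the limit. The sequence $(\rho_k)$ lives in the compact simplex of probability distributions over $\mathbb G$ (in the finite case) or can be handled by a diagonal/compactness argument in the countable case after first reducing to a finite sub-language and a finitely-supported situation as in the proof of Theorem~\ref{th:BLP:general}; extract a convergent subsequence with limit $\rho^\ast$. Since~\eqref{eq:gfpolalt} is a closed condition (for each fixed $f$ and each fixed tuple $x^1,\dots,x^m$ it is a non-strict linear inequality, preserved under limits), $\rho^\ast$ is a generalised fractional polymorphism of $\Gamma$, and by the mass-decay estimate $\rho^\ast$ is supported on $\mathbb G^\ast$. The main obstacle I anticipate is the bookkeeping needed to make the ``non-vanishing implies uniform decay of the bad mass'' step rigorous when $\mathbb G$ is infinite: one needs either a uniform bound on path lengths and branch probabilities, or to reduce to finitely many collections first (e.g.\ by working with a finite $\Gamma'\subseteq\Gamma$ and noting the relevant collections form a finite set), and then to argue that the limit is genuinely supported on $\mathbb G^\ast$ rather than merely having vanishing weight on each individual bad collection. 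The alternating "cancel terms" trick (strategy (i)) does not obviously apply in general, so the limiting argument is the right tool here.
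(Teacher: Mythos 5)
Your proof is correct, but it takes a genuinely different route from both of the paper's proofs. You iterate a modified expansion operator (fixing collections already in $\mathbb G^\ast$, applying ${\sf Exp}$ elsewhere), show that the mass on $\mathbb G\setminus\mathbb G^\ast$ decays geometrically via a Markov-chain-style bound derived from the non-vanishing hypothesis, and extract a limit point using compactness of the simplex and closedness of the set of generalised fractional polymorphisms. This is a direct realisation of ``strategy (ii)'' that the motivating example hints at. The paper, by contrast, gives \emph{two} proofs, neither of which is a limit-along-iterates argument: a constructive one that builds a finite tree by repeated expansion of uncovered leaves and then prunes covered subtrees (closer in spirit to ``strategy (i)''), and a short non-constructive one that picks the $\rho^\ast$ maximising $\rho^\ast(\mathbb G^\ast)$ over the compact set $\Omega$ of admissible generalised fractional polymorphisms and derives a contradiction from non-vanishing if $\supp(\rho^\ast)\not\subseteq\mathbb G^\ast$, by pushing mass one step along a single path. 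The extremal argument buys you brevity---no rate analysis, no limit of a sequence, no need for a uniform lower bound $\varepsilon$ on transition probabilities or a uniform bound $R$ on path lengths. Your dynamical argument buys you a more transparent mechanism and is perhaps closer to algorithmic intuition, at the cost of the bookkeeping you flag. The constructive proof buys explicitness: it terminates after finitely many expansion and pruning rounds and produces a concrete $\rho^\ast$.

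On the one worry you raise: $\mathbb G$ is finite in every application in the paper (collections are subsets of, or bounded-length sequences in, $\calO^{(m)}$ over a finite domain $D$), and the paper's proofs implicitly rely on this too --- the non-constructive proof speaks of $\Omega$ as a compact subset of $\mathbb R^{|\mathbb G|}$, and the constructive one of the tree being finite. So your concern is legitimate as stated abstractly, but the finiteness you need is present; there is no need to pass to finite $\Gamma'\subseteq\Gamma$ to handle $\mathbb G$, only to handle the (possibly countably infinite) conjunction of closed conditions indexed by $f\in\Gamma$, which is fine since an arbitrary intersection of closed sets is closed. One small detail to flush out in the general-valued case: the limit argument implicitly uses that $\rho_k({\bf h})=0$ for every ${\bf h}$ with $f^{|{\bf h}|}({\bf h}(x^1,\dots,x^m))=\infty$ for some $f$ and $x^1,\dots,x^m\in\dom f$, which is forced by each $\rho_k$ being a generalised fractional polymorphism, so the same vanishing passes to $\rho^\ast$ and the defining inequality is preserved in the limit.
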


The example below demonstrates how this lemma can be used.
\renewcommand{\theexampleRESTATED}{\ref{ex:motivate}}
\begin{exampleRESTATED}[revisited]
Recall that in this example we have a language $\Gamma$ over $D=\{0,1\}$
that admits a fractional polymorphism $\omega=\frac{1}{3}\chi_{g_{00}} + \frac{2}{3}\chi_{g_{01}}$.
Let us show how we can use the Expansion Lemma to derive the fact
that $\Gamma$ admits a symmetric fractional polymorphism.

Let $\mathbb G=\{\{g_{00}\},\{g_{11}\},\{g_{01},g_{10}\}\}$
and $\mathbb G^\ast=\{\{g_{00}\},\{g_{11}\}\}$. Note that vector $\rho=\chi_{\{g_{01},g_{10}\}}$
is a generalised fractional polymorphism of $\Gamma$.
Define expansion operator {\sf Exp} as follows:
\begin{eqnarray*}
{\sf Exp}(\{g_{00}\})&=&\sum_{g\in\supp(\omega)}\omega(g)\chi_{\{g[g_{00},g_{00}]\}}=\chi_{\{g_{00}\}} \\
{\sf Exp}(\{g_{11}\})&=&\sum_{g\in\supp(\omega)}\omega(g)\chi_{\{g[g_{11},g_{11}]\}}=\chi_{\{g_{11}\}} \\
{\sf Exp}(\{g_{01},g_{10}\})&=&\sum_{g\in\supp(\omega)}\omega(g)\chi_{\{g[g_{01},g_{10}],g[g_{10},g_{01}]\}}
=\frac{1}{3}\chi_{\{g_{00}\}} + \frac{2}{3} \chi_{\{g_{01},g_{10}\}}
\end{eqnarray*}
It can be verified {\sf Exp}
 is valid for $\Gamma$ and non-vanishing with respect to $(\mathbb G,\mathbb G^\ast)$;
we leave this to the reader. By the Expansion Lemma, $\Gamma$ admits a generalised fractional polymorphism
$\rho^\ast$ with $\supp(\rho^\ast)\subseteq\mathbb G^\ast$, i.e.\ $\rho^\ast=\rho^\ast_{00}\chi_{\{g_{00}\}}+\rho^\ast_{11}\chi_{\{g_{11}\}}$
for some $\rho^\ast_{00},\rho^\ast_{11}\ge 0$ with $\rho^\ast_{00}+\rho^\ast_{11}=1$.
This means that vector $\rho^\ast_{00}\chi_{g_{00}}+\rho^\ast_{11}\chi_{g_{11}}$ is  a (symmetric) fractional polymorphism of $\Gamma$.

Note that in the original Example~\ref{ex:motivate} we claimed a stronger property,
namely that $\Gamma$ admits a fractional polymorphism $\chi_{g_{00}}$.
This can be established by taking $\mathbb G=\{\{g_{00}\},\{g_{01},g_{10}\}\}$ and
$\mathbb G^\ast=\{\{g_{00}\}\}$; the expansion operator for these collections is defined as above.
\end{exampleRESTATED}

The lemma will be used for constructing the desired fractional polymorphisms
in Theorem~\ref{th:BLP:generate} and Lemma~\ref{lemma:idem}; their
proofs consist of exhibiting an appropriate pair $(\mathbb G,\mathbb G^\ast)$ and
an expansion operator ${\sf Exp}$. 
We will also exploit the lemma in the proof of Theorem~\ref{thm:finite:gen}. However, on its own it will not be
enough, and we will need an additional method for constructing symmetric fractional polymorphisms.
Considering Example~\ref{example:STPcycle}, this should not come as a surprise: 
Theorem~\ref{thm:finite:gen} holds only for finite-valued languages,
while Lemma~\ref{lemma:expansion} is valid for all general-valued languages.

We will now give two proofs of Lemma~\ref{lemma:expansion}.
The first one is constructive: it shows how to obtain $\rho^\ast$ from $\rho$
using a finite number of steps. The second one is shorter but non-constructive.
Roughly speaking, the constructive proof uses strategy (i) from Example~\ref{ex:motivate},
while the non-constructive one uses strategy (ii).

\subsection{Constructive proof of the Expansion Lemma} 

In this subsection, we give a constructive proof of the Expansion Lemma 
based on a node-weighted tree generated by repeated applications of the expansion operator.

\begin{proof}(of Lemma~\ref{lemma:expansion})
The proof goes via an explicit construction of a node-weighted tree.
Each node of the tree contains a collection ${\bf g} \in {\mathbb G}$, and we will
use ${\bf g}$ to denote both the node and the collection of operations it contains.
We say that two nodes are \emph{equal as collections} if the collections they contain are the same.
Each node also carries a (strictly) positive weight, denoted by $w({\bf g})$.
We say that a node \emph{${\bf g}$ is covered (by ${\bf h}$)},
and that ${\bf h}$ is a \emph{covering node (of ${\bf g}$)} if 
${\bf g}$ is a descendant of ${\bf h}$ and ${\bf g}$ and ${\bf h}$ are equal as collections.
We say that ${\bf h}$ is a \emph{minimal} covering node if no descendant of ${\bf h}$ is a covering node.

The root of the tree will be denoted by ${\bf p}$.
If the collections in $\mathbb G$ are unordered, then we let
${\bf p} = \{e^{(m)}_1, \dots, e^{(m)}_m\}$ be the set of $m$-ary projections.
If the collections are ordered,
then we let ${\bf p} = (e^{(m)}_1, \dots, e^{(m)}_m)$.
If ${\bf p}$ is not in $\mathbb G$,
then we can augment $\mathbb G$ with ${\bf p}$ and
define ${\sf Exp}({\bf p})$ to be $\rho$.
In both cases, 
since $\Gamma$ admits $\rho$,
the augmented expansion operator remains non-vanishing and valid for $\Gamma$.

The construction is performed in two steps. In the first step, the \emph{expansion},
a tree with root ${\bf p}$ is constructed. 
At the end of this step, the tree will have leaves that are either in
${\mathbb G}^\ast$ or that are covered.
From the construction, it will immediately follow that the leaves induce a
generalised fractional polymorphism of $\Gamma$.
In the second step, the \emph{pruning}, certain parts of the tree will be cut down
to remove all leaves that are not in ${\mathbb G}^\ast$.
We then prove that the set of leaves remaining 
after the pruning step still induces a generalised fractional polymorphism of $\Gamma$.

The expansion step is carried out as follows.
\begin{itemize}
\item
While there exists a leaf ${\bf g}$ in the tree that is in ${\mathbb G} \setminus {\mathbb G}^\ast$ 
and not covered,
add the set $\supp({\sf Exp}({\bf g}))$ as children to ${\bf g}$ with the weight
of each added node ${\bf h}$
given by $w(\bf g) \cdot {\sf Exp}({\bf g})({\bf h})$.
\end{itemize}

First, we argue that the expansion step terminates after a finite number of applications
of the expansion operator.
Assume to the contrary that the expansion generates an infinite tree.
Since there is a finite number of $m$-ary operations, $|{\mathbb G}|$ is finite, and hence
it follows that the tree has an infinite path ${\bf g}_0, {\bf g}_1, \dots$,
descending from the root.
Therefore, there exist $i < j$ such that ${\bf g}_i = {\bf g}_j$ as collections.
If ${\bf g}_j \in {\mathbb G}^\ast$, then ${\bf g}_i$ would never have been
expanded, so we may assume that ${\bf g}_j \not\in {\mathbb G}^\ast$.
But then ${\bf g}_j$ is covered by ${\bf g}_i$, hence ${\bf g}_j$ would never have been
expanded.
We have reached a contradiction and it follows that the tree generated by the expansion is finite.

Let $T$ be the tree generated by the expansion step.
For a node ${\bf g}$ in $T$, let $L({\bf g})$ be the set of leaves of the sub-tree rooted at ${\bf g}$
and define the vector $\nu_{\bf g} \in \R_{\geq 0}^{\mathbb G}$ by
$\nu_{\bf g} = \sum_{{\bf h} \in L({\bf g})} (w({\bf h})/w({\bf g})) \chi_{\bf h}$.
We claim that the following three properties are satisfied:

\begin{enumerate}[(a)]
\item\label{inv:new}
For every node ${\bf g}$ in $T$, the vector $\nu_{\bf g}$ is a probability distribution on $\mathbb G$
such that for every function $f\in\Gamma$, and all
  tuples $x^1, \dots, x^m \in \dom f$,
\begin{equation}\label{eq:g}
\sum_{{\bf h}\in \mathbb G} \nu_{\bf g}({\bf h})f^{|\bf h|}({\bf h}(x^1,\ldots,x^m))
\le f^{|{\bf g}|}({\bf g}(x^1,\ldots,x^m)).
\end{equation}
\item\label{inv:cov}
Every leaf in $T$ is either a member of ${\mathbb G}^\ast$ or covered.
\item\label{inv:leaves}
Every sub-tree of $T$ contains a leaf that is not covered by a node in that sub-tree.
\end{enumerate}

Property (a) follows by repeated application of (\ref{eq:valid}), with
special care taken to the case $\nu_{\bf p}$, where the fact that $\rho$ satisfies
(\ref{eq:gfpolalt}) is also used.
Property (b) holds trivially when the expansion step terminates.
To see that  (c) holds,
assume to the contrary that, after the expansion step,
there is a sub-tree rooted at ${\bf g}$ for which every leaf is in
$\mathbb G \setminus \mathbb G^\ast$.
Pick any node ${\bf g}_0$ in this sub-tree, and for $i \geq 0$, arbitrarily
pick ${\bf g}_{i+1} \in \supp({\sf Exp}({\bf g}_i))$.
Then, each ${\bf g}_i$ contains a collection that already exists in the sub-tree,
hence the sequence ${\bf g}_0, {\bf g}_1, \dots$ never encounters a collection in $\mathbb G^\ast$.
This is a contradiction since ${\sf Exp}$ is assumed to be non-vanishing.
Hence, (\ref{inv:leaves}) holds after the expansion step.

The pruning step is carried out as follows.
\begin{itemize}
\item
While there exists a covered leaf in the tree,
pick a minimal covering node ${\bf g}$ and let $T_{\bf g}$ be the
sub-tree rooted at ${\bf g}$.
Write $\nu_{\bf g}$ as
$\nu_{\bf g} = (1-\kappa)\cdot \chi_{\bf g} + \nu_\bot$, where
$1-\kappa = \nu_{\bf g}({\bf g})$ so that $\nu_\bot({\bf g}) = 0$.
Remove all nodes below ${\bf g}$ from the tree and, for each collection
${\bf h} \in \mathbb G$ such that $\nu_\bot({\bf h}) > 0$, 
add a new leaf ${\bf h}'$ as a child to ${\bf g}$ containing the collection ${\bf h}$ and with
weight
$w({\bf h}') = w({\bf g}) \frac{1}{\kappa} \nu_\bot({\bf h})$.
\end{itemize}

We refer to each choice of a minimal covering node ${\bf g}$ and the subsequent 
restructuring of $T_{\bf g}$ as a \emph{round} of pruning.
Below, we prove that the properties (\ref{inv:new}--\ref{inv:leaves}) are invariants
that hold before and after each round of pruning.
This has the following consequences.
Since (\ref{inv:leaves}) holds before each round,
some leaf ${\bf h}$ of ${\bf g}$ is different from ${\bf g}$ as a collection,
and since $w({\bf h})$ is strictly positive,
$\kappa > 0$.
Therefore the new weights in the pruning step are defined.
Each round of pruning decreases the size of the tree by at least one, so the pruning step
eventually terminates.
After the pruning step has terminated, let $\rho^\ast$ be $\nu_{\bf p}$.
By (\ref{inv:new}), $\rho^\ast$ is a generalised
fractional polymorphism of $\Gamma$,
and by (\ref{inv:cov}), every leaf must contain a collection in $\mathbb{G}^\ast$,
so $\supp(\rho^\ast) \subseteq \mathbb G^\ast$.

We finish the proof by showing that (\ref{inv:new}--\ref{inv:leaves}) hold after a
round of pruning that picks a minimal covering node ${\bf g}$, 
assuming that they held before the round.
By (\ref{inv:new}) and noting that $\sum_{{\bf h} \in \mathbb G} \chi_{\bf g}({\bf h}) f^{|{\bf h}|}({\bf h}(x^1,\dots,x^m)) = f^{|{\bf g}|}({\bf g}(x^1,\dots,x^m))$, we have

\begin{align}\label{eq:chiplusnu}
\sum_{{\bf h} \in \mathbb G} \nu_{\bf g}({\bf h}) f^{|{\bf g}|}({\bf g}(x^1, \dots, x^m)) &=
\sum_{{\bf h} \in \mathbb G} ((1-\kappa) \cdot \chi_{\bf g}+\nu_\bot)({\bf h}) f^{|\bf h|}({\bf h}(x^1, \dots, x^m)) \notag \\
&\geq
\sum_{{\bf h} \in \mathbb G} ((1-\kappa) \cdot \nu_{\bf g}+ \nu_\bot)({\bf h}) f^{|\bf h|}({\bf h}(x^1, \dots, x^m)),
\end{align}
for all $f \in \Gamma$ and $x^1,\dots,x^m\in\dom f$.
Since $\kappa > 0$, inequality (\ref{eq:chiplusnu}) is equivalent to
\begin{equation}\label{eq:pruning-ineq}
\sum_{{\bf h} \in \mathbb G} \nu_{\bf g}({\bf h}) f^{|\bf h|}({\bf h}(x^1, \dots, x^m)) \geq
  \sum_{{\bf h} \in \mathbb G} \frac{1}{\kappa} \nu_\bot({\bf h}) f^{|{\bf h}|}({\bf h}(x^1, \dots, x^m)).
\end{equation}

The round of pruning only affects the vector $\nu_{{\bf g}'}$ for nodes ${\bf g'}$ that lie on the path from
the root to ${\bf g}$.
Let ${\bf g}'$ be such a node and
let $\nu'_{{\bf g}'}$ be the altered function after the round.
Let $C = w({\bf g})/w({\bf g}')$.
Then, $\nu'_{{\bf g}'} = \nu_{{\bf g}'} - C \nu_{\bf g} + C \frac{1}{\kappa}\nu_{\bot}$
and $\nu'_{{\bf g}'}$ can easily be verified to be a probability distribution on $\mathbb G$, and

\begin{align}
f^{|{\bf g}'|}({\bf g}'(x^1, \dots, x^m)) &
\geq 
\sum_{{\bf h} \in \mathbb G} (\nu_{{\bf g}'}-C \nu_{{\bf g}}+C\nu_{{\bf g}})({\bf h}) f^{|{\bf h}|}({\bf h}(x^1,\dots,x^m)) \notag \\
& \geq
\sum_{{\bf h} \in \mathbb G} (\nu_{{\bf g}'}-C\nu_{{\bf g}}+C\frac{1}{\kappa}\nu_{\bot})({\bf h}) f^{|{\bf h}|}({\bf h}(x^1,\dots,x^m)) \notag \\
& = 
\sum_{{\bf h} \in \mathbb G} \nu'_{{\bf g}'}({\bf h}) f^{|{\bf h}|}({\bf h}(x^1,\dots,x^m)),
\end{align}
so (\ref{inv:new}) holds after the round.
%

Before the round of pruning, using (\ref{inv:cov}) and the fact that ${\bf g}$ is a minimal covering node,
every leaf in $T_{\bf g}$ is either equal to ${\bf g}$ as collections, 
is a member of ${\mathbb G}^\ast$, or is covered by a node above ${\bf g}$ in the tree.
Therefore, every new child ${\bf h}'$ added to ${\bf g}$ in the pruning is either a member
of ${\mathbb G}^\ast$ or is still covered by some node above ${\bf g}$, so (\ref{inv:cov}) holds
after the round.

The round of pruning only affects the leaves of the sub-trees $T_{{\bf g}'}$ that contain ${\bf g}$.
Before the round,
by (\ref{inv:leaves}), the sub-tree $T_{{\bf g}'}$ either contains a leaf in ${\mathbb G}^\ast$
or a leaf that is covered by a node above ${{\bf g}'}$ in the tree.
If such a leaf ${\bf h}$ is also a leaf of $T_{\bf g}$, and therefore potentially altered in
the round of pruning,
then ${\bf h}$ is different from ${\bf g}$ as collections, hence ${\bf g}$ will have a new child ${\bf h}'$
with the same property.
It follows that (\ref{inv:leaves}) holds after the round.
\end{proof}

\subsection{Non-constructive proof of the Expansion Lemma} 
We now give a non-constructive proof of the Expansion Lemma.
\begin{proof}(of Lemma~\ref{lemma:expansion})
Let $\Omega$ be the set of generalised fractional polymorphisms $\rho$
of $\Gamma$ with $\supp(\rho)\subseteq \mathbb G$; it is non-empty by the assumption of Lemma~\ref{lemma:expansion}.
Let us pick $\rho^\ast\in\Omega$ with the maximum value of $\rho^\ast(\mathbb G^\ast)\eqdef \sum_{{\bf g}\in\mathbb G^\ast}\rho({\bf g})$.
(Clearly, $\Omega$ is a compact set which is a subset of $\mathbb R^{|\mathbb G|}$, so the maximum is attained by some vector in $\Omega$).
We claim that $\supp(\rho^\ast)\subseteq\mathbb G^\ast$. Indeed, suppose that ${\bf g}_0\notin \mathbb G^\ast$ for some ${\bf g}_0\in\supp(\rho^\ast)$.
Since ${\sf Exp}$ is non-vanishing,
there exists a sequence ${\bf g}_0,{\bf g}_1,\ldots,{\bf g}_r$
such that  ${\bf g}_{i+1}\in\supp({\sf Exp}({\bf g}_i))$ for $i=0,\ldots,r-1$,
 ${\bf g}_0,\ldots,{\bf g}_{r-1}\notin\mathbb G^\ast$, and ${\bf g}_r\in\mathbb G^\ast$. Define a sequence of generalised fractional polymorphisms $\rho_0=\rho^\ast,\rho_1,\ldots,\rho_r$
as follows:
$$
\rho_{i+1}=\rho_i + \rho_i({\bf g}_i)\cdot(- \chi_{{\bf g}_i} +  \nu_i),\qquad \nu_i={\sf Exp}({\bf g}_i)
$$
for $i=0,\ldots,r-1$. We can prove by induction on $i$ that $\rho_i\in\Omega$, i.e.\ $\rho_i$ is a generalised fractional polymorphism of $\Gamma$.
Indeed, for any $f \in \Gamma$, and $x^1,\ldots,x^m\in \dom f$, we have
\begin{eqnarray*}
\sum_{{\bf h}\in\mathbb G} (-\chi_{{\bf g}_i}+\nu_i)({\bf h})f^{|\bf h|}({\bf h}(x^1,\ldots,x^m))
\le 0
\end{eqnarray*}
where we used eq.~\eqref{eq:valid} and the fact that $f^{|{\bf g}_i|}({\bf g}_i(x^1,\ldots,x^m))$ is finite
(since ${\bf g}_i\in\supp(\rho_i)$ and $\rho_i$ is a generalised fractional polymorphism of $f$). Therefore,

\begin{eqnarray*}
\sum_{{\bf h}\in\mathbb G} \rho_{i+1}({\bf h})f^{|\bf h|}({\bf h}(x^1,\ldots,x^m)) 
\le\sum_{{\bf h}\in\mathbb G} \rho_i({\bf h})f^{|\bf h|}({\bf h}(x^1,\ldots,x^m)) 
\le f^{m}({\bf g}(x^1,\ldots,x^m))
\end{eqnarray*}
(the last inequality is by the induction hypothesis). We proved that $\rho_i\in\Omega$ for all $i$.

By construction, $\rho^\ast(\mathbb G^\ast)=\rho_0(\mathbb G^\ast)\le \rho_1(\mathbb G^\ast)\le\ldots\le \rho_{r-1}(\mathbb G^\ast)<\rho_r(\mathbb G^\ast)$.
This contradicts the choice of $\rho^\ast$.
\end{proof}


\section{Second characterisation of general-valued languages}\label{sec:generate1}

In this section, we use the Expansion Lemma from Section~\ref{sec:construction} to prove
Theorem~\ref{th:BLP:generate}. 

\renewcommand{\thetheoremRESTATED}{\ref{th:BLP:generate}}
\begin{theoremRESTATED}[restated]
Suppose that, for every $n \geq 2$, 
$\Gamma$ admits a fractional polymorphism $\omega_n$
such that $\supp(\omega_n)$ generates a symmetric $n$-ary operation. 
Then, $\Gamma$ admits a symmetric fractional polymorphism of every arity $m \geq 2$.
\end{theoremRESTATED}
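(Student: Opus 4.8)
The goal is to construct, for each fixed $m \geq 2$, a symmetric fractional polymorphism of arity $m$ out of the given data: fractional polymorphisms $\omega_n$ whose supports generate symmetric $n$-ary operations. The plan is to fix $m$, choose a suitable auxiliary arity $n$ (to be determined), and apply the Expansion Lemma (Lemma~\ref{lemma:expansion}) with an appropriate pair $(\mathbb G, \mathbb G^\ast)$ and expansion operator ${\sf Exp}$. The collections in $\mathbb G$ will be \emph{ordered} collections of $m$-ary operations, i.e.\ we will be working with generalised fractional polymorphisms of arity $m \to k$ for appropriate $k$; the ``good'' set $\mathbb G^\ast$ should consist of (singleton collections containing) symmetric $m$-ary operations.

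\textbf{Setting up the collections.} First I would record the key algebraic fact: since $\supp(\omega_n)$ generates a symmetric $n$-ary operation $s$, we can write $s = t[g_1, \dots, g_r]$ as a superposition of projections and operations drawn from $\supp(\omega_n)$; taking superpositions of a fractional polymorphism with fixed operations again yields a fractional polymorphism (as illustrated in Example~\ref{ex:motivate} and formalised via the superposition $\omega[g_1,\dots,g_n]$). The idea is then to think of an $m$-ary operation $g$ and form the ordered collection consisting of all $m$-ary operations obtained by ``feeding'' $g$ through the generating process — more precisely, one works with collections of the form $(g \circ \pi_1, \dots, g \circ \pi_N)$ where the $\pi_i$ range over an appropriate set of coordinate maps $D^m \to D^m$, so that applying a generated symmetric operation componentwise to such a collection symmetrises $g$. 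The expansion operator ${\sf Exp}$ takes a collection and applies $\omega_n$ (in superposition form) to it; validity of ${\sf Exp}$ for $\Gamma$ then follows from the fact that each $\omega_n$ is a fractional polymorphism of $\Gamma$, exactly as in the verification left to the reader in the revisited Example~\ref{ex:motivate}.

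\textbf{The non-vanishing condition.} The crux is to choose $(\mathbb G, \mathbb G^\ast)$ and ${\sf Exp}$ so that ${\sf Exp}$ is non-vanishing: from any collection in $\mathbb G$ there must be a finite chain of expansions reaching a collection all of whose components are symmetric $m$-ary operations. This is where the ``tree cutting'' argument promised in the introduction enters, and I expect it to be the main obstacle. The difficulty is that a single application of $\omega_n$ does not symmetrise an operation; one must iterate, and one must argue that the iteration makes genuine progress toward symmetry and cannot cycle among non-symmetric collections forever. The natural approach is to set up a potential/measure of ``asymmetry'' of a collection — for instance, counting the number of distinct operations appearing, or tracking which permutations of arguments the collection is already invariant under — and to show that the generating expression for the symmetric $n$-ary operation forces this potential to strictly decrease along some branch of repeated expansions, while the finiteness of $\calO^{(m)}$ bounds everything. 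One will likely need to let the auxiliary arity $n$ grow (or iterate the construction over increasing $n$), using that $\Gamma$ admits $\omega_n$ for \emph{every} $n \geq 2$, so that sufficiently many coordinates are available to realise the required symmetrisation in one reachable leaf.

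\textbf{Conclusion.} Once $(\mathbb G, \mathbb G^\ast)$ and a valid non-vanishing ${\sf Exp}$ are in place, and once we exhibit \emph{some} starting generalised fractional polymorphism $\rho$ supported on $\mathbb G$ (the trivial one built from projections, as in the proof of Lemma~\ref{lemma:expansion}, works), the Expansion Lemma immediately yields a generalised fractional polymorphism $\rho^\ast$ supported on $\mathbb G^\ast$. Unwinding the identification between generalised fractional polymorphisms supported on singleton collections of symmetric operations and ordinary symmetric fractional polymorphisms (the remark that $\rho$ is a generalised fractional polymorphism iff $\omega = \sum_{\bf g} \rho({\bf g}) \sum_{g \in {\bf g}} \frac{1}{|{\bf g}|}\chi_g$ is a fractional polymorphism), this $\rho^\ast$ gives a symmetric fractional polymorphism of $\Gamma$ of arity $m$. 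Since $m \geq 2$ was arbitrary, the theorem follows.
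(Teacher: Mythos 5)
The high-level strategy you outline (fix $m$, set up a pair $(\mathbb G, \mathbb G^\ast)$, define an expansion operator via superposition with $\omega_n$, verify validity and non-vanishing, apply the Expansion Lemma, and then unwind singleton collections into an ordinary symmetric fractional polymorphism) matches the paper. However, the concrete choices that make the argument go through are missing, and one of your intuitions about the non-vanishing step is actually wrong.

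The crucial choice in the paper is to take $\mathbb G$ to be the set of orbits of $m$-ary operations under coordinate permutations: $g \sim g'$ iff $g(x_1,\dots,x_m)=g'(x_{\pi(1)},\dots,x_{\pi(m)})$ for some $\pi\in S_m$, with $\mathbb G^\ast$ the singleton orbits (i.e.\ symmetric operations). You gesture at ``collections of the form $(g\circ\pi_1,\dots,g\circ\pi_N)$'' but do not pin this down, and (more importantly) you do not identify the link that makes everything work: when expanding a collection ${\bf g}$, the correct auxiliary arity is exactly $n=|{\bf g}|$, the size of the orbit. With that choice, if $\supp(\omega_n)$ generates a symmetric $n$-ary $t$ and $\{g_1,\dots,g_n\}$ is a $\sim$-orbit, then $t[g_1,\dots,g_n]$ is a symmetric $m$-ary operation, because any $\pi\in S_m$ merely permutes the arguments of $t$, and $t$ is symmetric. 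This is what gives the target element of $\mathbb G^\ast$ inside $\supp(\nu_d)$.

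Your treatment of non-vanishing is the clearest gap. You say one must ``argue that the iteration makes genuine progress toward symmetry and cannot cycle among non-symmetric collections forever'' and propose a potential function. This is both unnecessary and, as a global claim, false: repeated expansions \emph{can} revisit non-symmetric collections. The Expansion Lemma is set up precisely to tolerate cycles; the constructive proof's pruning step (the ``tree cutting'' you mention) handles them, and the non-constructive proof handles them by a fixed-point/maximality argument. For Theorem~\ref{th:BLP:generate} itself, non-vanishing only requires exhibiting \emph{one} finite chain of expansions from each ${\bf g}$ reaching $\mathbb G^\ast$, and the paper produces this chain directly by unrolling the finite superposition tree that writes $t$ in terms of $\supp(\omega_n)$ and projections (the sets $\mathcal A_0\subseteq\mathcal A_1\subseteq\dots$), tracking that $t[g_1,\dots,g_n]\in\supp(\nu_d)$. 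No potential-function argument and no monotone-progress claim are needed. Also note the paper works with \emph{unordered} collections and must additionally verify that each $\nu_i$ is weight-symmetric in order for the expansion to stay inside $\mathbb G$; your proposal, working with ordered collections, skips this bookkeeping but would need a corresponding consistency check.
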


\begin{proof}
The proof is an application of Lemma~\ref{lemma:expansion}.
Fix some arbitrary arity $m \geq 2$.
Let $\sim$ denote the following equivalence relation on the
set $\calO^{(m)}$ of $m$-ary operations on $D$:
\[
g \sim g' \Leftrightarrow \exists \pi \in S_m : g(x_1,\dots,x_m) = g'(x_{\pi(1)},\dots,x_{\pi(m)}).
\]

Let $\mathbb G$ be the set of equivalence classes of the relation $\sim$ and
let ${\mathbb G}^\ast$ be the set of all equivalence classes ${\bf g} \in \mathbb G$
for which $|{\bf g}| = 1$, i.e., the set of equivalence classes containing a single symmetric
operation.
We say that a fractional operation $\nu$ is \emph{weight-symmetric} if
$\nu(g) = \nu(g')$ whenever $g \sim g'$.
A weight-symmetric fractional operation $\nu$ induces a probability distribution $\rho$ on $\mathbb G$:
$\rho({\bf g}) = \nu(g)/|{\bf g}|$, where $g$ is any of the operations in ${\bf g}$.

We now define the expansion operator ${\sf Exp}$ by giving its result when
applied to an arbitrary ${\bf g} \in \mathbb G$.
Let $n = |{\bf g}|$ and let $\omega$ be a $k$-ary fractional polymorphism of $\Gamma$
such that $\supp(\omega)$ generates a symmetric $n$-ary operation.

Define a sequence of $m$-ary weight-symmetric fractional operations
$\nu_0, \nu_1, \dots$, each with $\| \nu_i \|_1 = 1$, as follows.
Let $\nu_0 = \sum_{g \in {\bf g}} \frac{1}{|{\bf g}|} \chi_g$.
For $i \geq 1$, assume that $\nu_{i-1}$ has been defined.
Let $l_{i-1} = \min \{ \nu_{i-1}(g) \mid g \in \supp(\nu_{i-1}) \}$ be the
minimum weight of an operation in the support of $\nu_{i-1}$.
The fractional operation $\nu_i$ is obtained by subtracting from $\nu_{i-1}$
an equal weight from each operation in $\supp(\nu_{i-1})$ and
adding back this weight as superpositions of $\omega$ by all possible
choices of operations in $\supp(\nu_{i-1})$.
The amount subtracted from each operation is $\frac{1}{2} l_{i-1}$.
This implies that every collection in $\supp(\nu_{i-1})$ is also in $\supp(\nu_i)$.

Formally $\nu_i$ is defined as follows:
\[ 
\nu_i =
\nu_{i-1} -
\frac{l_{i-1}}{2} \chi_{i-1} +
\frac{l_{i-1}}{2} \eta_{i-1},
\] 
where $\displaystyle\chi_{i-1} = \frac{1}{|\supp(\nu_{i-1})|} \sum_{g \in \supp(\nu_{i-1})} \chi_g$ and,
with $K = |\supp(\nu_{i-1})|^{k}$,
\[
\eta_{i-1} = \frac{1}{K} \sum_{g_1,\dots,g_k \in \text{supp}(\nu_{i-1})}
\omega[g_1,\dots,g_k].
\]

By definition, $\nu \geq 0$ and $\|\nu_i\|_1 = \|\nu_{i-1}\|_1 = 1$.
To see that $\nu_i$ is weight-symmetric, it suffices to verify that
$\eta_{i-1}$
is weight-symmetric.
Let $g$ be any $m$-ary operation of the form $g=h[g_1,\dots,g_k]$ and let
$g \sim g'$.
Let $\pi \in S_m$ be such that $g(x_1,\dots,x_m) =
g'(x_{\pi(1)},\dots,x_{\pi(m)})$ and define $g'_j(x_1,\dots,x_m) =
g_j(x_{\pi(1)},\dots,x_{\pi(m)})$ for $1 \leq j \leq k$.
Since $\nu_{i-1}$ is weight-symmetric, it follows that
$g_i \in \supp(\nu_{i-1})$ if and only if $g'_i \in \supp(\nu_{i-1})$.
Therefore the terms $\omega(h) h[g_1,\dots,g_k]$ in $\eta_{i-1}$ such that
$g = h[g_1,\dots,g_k]$ are in bijection with the terms
$\omega(h) h[g'_1,\dots,g'_k]$ such that $g' = h[g'_1,\dots,g'_k]$.
So the fractional operation $\eta_{i-1}$ assigns the same weight to $g$
and $g'$.

The assumption that $\supp(\omega)$ generates a symmetric $n$-ary operation $t$
means that $t$ can be obtained by a finite number of superpositions of operations
from $\supp(\omega)$ and the set of all projections.
Formally,
define ${\cal A}_0 = \{ e^{(i)}_j \mid 1 \leq j \leq i \}$ to be the set of all projections and,
for $j \geq 1$, define
${\cal A}_{j} = \{ g[h_1,\dots,h_q] \mid g \in \supp(\omega), h_1, \dots, h_q \in {\cal A}_{j'}, j' < j \}$.
Then, $t \in {\cal A}_d$, for some $d \geq 1$. 
Fix any such value of $d$ and define ${\sf Exp}({\bf g})$
to be the probability distribution on $\mathbb G$ induced by $\nu_d$.

We now show that this operator is non-vanishing.
Assume that ${\bf g} = \{g_1,\dots,g_n\}$.
Since $\supp(\nu_0) = \{g_1, \dots, g_n\}$ 
and using the fact that $\supp(\nu_i)$ contains all superpositions 
$g[h_1,\dots,h_q]$, $g \in \supp(\omega)$, $h_1, \dots, h_q \in \supp(\nu_{i-1})$,
it follows by induction that $t[g_1,\dots,g_n] \in \supp(\nu_d)$.
Note that the operation $t[g_1,\dots,g_n]$ is symmetric since
for all $\pi \in S_m$, there is a permutation $\pi' \in S_n$ such that
$
t[g_1,\dots,g_n](x_{\pi(1)},\dots,x_{\pi(m)}) = 
t[g_{\pi'(1)},\dots,g_{\pi'(n)}](x_1,\dots,x_m) =
t[g_1,\dots,g_n](x_1,\dots,x_m).
$
Hence, ${\sf Exp}({\bf g})$ assigns
non-zero probability to $\{t[g_1, \dots, g_n]\}$
and since $\{t[g_1, \dots, g_n]\} \in {\mathbb G}^\ast$,
it follows that ${\sf Exp}$ is non-vanishing.

It remains to show that ${\sf Exp}$ is valid for $\Gamma$, i.e., that (\ref{eq:valid}) is satisfied.
We claim that for each $i \geq 1$, we have
\begin{equation}\label{eq:claim}
\sum_{g \in \calO^{(m)}} \nu_{i}(g) f(g(x^1,\dots,x^m)) \leq \sum_{g \in \calO^{(m)}}
\nu_{i-1}(g) f(g(x^1,\dots,x^m)),
\end{equation}
for all $f \in \Gamma$ and
$x^1,\dots,x^m\in \dom f$.
The inequality (\ref{eq:valid}) then follows by induction on $i$,
noting that $\sum_{g \in \calO^{(m)}} \nu_0(g) f(g(x^1,\dots,x^m)) =
\sum_{g \in {\bf g}} \frac{1}{|{\bf g}|} f(g(x^1,\dots,x^m)) =
f^{|{\bf g}|}({\bf g}(x^1,\dots,x^m))$.
To see why (\ref{eq:claim}) holds, compare the last two terms in the definition of
$\nu_i$:
\begin{align*}
 \sum_{g \in \calO^{(m)}} \chi_{i-1}(g) f(g(x^1,\dots,x^m))
&= \frac{1}{K} \sum_{g_1,\dots,g_k \in \text{supp}(\nu_{i-1})} \frac{1}{k}
   \sum_{i=1}^k f(g_i(x^1,\dots,x^m)) \\
&\hspace{-5em}\geq \frac{1}{K} \sum_{g_1,\dots,g_k \in \text{supp}(\nu_{i-1})} \sum_{h \in \calO^{(m)}}  \omega(h) f(h[g_1,\dots,g_k](x^1,\dots,x^m)) \\
&\hspace{-5em}= \sum_{h' \in \calO^{(m)}} \eta_{i-1}(h') f(h'(x^1,\dots,x^m)).
\end{align*}

Hence, ${\sf Exp}$ is valid, so Lemma~\ref{lemma:expansion} is applicable and
shows that $\Gamma$ admits a generalised fractional polymorphism $\rho^\ast$ with
support on singleton sets, each containing a symmetric $m$-ary operation.
Therefore, $\Gamma$ admits the symmetric $m$-ary fractional polymorphism
$\sum_{\{g\} \in {\mathbb G}^\ast} \rho^\ast(\{g\}) \chi_g$.
\end{proof}


\section{Imposing idempotency}\label{sec:generate2}

In this section we prove Lemma~\ref{lemma:idem} which was used
to find optimal solutions in Section~\ref{sec:selfreduce}.
The lemma states that for a symmetric fractional polymorphism, 
we can impose idempotency on a sub-domain $D'$ of $D$ while simultaneously
ensuring that there is an optimal solution with labels restricted to $D'$.
The proof uses the Expansion Lemma. 

\renewcommand{\thelemmaRESTATED}{\ref{lemma:idem}}
\begin{lemmaRESTATED}[restated]
There exists a subset $D' \subseteq D$ such that
if $\Gamma$ admits an $m$-ary symmetric fractional polymorphism, then it admits
an $m$-ary symmetric fractional polymorphism $\omega$ such that, for all $g \in
\supp(\omega)$,
\begin{enumerate}
\item\label{item:coreRESTATED}
$g(x,x,\dots,x) \in D'$ for all $x \in D$;
\item\label{item:idemRESTATED}
$g(x,x,\dots,x) = x$ for all $x \in D'$.
\end{enumerate}
\end{lemmaRESTATED}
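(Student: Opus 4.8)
The plan is to reduce the statement to an application of the Expansion Lemma (Lemma~\ref{lemma:expansion}), after first isolating a suitable ``fractional core'' $D'\subseteq D$.

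\emph{Choosing $D'$.} For an $m$-ary operation $g$ write $\bar g$ for its diagonal, $\bar g(x)=g(x,\dots,x)$; setting $x^1=\dots=x^m$ in \eqref{eq:FPdefinition} shows that if $\omega$ is an $m$-ary fractional polymorphism of $\Gamma$, then $\sum_{g\in\calO^{(m)}}\omega(g)\chi_{\bar g}$ is a unary fractional polymorphism of $\Gamma$; in particular, the diagonal of every operation occurring in the support of a fractional polymorphism of $\Gamma$ is a unary fractional-polymorphism operation of $\Gamma$. Among all unary fractional polymorphisms of $\Gamma$ all of whose support operations are retractions onto one and the same image, I would pick one, $\tau$, whose image $D'$ has smallest possible cardinality; this is possible since $\chi_{e^{(1)}_1}$ (the identity operation) qualifies with image $D$. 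By construction every $\sigma\in\supp(\tau)$ already satisfies both conclusions of the lemma in the case $m=1$, and $D'$ depends only on $\Gamma$, not on $m$.

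\emph{The key property of $D'$.} The crucial fact — and the step I expect to be the main obstacle — is that $D'$ is \emph{rigid}: every operation in the support of every unary fractional polymorphism of $\Gamma$ is injective on $D'$. The difficulty is that minimality of $|D'|$ constrains the whole support of a fractional polymorphism simultaneously, whereas non-injectivity on $D'$ concerns a single operation; bridging this requires ``purifying'' a mixed support, which I would do by a second use of the Expansion Lemma (alternatively, by a Farkas-type argument in the spirit of the proof of Theorem~\ref{th:BLP:general}), using that a non-injective-on-$D'$ operation could be combined with $\tau$, iterated to idempotency, and cleaned up to yield a unary fractional polymorphism supported on retractions onto a strictly smaller set. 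Granting rigidity, the unary fractional-polymorphism operations of $\Gamma$ that map $D'$ into itself restrict to permutations of $D'$ and form a finite group $G$ of permutations of $D'$; moreover, by replacing $\tau$ with a suitable convex combination of unary fractional polymorphisms post-composed with a fixed retraction onto $D'$, I may additionally assume that $\{\,\sigma|_{D'}\mid\sigma\in\supp(\tau)\,\}$ generates $G$, all operations of $\supp(\tau)$ still having image exactly $D'$.

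\emph{Applying the Expansion Lemma.} Fix $m\ge 1$ and assume $\Gamma$ admits an $m$-ary symmetric fractional polymorphism $\omega_0$ (otherwise there is nothing to prove). Let $\mathbb G$ consist of all singletons $\{g\}$ with $g$ a symmetric $m$-ary operation in the clone generated by $\supp(\omega_0)\cup\supp(\tau)$, and let $\mathbb G^\ast\subseteq\mathbb G$ consist of those $\{g\}$ with $\bar g(D)\subseteq D'$ and $\bar g|_{D'}=\mathrm{id}_{D'}$. As input I take the generalised fractional polymorphism $\rho=\sum_{g\in\supp(\omega_0)}\omega_0(g)\chi_{\{g\}}$ (its support lies in $\mathbb G$ since $\omega_0$ is symmetric), and I define ${\sf Exp}(\{g\})=\sum_{\sigma\in\supp(\tau)}\tau(\sigma)\,\chi_{\{\sigma\circ g\}}$, where $\sigma\circ g$ is the $m$-ary operation $x\mapsto\sigma(g(x))$. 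Each $\sigma\circ g$ is again symmetric and in the clone, so ${\sf Exp}$ takes values in distributions over $\mathbb G$; validity is immediate, since for $f\in\Gamma$ and $x^1,\dots,x^m\in\dom f$ we have $\sum_{\sigma}\tau(\sigma)f(\sigma(g(x^1,\dots,x^m)))\le f(g(x^1,\dots,x^m))$ whenever $g(x^1,\dots,x^m)\in\dom f$ (as $\tau$ is a unary fractional polymorphism), and otherwise the right-hand side is $\infty$. For the non-vanishing property, note that every operation occurring in the expansion tree is a fractional-polymorphism operation of $\Gamma$ (a superposition of such operations and projections), so by rigidity its diagonal is injective on $D'$; after one application of ${\sf Exp}$ the diagonal of the resulting operation has image in $\sigma(D)=D'$, hence restricts to an element of $G$, and each subsequent application left-multiplies this element by a generator $\sigma|_{D'}$ of $G$, so after finitely many steps we reach $\mathrm{id}_{D'}$, i.e.\ a member of $\mathbb G^\ast$. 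Lemma~\ref{lemma:expansion} then yields a generalised fractional polymorphism $\rho^\ast$ with $\supp(\rho^\ast)\subseteq\mathbb G^\ast$, and $\omega:=\sum_{\{g\}\in\mathbb G^\ast}\rho^\ast(\{g\})\,\chi_g$ is a symmetric $m$-ary fractional polymorphism of $\Gamma$ all of whose support operations have diagonal a retraction onto $D'$, which is precisely conditions \ref{item:coreRESTATED} and \ref{item:idemRESTATED}.
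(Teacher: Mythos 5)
Your use of the Expansion Lemma is essentially the paper's — the sets $\mathbb G$, $\mathbb G^\ast$ and the expansion operator ${\sf Exp}(\{g\})=\sum_\sigma \tau(\sigma)\chi_{\{\sigma\circ g\}}$ match the paper's construction, and the validity check is the same. However, your preliminary setup (the choice of $D'$ and the rigidity claim) contains real gaps, and the detour you take to fill them is not the route the paper uses. First, you never establish that a unary fractional polymorphism supported \emph{entirely} on retractions onto a common minimum-size image actually exists; the identity gives $D'=D$, but getting $|D'|$ down to the true minimum already requires the kind of narrowing that the lemma itself is supposed to deliver. Second, your ``rigidity'' sketch is circular: you propose to prove injectivity-on-$D'$ by a second Expansion-Lemma or Farkas argument whose non-vanishing would itself need rigidity. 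Third, and most concretely: if $\tau$ really is supported on retractions then $\sigma|_{D'}=\mathrm{id}_{D'}$ for every $\sigma\in\supp(\tau)$, so in your non-vanishing argument ``left-multiplying by a generator $\sigma|_{D'}$'' does nothing — the diagonal's restriction to $D'$ is frozen after the first step, and you can never correct a nontrivial permutation to $\mathrm{id}_{D'}$. You notice this and ``replace $\tau$ with a suitable convex combination'' of non-retraction operations, but that contradicts the setup you started from and is left unjustified.

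The paper avoids all of this with a single, simpler device. It fixes a unary fractional polymorphism $\nu$ of $\Gamma$ with \emph{inclusion-maximal support} (this is trivial to construct by averaging) and sets $D'=h(D)$ for $h\in\supp(\nu)$ of minimum image size. Maximality immediately makes $\supp(\nu)$ a monoid under composition, and that, together with minimality of $|D'|$, gives you rigidity for free: if $g\in\supp(\nu)$ were non-injective on $D'$ then $g\circ h\in\supp(\nu)$ would have image of size $<|D'|$, a contradiction. The same monoid/minimality argument shows $G=\{g|_{D'}:g\in\supp(\nu),\,g(D)=D'\}$ is a group. The expansion operator then post-composes with all of $\nu$ (not a retraction-only subfamily), so for any reachable $\{g\}$ you can compose with $h$ to map the diagonal's image into $D'$ and then with an appropriate $i\in\supp(\nu)$, whose restriction is the inverse in $G$, to reach $\mathbb G^\ast$. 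I would recommend dropping the ``retraction-only $\tau$ / rigidity-by-purification'' framing and adopting the max-support $\nu$ argument, which you can then feed into the Expansion-Lemma step you already wrote correctly.
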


\begin{proof}
%
Every language $\Gamma$ admits the fractional polymorphism that assigns
probability 1 to the unary identity operation on $D$.
Furthermore, assuming that $\Gamma$ admits two unary fractional polymorphisms
$\nu_1$ and $\nu_2$, $\Gamma$ also admits $\nu' = \frac{1}{2}(\nu_1+\nu_2)$
with  $\supp(\nu') = \supp(\nu_1) \cup \supp(\nu_2)$.
Therefore, we can let $\nu$ be a unary fractional polymorphism of $\Gamma$ with inclusion-maximal support.
Let $h \in \supp(\nu)$ be such that
$|h(D)| = \min \{ |g(D)| \mid g \in \supp(\nu) \}$ and define $D' = h(D)$.

Let $\mathbb G = \{ \{ g \} \mid g \in \calO^{(m)}_{\tt sym} \}$ and
let ${\mathbb G}^\ast$ be the operations in $\mathbb G$ that
additionally satisfy (\ref{item:coreRESTATED}) and (\ref{item:idemRESTATED}).
The expansion operatior ${\sf Exp}$ is defined as follows:
${\sf Exp}(\{g\})$ assigns probabilty
$\sum_{h' \in \supp(\nu), g' = h' \circ g} \nu(h')$ to the set $\{g'\}$.
It is easy to see that ${\sf Exp}$ is valid
and we show below that it is non-vanishing.
Therefore, Lemma~\ref{lemma:expansion} is applicable with
$\rho$ taken to be an $m$-ary symmetric fractional polymorphism of $\Gamma$.
Consequently, $\Gamma$ admits
$\omega = \sum_{\{g\} \in {\mathbb G}^\ast} \rho^\ast(\{g\}) \chi_g$.

We finish the proof by showing that ${\sf Exp}$ is non-vanishing.
It is easy to see that $\Gamma$ admits the unary fractional polymorphism $\mu = \sum_{h_1 \in \supp(\nu)}  \nu(h_1) \sum_{h_2 \in \supp(\nu)}\nu(h_2) \chi_{h_1 \circ h_2}$.
Since $\nu$ is inclusion-maximal, it follows that $h_1 \circ h_2 \in \supp(\mu) \subseteq \supp(\nu)$,
so $\supp(\nu)$ forms a monoid under composition.

Define $G$ as $\{ g|_{D'} \mid g \in \supp(\nu), g(D) = D' \}$.
Then, $G$ is a set of permutations of $D'$ that contains the identity.
Let $g'_1, g'_2 \in G$ be two permutations in this set
and let $g_1, g_2 \in \supp(\nu)$ be such that $g_1(D) = g_2(D) = D'$ and $g'_i = g_i|_{D'}$.
Since $\supp(\nu)$ forms a monoid under composition, $g_1 \circ g_2 \in \supp(\nu)$.
Therefore, $g'_1 \circ g'_2 = g_1|_{D'} \circ g_2|_{D'} = (g_1 \circ g_2)|_{D'} \in G$,
so $G$ forms a group under composition.

Let $\{g\} \in \mathbb G$.
By the \emph{diagonal} of an operation $f$ we mean the unary operation $x \mapsto f(x,\dots,x)$.
Note that the diagonal of $h \circ g$ acts as a permutation on $D'$.
This permutation has an inverse in $G$, so there exists an operation $i \in \supp(\nu)$ such that $i(D) = D'$ and such that the restriction of $i$ to $D'$ is the inverse of the diagonal of $h \circ g$.
Hence $\{i \circ h \circ g\} \in \mathbb G^\ast$.
Since $\supp(\nu)$ forms a monoid under composition,
$i \circ h \in \supp(\nu)$ so we conclude that $\{ i \circ h \circ g \} \in \supp({\sf Exp}(\{g\}))$.
\end{proof}


\section{Characterisation of finite-valued languages}
\label{sec:finite}

The goal of this section is to prove the characterisation of finite-valued languages
solved by BLP.
In particular, we prove the following theorem.

\renewcommand{\thetheoremRESTATED}{\ref{thm:finite:gen}}
\begin{theoremRESTATED}[restated]
Suppose that a finite-valued language $\Gamma$ admits
a symmetric fractional polymorphism of arity $m-1\ge 2$. Then
$\Gamma$ admits a symmetric fractional polymorphism of arity $m$.
\end{theoremRESTATED}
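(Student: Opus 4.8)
The plan is to treat this as the inductive step promoting a symmetric fractional polymorphism $\omega$ of arity $m-1$ to one of arity $m$, in two stages: a cost-free ``soft symmetrisation'' that is valid for every language, followed by a second stage that genuinely exploits finite-valuedness.

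\emph{Stage 1: leave-one-out averaging.} For each $i\in\{1,\dots,m\}$ let $\omega_i$ be the $m$-ary fractional operation obtained by superposing $\omega$ with the $m-1$ projections onto the coordinates $\{1,\dots,m\}\setminus\{i\}$. Superposing a fractional polymorphism with projections again yields a fractional polymorphism (the relevant tuples are of the form $x^j$, hence lie in $\dom f$), so each $\omega_i$ is an $m$-ary fractional polymorphism of $\Gamma$, and hence so is $\omega'=\frac1m\sum_{i=1}^m\omega_i$. Using that each index $j$ lies in exactly $m-1$ of the $m$ sub-collections, one gets, for all $f\in\Gamma$ and $x^1,\dots,x^m\in\dom f$,
\[
\mathbb{E}_{g\sim\omega'}\,f(g(x^1,\dots,x^m)) \;=\; \frac1m\sum_{i=1}^m\mathbb{E}_{h\sim\omega}\,f\bigl(h(x^{j_1},\dots,x^{j_{m-1}})\bigr) \;\le\; \frac1{m(m-1)}\sum_{i=1}^m\sum_{j\ne i}f(x^j) \;=\; \frac1m\sum_{j=1}^m f(x^j),
\]
where $\{j_1<\dots<j_{m-1}\}=\{1,\dots,m\}\setminus\{i\}$. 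Every operation in $\supp(\omega')$ ignores one coordinate and is symmetric in the remaining $m-1$; thus $\omega'$ is ``nearly symmetric'', and only the last defect of symmetry must be removed.

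\emph{Stage 2: graph of generalised operations (where finiteness enters).} Working in the framework of generalised fractional polymorphisms of Section~\ref{sec:construction}, the plan is to exhibit a pair $(\mathbb G,\mathbb G^\ast)$ together with an expansion operator ${\sf Exp}$, where $\mathbb G$ consists of collections of $m$-ary operations obtainable from the nearly-symmetric operations of Stage~1 by repeated superposition with $\omega$, $\mathbb G^\ast$ consists of the singleton collections whose unique operation is a genuinely symmetric $m$-ary operation, and one application of ${\sf Exp}$ blends $\omega$ across an appropriate $(m-1)$-subset of the operations in a collection, in all $m$ ways, and averages. As in Stage~1, this averaging makes ${\sf Exp}$ \emph{valid} for $\Gamma$ --- but only because $\Gamma$ is finite-valued: the validity inequality applies $\omega$ to intermediate tuples $h(x^1,\dots,x^m)$ that need not lie in $\dom f$, which is harmless precisely when $\dom f=D^n$. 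This is exactly where the argument must break for general-valued languages, cf.\ Example~\ref{example:STPcycle}. The remaining --- and principal --- task is to arrange $\mathbb G$ into a directed graph whose edges are given by ${\sf Exp}$ and to prove, by a combinatorial analysis of reachability in this graph, that ${\sf Exp}$ is \emph{non-vanishing}, i.e.\ that from every collection one can reach $\mathbb G^\ast$. Granting this, the Expansion Lemma (Lemma~\ref{lemma:expansion}), applied to the generalised fractional polymorphism induced by $\omega'$, delivers a generalised fractional polymorphism supported on $\mathbb G^\ast$, which is precisely a symmetric $m$-ary fractional polymorphism of $\Gamma$.

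\emph{Main obstacle.} The hard part is the non-vanishing property of Stage~2: showing that iterated ``$\omega$-blending'' of nearly-symmetric $m$-ary operations can always be driven to a truly symmetric operation. Unlike in Theorem~\ref{th:BLP:generate}, one cannot appeal to $\supp(\omega)$ generating a symmetric operation of the correct arity --- the arities differ by one --- so a genuinely new, finite-valuedness-specific argument is needed, which is the role of the graph-of-generalised-operations analysis.
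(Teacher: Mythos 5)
Your Stage~1 is fine: the leave-one-out averaging is exactly one application of the paper's expansion operator to the identity mapping $\mathds{1}$, and it produces the same near-symmetric fractional polymorphism. The problems are in Stage~2, and they are fundamental rather than gaps to fill.

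First, you put finite-valuedness in the wrong place. You claim the expansion operator's \emph{validity} fails for general-valued $\Gamma$ because intermediate tuples $h(x^1,\dots,x^m)$ may fall outside $\dom f$. This is not so: if some $y^j={\bf g}(x^1,\dots,x^m)_j\notin\dom f$ then $f^{|{\bf g}|}({\bf g}(x^1,\dots,x^m))=\infty$ and the validity inequality~\eqref{eq:valid} holds trivially; if all $y^j\in\dom f$ then the fractional-polymorphism inequality for $\omega$ applies directly to the sub-collections. The paper spells this out: its expansion operator (Lemma~\ref{th:rho:suppH}) and the Expansion Lemma itself are valid for \emph{all} general-valued languages, and Section~\ref{sec:construction} says so explicitly after Lemma~\ref{lemma:expansion}. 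Finite-valuedness is never used in the graph-expansion step.

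Second, and more seriously, your choice of $\mathbb G^\ast$ as ``singleton collections whose unique operation is genuinely symmetric'' makes the required non-vanishing property \emph{false}. Since the expansion operator you describe is valid even for general-valued languages (per the above), if it were also non-vanishing with respect to your $\mathbb G^\ast$, the Expansion Lemma would hand you a symmetric $m$-ary fractional polymorphism for \emph{every} language admitting a symmetric $(m-1)$-ary one — contradicting Example~\ref{example:STPcycle}, where $\Gamma$ has a binary symmetric fractional polymorphism but no ternary symmetric one. So there is no ``genuinely new, finite-valuedness-specific argument'' to be found for the non-vanishing step: the target is simply unreachable.

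The paper sidesteps this by aiming the Expansion Lemma at a weaker target. It takes $\mathbb G^\ast$ to be the union of sink strongly connected components of the graph on mappings $(\mathbb G,E)$, where reachability of a sink is automatic and requires no insight; the resulting collections ${\bf g}\in\mathbb G^\ast$ are \emph{not} symmetric. Only afterward — in Lemmas~\ref{lemma:lambdaWeights}--\ref{th:permutationInvariance} — does finite-valuedness enter, via a Farkas'-lemma argument that establishes $f^m(x^{{\bf g}'1},\dots,x^{{\bf g}'m})=f^m(x^{{\bf g}''1},\dots,x^{{\bf g}''m})$ for all ${\bf g}',{\bf g}''$ in a sink component, and more generally that $f^m$ is invariant under coordinate-wise permutation of the labellings within the range of ${\bf g}^\ast$. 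Proving this as a chain of inequalities that must in fact be equalities relies on being able to move terms across the inequality sign, which breaks when values can be infinite. That invariance is what licenses post-composing the collections with a sorting map ${\bf p}$ to obtain symmetric operations, which is the step your plan was trying (and cannot) accomplish via expansion alone.
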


Let us fix a symmetric fractional polymorphism $\omega:\fpolMinusOne$ of $\Gamma$ of arity $m-1$.
We will use the letter $s$ for operations in $\supp(\omega)$ to emphasize that these operations are symmetric.

A symmetric fractional polymorphism of $\Gamma$ of arity $m$ will be constructed in two steps.
The first one will rely on the Expansion Lemma. Essentially, in this step we start
with a fractional polymorphism $\rho_0=\frac{1}{m}(e^{(m)}_1+\ldots+e^{(m)}_m)$
and repeatedly modify it by applying the fractional polymorphism $\omega$.
The example below demonstrates  such modification for $m=3$.

\begin{exampleX}\examplefont\label{ex:finiteValued}
Suppose that language $\Gamma$ admits a binary symmetric fractional polymorphism $\omega$.
For a function $f\in\Gamma$ and labellings $x,y,z\in\dom f$ we can write

{\small
\begin{eqnarray*}
f^3(x,y,z)
&\ =\ &\frac{1}{3}f^2(y,z) + \frac{1}{3}f^2(x,z) + \frac{1}{3}f^2(x,y) \nonumber \\
&\ge& \frac{1}{3}\sum_{s\in\supp(\omega)}\omega(s) f(s(y,z)) + \frac{1}{3}\sum_{s\in\supp(\omega)}\omega(s) f(s(x,z)) + \frac{1}{3}\sum_{s\in\supp(\omega)}\omega(s) f(s(x,y))\nonumber \\
&=& \sum_{s\in\supp(\omega)}\omega(s)f^3(s(y,z),s(x,z),s(x,y))
\end{eqnarray*}
}

This means that the following vector is a fractional polymorphism of $\Gamma$:
$$
\rho_1=\sum_{s\in\supp(\omega)}\omega(s)\cdot\frac{1}{3}(
\chi_{s\circ(e^{(3)}_2,e^{(3)}_3)} + 
\chi_{s\circ(e^{(3)}_1,e^{(3)}_3)} + 
\chi_{s\circ(e^{(3)}_1,e^{(3)}_2)}
) 
$$
We can then take one component $\frac{1}{3}(\chi_{g_1}+\chi_{g_2}+\chi_{g_3})$ of the sum above
and replace it with another vector by applying $\omega$ in a similar way.
This shows how we can derive new fractional polymorphisms of $\Gamma$.
Note that such polymorphisms will have a special structure, namely they will be a weighted sum of vectors
of the form $\frac{1}{3}(\chi_{g_1}+\chi_{g_2}+\chi_{g_3})$ where $g_1,g_2,g_3\in\calO^{(3)}$.
This means that we will be working with the set $\mathbb G$ containing triplets of operations ${\bf g}=(g_1,g_2,g_3)\in\calO^{(3\rightarrow 3)}$.
Recall that in Section \ref{sec:construction} a probability distribution over such $\mathbb G$ was called a {\em fractional polymorphism of arity $3\rightarrow 3$}.
\end{exampleX}

The example above can be generalised to other values of $m\ge 3$ in a natural way.
The output of the first step (described in Section~\ref{sec:proof:finiteValued:1} below) will thus be a generalised fractional polymorphism of $\Gamma$
of arity $m\rightarrow m$ with certain properties that will be exploited in step 2.

In the second step (Sections~\ref{sec:proof:finiteValued:2}-\ref{sec:proof:finiteValued:5}) we will turn it into an $m$-ary symmetric fractional polymorphism of $\Gamma$
using tools such as Farkas' lemma. Note that in the second step the assumption that $\Gamma$ is finite-valued
will be essential.

With this introduction, we now proceed with the formal proof of Theorem~\ref{thm:finite:gen}.

\subsection{Proof of Theorem~\ref{thm:finite:gen}: Step 1}\label{sec:proof:finiteValued:1}

We start with some additional notation and definitions.
We use $[m]$ to denote set $\{1,\ldots,m\}$.
Let $\pi\in S_m$ be a permutation of $[m]$.
For a labelling $\alpha=(a_1,\ldots,a_m)\in D^m$ we define $\alpha^\pi\in D^m$ as follows: 
$\alpha^\pi=(a_{\pi(1)},\ldots,a_{\pi(m)})$.
For an operation $g:D^m\rightarrow D$, let $g^\pi:D^m\rightarrow D$ 
be the following operation:
\begin{equation}
g^\pi(\alpha)=g(\alpha^\pi) 
\label{eq:gpi}
\end{equation}

For a symmetric operation  $s\in \supp(\omega)$ of arity $m-1$ we introduce the following definitions.
For a labelling $\alpha=(a_1,\ldots,a_m)\in D^m$ let  $\alpha^s\in D^m$
be the labelling
\begin{eqnarray}
\alpha^s&=&(s(\alpha_{-1}), \ldots, s(\alpha_{-m}))
\end{eqnarray}
where $\alpha_{-i}\in D^{m-1}$ is the labelling obtained from $\alpha$ by removing the $i$-th element.
For a mapping  ${\bf g}:D^m\rightarrow D^m$, let 
 ${\bf g}^s:D^m\rightarrow D^m$ be the mapping
\begin{subequations}
\begin{equation}
 {\bf g}^s(\alpha)=[{\bf g}(\alpha)]^s
\end{equation}
The last definition can also be expressed as
\begin{equation}
{\bf g}^s=(s\circ{\bf g}_{-1},\ldots,s\circ{\bf g}_{-m})
\label{eq:gh}
\end{equation}
\end{subequations}
where ${\bf g}_{-i}:D^m\rightarrow D^{m-1}$ is the sequence of $m-1$ operations obtained from ${\bf g}=(g_1,\ldots,g_m)$ by removing the $i$-th operation. We use ${\bf g}^{s_1\ldots s_k}$ to denote $(\ldots({\bf g}^{s_1})^{\ldots})^{s_k}$.

Let $\mathds{1}$ be the identity mapping $D^m\rightarrow D^m$, and let
$\mathbb G=\{{\mathds 1}^{s_1\ldots s_k}\:|\:s_1,\ldots,s_k\in \supp(\omega),k\ge 0\}\subseteq \calO^{(m\rightarrow m)}$ be the set of all mappings that can be obtained from $\mathds{1}$ by applying operations from $\supp(\omega)$.

\paragraph{Graph on mappings} Let us define a directed weighted graph $(\mathbb G,E,w)$ with the set of edges 
$E=\{({\bf g},{\bf g}^s)\:|\:{\bf g}\in \mathbb G,s\in \supp(\omega)\}$ and positive weights
$w({\bf g},{\bf h})=\sum_{s\in \supp(\omega):{\bf h}={\bf g}^s}\omega(s)$ for $({\bf g},{\bf h})\in E$.
Clearly, we have 
\begin{equation}
\sum_{{\bf h}:({\bf g},{\bf h})\in E}w({\bf g},{\bf h})=1\qquad\quad\forall{\bf g}\in \mathbb G
\label{eq:weightsSumToOne}
\end{equation}
The graph $({\mathbb G,E})$ can be decomposed into strongly connected components, yielding  a directed acyclic
graph (DAG) on these components. 
We define ${\sf Sinks}({\mathbb G,E})$ to be the set of those strongly connected components $\mathbb H\subseteq \mathbb G$
of $(\mathbb G,E)$ that are sinks of this DAG (i.e.\ have no outgoing edges). Any DAG has at least one sink, therefore ${\sf Sinks}({\mathbb G,E})$ is non-empty.
We denote $\mathbb G^\ast=\bigcup_{\mathbb H\in\mathbb {\sf Sinks}({\mathbb G,E})} \mathbb H\subseteq \mathbb G$.

By applying the Expansion Lemma to the sets of collections $(\mathbb G,\mathbb G^\ast)$ defined above
we can obtain the following result.
\begin{lemmaX}
There exists a generalised fractional polymorphism $\rho^\ast$ of $\Gamma$ of arity $m\rightarrow m$
with $\supp(\rho^\ast)\subseteq \mathbb G^\ast$.
\label{th:rho:suppH}
\end{lemmaX}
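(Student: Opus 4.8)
The plan is to deduce this lemma directly from the Expansion Lemma (Lemma~\ref{lemma:expansion}) applied to the pair $(\mathbb G,\mathbb G^\ast)$ that has just been defined. For this I need to supply three things: an initial generalised fractional polymorphism $\rho$ of $\Gamma$ of arity $m\to m$ with $\supp(\rho)\subseteq\mathbb G$, an expansion operator ${\sf Exp}$ that is valid for $\Gamma$, and a verification that ${\sf Exp}$ is non-vanishing with respect to $(\mathbb G,\mathbb G^\ast)$. For the initial object I would take $\rho=\chi_{\mathds 1}$. Since $\mathds 1$ arises from the defining description of $\mathbb G$ with $k=0$, we have $\mathds 1\in\mathbb G$; and since $f^{m}(\mathds 1(x^1,\dots,x^m))=f^{m}(x^1,\dots,x^m)$ for every $f\in\Gamma$ and all labellings, inequality~\eqref{eq:gfpolalt} holds with equality, so $\chi_{\mathds 1}$ is a generalised fractional polymorphism of arity $m\to m$.

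The expansion operator would be the one induced by the weighted graph $(\mathbb G,E,w)$: set ${\sf Exp}({\bf g})=\sum_{s\in\supp(\omega)}\omega(s)\,\chi_{{\bf g}^s}$, so that ${\sf Exp}({\bf g})$ assigns to each ${\bf h}$ exactly the edge weight $w({\bf g},{\bf h})$ when $({\bf g},{\bf h})\in E$ and $0$ otherwise. In particular $\supp({\sf Exp}({\bf g}))$ is precisely the set of out-neighbours of ${\bf g}$ in $(\mathbb G,E)$. To check validity, fix $f\in\Gamma$, a collection ${\bf g}=(g_1,\dots,g_m)\in\mathbb G$, and labellings $x^1,\dots,x^m$, and write $y^i=g_i(x^1,\dots,x^m)$ (so ${\bf g}(x^1,\dots,x^m)=(y^1,\dots,y^m)$, with all $y^i\in\dom f$ since $\Gamma$ is finite-valued). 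Unwinding the definition~\eqref{eq:gh} of ${\bf g}^s$, the $i$-th component of ${\bf g}^s(x^1,\dots,x^m)$ is $s$ applied coordinatewise to $(y^1,\dots,y^m)$ with $y^i$ deleted; call this $s(y_{-i})$. Then
\[
\sum_{s\in\supp(\omega)}\omega(s)\,f^{m}({\bf g}^s(x^1,\dots,x^m))
=\frac{1}{m}\sum_{i=1}^{m}\sum_{s\in\supp(\omega)}\omega(s)\,f(s(y_{-i})).
\]
Applying the fractional-polymorphism inequality for the $(m-1)$-ary $\omega$ (i.e.\ \eqref{eq:FPdefinition}) to each tuple $y_{-i}$ gives $\sum_s\omega(s)f(s(y_{-i}))\le f^{m-1}(y_{-i})$, and since $\sum_{i=1}^m f^{m-1}(y_{-i})=\sum_{i=1}^m f(y^i)=m\,f^{m}({\bf g}(x^1,\dots,x^m))$, this yields $\sum_s\omega(s)f^{m}({\bf g}^s(x^1,\dots,x^m))\le f^{m}({\bf g}(x^1,\dots,x^m))$, which is exactly~\eqref{eq:valid} (all collections here have size $m$).

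For the non-vanishing property, I would observe that, since $\supp({\sf Exp}({\bf g}))$ is the out-neighbourhood of ${\bf g}$, an expansion sequence ${\bf g}_0,{\bf g}_1,\dots,{\bf g}_r$ is the same thing as a directed walk in $(\mathbb G,E)$. By~\eqref{eq:weightsSumToOne} every vertex has at least one out-edge, so passing to the DAG of strongly connected components and following a sufficiently long walk, one eventually enters a sink component; hence from every ${\bf g}\in\mathbb G$ there is a directed walk ending in $\mathbb G^\ast=\bigcup_{\mathbb H\in{\sf Sinks}(\mathbb G,E)}\mathbb H$. Thus ${\sf Exp}$ is non-vanishing. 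With $\rho=\chi_{\mathds 1}$, ${\sf Exp}$ valid, and ${\sf Exp}$ non-vanishing, Lemma~\ref{lemma:expansion} produces a generalised fractional polymorphism $\rho^\ast$ of $\Gamma$ with $\supp(\rho^\ast)\subseteq\mathbb G^\ast$, which is the claim.

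The only genuinely technical step is the validity check, and the obstacle there is purely notational: one must carefully track how ${\bf g}^s$ acts on a tuple of labellings, namely as the coordinatewise application of the symmetric operation $s$ to each $(m-1)$-element sub-tuple of $(y^1,\dots,y^m)$, and then recognise the resulting average as $m$ superposed instances of the fractional-polymorphism inequality for $\omega$. Everything else — that $\chi_{\mathds 1}$ is a starting point, and that a finite digraph in which every vertex has an out-edge lets one walk into a sink SCC — is routine, and the finite-valuedness of $\Gamma$ is used only to guarantee $\dom f=D^n$ so that the inequality for $\omega$ always applies.
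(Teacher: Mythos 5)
Your proposal is correct and follows essentially the same route as the paper's own proof: the same starting point $\rho=\chi_{\mathds 1}$, the same expansion operator ${\sf Exp}({\bf g})=\sum_{s\in\supp(\omega)}\omega(s)\,\chi_{{\bf g}^s}$, the same validity computation via Proposition~\ref{prop:asjgasdg}-style unwinding of ${\bf g}^s$, and the same non-vanishing argument (out-neighbourhoods in $(\mathbb G,E)$ eventually enter a sink SCC). You are somewhat more explicit than the paper about where finite-valuedness is used — namely, to ensure $y^i\in\dom f$ so that the fractional-polymorphism inequality for $\omega$ applies to each sub-tuple $(y^1,\dots,y^m)_{-i}$ — which is a minor but welcome clarification.
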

\begin{proof}
Clearly, $\Gamma$ admits a least one
generalised
fractional polymorphism $\rho$ with $\supp(\rho)\subseteq\mathbb G$, namely $\rho=\chi_{\mathds{1}}$.
It thus suffices to prove the existence of an expansion operator {\sf Exp} which is valid for $\Gamma$
and non-vanishing with respect to $(\mathbb G,\mathbb G^\ast)$.

Given a mapping ${\bf g}\in\mathbb G$, we define the probability distribution $\rho={\sf Exp}({\bf g})$ as follows:
$$
\rho=\sum_{s\in\supp(\omega)} \omega(s) \chi_{{\bf g}^s}
$$
Let us check that it is indeed valid for $\Gamma$. Consider a function $f\in\Gamma$ of arity $n$ and labellings $x^1,\ldots,x^m\in D^n$.
Denote $(y^1,\ldots,y^m)={\bf g}(x^1,\ldots,x^m)$.
Then,
\begin{eqnarray*}
\sum_{{\bf h}\in \supp(\rho)} \rho({\bf h}) f^m({\bf h}(x^1,\ldots,x^m))
&=&\sum_{s\in \supp(\omega)}\omega(s)f^m({\bf g}^s(x^1,\ldots,x^m)) \\
&=&\sum_{s\in \supp(\omega)}\omega(s)\frac{1}{m}\sum_{i\in[m]}f(s((y^1,\ldots,y^m)_{-i})) \\
&=&\frac{1}{m}\sum_{i\in[m]}\sum_{s\in \supp(\omega)}\omega(s)f(s((y^1,\ldots,y^m)_{-i})) \\
&\le& \frac{1}{m}\sum_{i\in[m]}f^{m-1}((y^1,\ldots,y^m)_{-i}))\\
&=& f^{m}(y^1,\ldots,y^m)
\;\;=\;\; f^{m}({\bf g}(x^1,\ldots,x^m)).
\end{eqnarray*}

Now let us show that the expansion operator {\sf Exp} is non-vanishing.
Observe that $\supp({\sf Exp}({\bf g}))=\{{\bf h}\:|\:({\bf g},{\bf h})\in E\}$
for any ${\bf g}\in\mathbb G$. Furthermore, it follows from the definition of $\mathbb G^\ast$
that for any ${\bf g}\in\mathbb G$ there exists a path in $(\mathbb G,E)$ from
${\bf g}$ to some node ${\bf g}^\ast\in\mathbb G^\ast$. These two facts imply
the claim. 
\end{proof}

This concludes the first step of the proof. To summarize, we have constructed a generalised
fractional polymorphism $\rho^\ast$ of $\Gamma$ with $\supp(\rho^\ast)\subseteq\mathbb G^\ast$.
Note that operations in collections ${\bf g}\in\supp(\rho^\ast)$ are not necessarily
symmetric (otherwise this would be a contradiction to Example~\ref{example:STPcycle}).
In the second step we will show that for finite-valued languages we can replace these
collections with ${\bf p}\circ{\bf g}$,
where ${\bf p}:D^m\rightarrow D^m$ is a mapping that 
orders tuples $\alpha=(a_1,\ldots,a_m)\in D^m$ according to some total order on $D$.
More precisely, we will show that 
$$
f^m({\bf g}(x^1,\ldots,x^m))=f^m(({\bf p}\circ{\bf g})(x^1,\ldots,x^m))
\qquad\forall{\bf g}\in\mathbb G^\ast,f\in\Gamma,x^1,\ldots,x^m\in\dom f
$$
This will imply that vector 
$\rho = \sum_{{\bf g}\in \supp(\rho^\ast)} 
\rho^\ast({\bf g})\chi_{{\bf p}\circ{\bf g}}$
is also a generalised fractional polymorphism of $\Gamma$, which gives
an $m$-ary symmetric fractional polymorphism of $\Gamma$, thus proving Theorem~\ref{thm:finite:gen}.

\subsection{Proof of Theorem~\ref{thm:finite:gen}: Step 2}\label{sec:proof:finiteValued:2}
We start with the following observation.
\begin{propositionX}
\label{prop:cluster:basic} 
Every ${\bf g}=(g_1,\ldots,g_m)\in \mathbb G$ satisfies the following:
\begin{eqnarray}
(g^\pi_1,\ldots,g^\pi_m)&=&(g_{\pi(1)},\ldots,g_{\pi(m)}) \quad \qquad\quad\forall \mbox{ permutations $\pi\in S_m$} \label{eq:mcluster:a} 
\end{eqnarray}
\end{propositionX}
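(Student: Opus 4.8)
The plan is to prove \eqref{eq:mcluster:a} by induction on the length $k$ of a word $s_1,\dots,s_k \in \supp(\omega)$ witnessing ${\bf g} = {\mathds 1}^{s_1\dots s_k} \in \mathbb G$. It is convenient to first restate the claim operator‑theoretically. For $\pi \in S_m$, let $P_\pi : D^m \to D^m$ be the coordinate‑permutation map $(a_1,\dots,a_m) \mapsto (a_{\pi(1)},\dots,a_{\pi(m)})$, so that $\alpha^\pi = P_\pi(\alpha)$ and, for a single operation $g$, $g^\pi = g \circ P_\pi$. Viewing ${\bf g} = (g_1,\dots,g_m)$ as a map $D^m \to D^m$ and writing ${\bf g}^\pi = (g_1^\pi,\dots,g_m^\pi)$, one checks directly that \eqref{eq:mcluster:a} — namely $g_i^\pi = g_{\pi(i)}$ for all $i$ — is equivalent to the single identity ${\bf g} \circ P_\pi = P_\pi \circ {\bf g}$, i.e.\ that ${\bf g}$ commutes with every coordinate permutation. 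So it suffices to show that every element of $\mathbb G$ commutes with all $P_\pi$, $\pi \in S_m$.

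The base case is immediate: ${\mathds 1}$ is the identity map on $D^m$ and commutes with everything. For the inductive step, recall that passing from ${\bf g}$ to ${\bf g}^s$ amounts to post‑composing with the map $S_s : D^m \to D^m$, $\beta \mapsto (s(\beta_{-1}),\dots,s(\beta_{-m}))$; indeed ${\bf g}^s(\alpha) = [{\bf g}(\alpha)]^s = S_s({\bf g}(\alpha))$, so ${\bf g}^s = S_s \circ {\bf g}$. Hence if ${\bf g}$ commutes with $P_\pi$, then ${\bf g}^s \circ P_\pi = S_s \circ {\bf g} \circ P_\pi = S_s \circ P_\pi \circ {\bf g}$, and it remains only to verify that $S_s$ itself commutes with $P_\pi$. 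Comparing $i$‑th coordinates, the $i$‑th coordinate of $S_s(P_\pi(\beta))$ is $s(b_{\pi(1)},\dots,\widehat{b_{\pi(i)}},\dots,b_{\pi(m)})$, while the $i$‑th coordinate of $P_\pi(S_s(\beta))$ is $s(\beta_{-\pi(i)}) = s(b_1,\dots,\widehat{b_{\pi(i)}},\dots,b_m)$, where the hat indicates an omitted entry. Since $(b_{\pi(j)})_{j \neq i}$ is a reordering of $(b_\ell)_{\ell \neq \pi(i)}$ and $s$ is symmetric of arity $m-1$, these two values coincide. Thus $S_s \circ P_\pi = P_\pi \circ S_s$, and therefore ${\bf g}^s$ commutes with every $P_\pi$ whenever ${\bf g}$ does. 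This closes the induction and establishes \eqref{eq:mcluster:a} for all ${\bf g} \in \mathbb G$.

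I do not expect a genuine obstacle here; the statement is essentially a bookkeeping identity. The only point requiring care is the index juggling in the step above — specifically, that after permuting the components of a tuple by $\pi$ and then deleting the $i$‑th entry one obtains (up to reordering) the same multiset as deleting the $\pi(i)$‑th entry of the original tuple. This is precisely where symmetry of the operations $s \in \supp(\omega)$ is used, and it is also why the conclusion genuinely depends on $\mathbb G$ being generated from the identity by symmetric operations: the analogous statement fails for words built from non‑symmetric operations, consistently with Example~\ref{example:STPcycle}.
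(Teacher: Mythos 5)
Your proof is correct and is essentially the paper's argument: induct on the generating word (base case $\mathds{1}$, then show the property is preserved under $\mathbf{g}\mapsto\mathbf{g}^s$), with the symmetry of $s\in\supp(\omega)$ doing the work at the inductive step. The paper verifies the step by a direct coordinatewise computation of $(s\circ\mathbf{g}_{-i})^\pi$, whereas you repackage the claim as ``$\mathbf{g}$ commutes with every coordinate permutation $P_\pi$'' and factor the step into $\mathbf{g}^s = S_s\circ\mathbf{g}$ plus $S_s P_\pi = P_\pi S_s$ --- a cosmetic reorganization of the same calculation, not a different route.
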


\noindent Thus, permuting the arguments of $g_i(\cdot,\ldots,\cdot)$ gives a mapping which is also present in the sequence ${\bf g}$,
possibly at a different position.
\begin{proof} 
Checking that $\mathds{1}$ satisfies \eqref{eq:mcluster:a} is straightforward.
Let us prove that for any ${\bf g}:D^m\rightarrow D^m$ satisfying \eqref{eq:mcluster:a} and for any symmetric
operation $s\in\calO^{(m-1)}$, the mapping ${\bf g}^s$ also
satisfies \eqref{eq:mcluster:a}. 
Consider $i\in[m]$. We need to show that $(s\circ{\bf g}_{-i})^\pi=s\circ{\bf g}_{-\pi(i)}$.
For each $\alpha\in D^m$ we have
\begin{eqnarray*}
(s\circ{\bf g}_{-i})^\pi(\alpha)
&=&s\circ{\bf g}_{-i}(\alpha^\pi) \\
&=&s(g_1(\alpha^\pi),\ldots,g_{i-1}(\alpha^\pi),g_{i+1}(\alpha^\pi),\ldots,g_m(\alpha^\pi)) \\
&=& s(g^\pi_1(\alpha),\ldots,g^\pi_{i-1}(\alpha),g^\pi_{i+1}(\alpha),\ldots,g^\pi_m(\alpha)) \\
&=& s(g_{\pi(1)}(\alpha),\ldots,g_{\pi(i-1)}(\alpha),g_{\pi(i+1)}(\alpha),\ldots,g_{\pi(m)}(\alpha)) = s\circ{\bf g}_{-\pi(i)}(\alpha)
\end{eqnarray*}
%
\end{proof}

For the next statement consider a connected component $\mathbb H\in{\sf Sinks}({\mathbb G,E})$,
and denote $I=\mathbb H\times [m]$. Below we use the Iverson bracket notation: $[\phi]=1$  if $\phi$ is true,
and $[\phi]=0$ otherwise.

\begin{lemmaX}\label{lemma:lambdaWeights}
(a) For any fixed distinct ${\bf g}',{\bf g}''\in\mathbb H$, there exists a vector $\lambda\in\mathbb R^{I}_{\ge 0}$ that satisfies 
\begin{eqnarray}
\sum_{{\bf h}:({\bf h},{\bf g})\in E}w({\bf h},{\bf g})\lambda_{{\bf h} i}
-\sum_{j\in[m]-\{i\}}\frac{\lambda_{{\bf g} j}}{m-1} &=& c_{{\bf g}} 
\hspace{20pt} \forall ({\bf g},i)\in I 
\label{eq:lambdaWeights:b}
\end{eqnarray}
where $c_{\bf g}=[{\bf g}={\bf g}']-[{\bf g}={\bf g}'']$. \\
(b) For any fixed distinct $i',i''\in[m]$, there exists a vector $\lambda\in\mathbb R^{I\cup \mathbb H}_{\ge 0}$ that satisfies
\begin{subequations}
\begin{eqnarray}
\sum_{{\bf h}:({\bf h},{\bf g})\in E}w({\bf h},{\bf g})\lambda_{{\bf h} i}
-\sum_{j\in[m]-\{i\}}\frac{\lambda_{{\bf g} j}}{m-1} &=& c_i \lambda_{{\bf g}}
\hspace{20pt} \forall ({\bf g},i)\in I \label{eq:lambdaWeights} \\
\sum_{{\bf g} \in \mathbb H}\lambda_{{\bf g}} &=& 1 
\end{eqnarray}
\end{subequations}
where $c_i=[i=i']-[i=i'']$.
\end{lemmaX}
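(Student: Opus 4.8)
The plan is to view both statements as feasibility questions for a system of linear equalities in the nonnegative orthant, and to apply a Farkas-type / linear-algebra argument exploiting the graph structure of $\mathbb{H}$. Let me set up the key object. For a vector $\lambda \in \mathbb{R}^I_{\ge 0}$ (with $I = \mathbb{H}\times[m]$), the left-hand side of \eqref{eq:lambdaWeights:b} and \eqref{eq:lambdaWeights} defines a linear operator $L$; write $(L\lambda)_{({\bf g},i)} = \sum_{{\bf h}:({\bf h},{\bf g})\in E} w({\bf h},{\bf g})\lambda_{{\bf h}i} - \sum_{j\in[m]\setminus\{i\}} \frac{\lambda_{{\bf g}j}}{m-1}$. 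The crucial observation is that this operator respects the ``$i$-coordinate'' structure only loosely: the in-edge term keeps $i$ fixed, but the diagonal correction term mixes all coordinates $j\ne i$. The first thing I would record is the fundamental identity obtained by summing $(L\lambda)_{({\bf g},i)}$ over all $i\in[m]$ and over all ${\bf g}$ in a fixed strongly connected sink component: using \eqref{eq:weightsSumToOne} (each ${\bf g}$ has total out-weight $1$) together with the fact that $\mathbb{H}$ is a sink (so all in-edges to ${\bf g}\in\mathbb{H}$ come from within $\mathbb{H}$ — this needs the sink property), one gets $\sum_{({\bf g},i)\in I}(L\lambda)_{({\bf g},i)} = \sum_{{\bf g}\in\mathbb{H}}\sum_{i}\lambda_{{\bf g}i} \cdot (\text{in-degree weight}) - \sum_{{\bf g}}\sum_j \lambda_{{\bf g}j}\cdot\frac{m-1}{m-1}$; careful bookkeeping shows the total is $0$. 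Hence the right-hand sides must sum to zero — which they do: in (a), $\sum_{{\bf g}} c_{\bf g}\cdot m = m([{\bf g}'\in\mathbb{H}] - [{\bf g}''\in\mathbb{H}]) = 0$ since ${\bf g}',{\bf g}''\in\mathbb{H}$; in (b), $\sum_{({\bf g},i)} c_i\lambda_{\bf g} = (\sum_{\bf g}\lambda_{\bf g})(\sum_i c_i) = 0$. So the consistency condition is automatically satisfied; what remains is to produce a \emph{nonnegative} solution.

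For the existence of a nonnegative solution I would argue as follows. Consider the system \eqref{eq:lambdaWeights:b} (case (a)) as $L\lambda = c$. I claim $L$, restricted appropriately, is essentially (the transpose of) a Laplacian-type / substochastic generator, so its kernel and cokernel are controlled by connectivity. More concretely, the cleanest route is probabilistic: interpret $\lambda$ as an occupation measure. Define a Markov chain on the state space $\mathbb{H}\times[m]$ where from state $({\bf g},i)$ one moves with probability $w({\bf g},{\bf h})$ to a state $({\bf h},i)$ keeping the second coordinate — no, that is not quite it, because of the $\frac{1}{m-1}$ diagonal term. Instead: the adjoint operator $L^{*}$ acting on ``potentials'' $u\in\mathbb{R}^I$ is $(L^*u)_{({\bf g},i)} = \sum_{{\bf h}:({\bf g},{\bf h})\in E} w({\bf g},{\bf h}) u_{({\bf h},i)} - \frac{1}{m-1}\sum_{j\ne i} u_{({\bf g},j)}$, which one recognizes as $P u - Q u$ where $P$ is the stochastic transition matrix of the graph (acting on the first coordinate) and $Q$ is the stochastic ``average over the other $m-1$ coordinates'' operator. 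Both $P$ and $Q$ are stochastic with a common fixed vector (the all-ones vector $\mathbf{1}$), and on the strongly connected component $\mathbb{H}$ the combined operator $\frac12(P+Q)$ is irreducible and aperiodic, so by Perron–Frobenius its only fixed vectors are multiples of $\mathbf{1}$, and $L^* u = 0 \iff u \in \mathbb{R}\mathbf{1}$. Therefore $\operatorname{range}(L) = \mathbf{1}^\perp$, which contains both right-hand sides by the computation above, giving a (possibly signed) solution. To upgrade to a \emph{nonnegative} solution, add a large multiple of a strictly-positive element of $\ker(L^*)^\perp$... actually the cleaner statement: since $L$ has a strictly positive vector in its kernel — namely the stationary-distribution-type vector $\lambda^{\circ}$ with $\lambda^{\circ}_{{\bf g}i}$ equal to the stationary measure, which exists and is strictly positive by irreducibility of $\mathbb{H}$ and satisfies $L\lambda^\circ = 0$ by the summation identity applied coordinatewise — we can take any solution $\lambda_0$ of $L\lambda = c$ and set $\lambda = \lambda_0 + t\lambda^\circ$ for $t$ large enough that all entries become nonnegative. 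This proves (a). Part (b) is the same, treating $\lambda_{\bf g}$ as auxiliary variables; the system $L\lambda = (c_i\lambda_{\bf g})_{({\bf g},i)}$ together with $\sum_{\bf g}\lambda_{\bf g}=1$ is again consistent (same orthogonality to $\mathbf{1}$), and we shift by the strictly positive kernel element to achieve nonnegativity, rescaling the $\lambda_{\bf g}$ part so that $\sum\lambda_{\bf g}=1$ is maintained (the kernel element has $\lambda^\circ_{\bf g}$-component proportional to the stationary mass of $\mathbb{H}$, so the shift is compatible with the normalization after renormalizing).

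The main obstacle I anticipate is verifying that $L^*$ has kernel exactly $\mathbb{R}\mathbf{1}$ — i.e. that the operator $\frac12(P+Q)$ (or more precisely the relevant convex combination dictated by the equations) is \emph{irreducible} on $\mathbb{H}\times[m]$. Irreducibility in the ${\bf g}$-direction is exactly strong connectedness of $\mathbb{H}$, which we have. Irreducibility in the $i$-direction comes from $Q$, the ``average over other coordinates'' operator: from $({\bf g},i)$ one reaches $({\bf g},j)$ for every $j\ne i$, so $Q$ alone is irreducible on $[m]$ provided $m\ge 2$ (and $m-1\ge 2$, i.e. $m\ge 3$, is what the theorem hypothesis gives, so this is fine). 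Combining: the product chain is irreducible, hence Perron–Frobenius applies. A secondary technical point is making the ``summation identity'' $\sum_i (L\lambda)_{({\bf g},i)}$ come out right for a \emph{single} ${\bf g}$ rather than only in aggregate — here one must use Proposition~\ref{prop:cluster:basic}, specifically that the out-edges and the coordinate-permutation action are compatible, to see that the in-flow to $({\bf g},\cdot)$ summed over the second coordinate equals exactly $\sum_j \lambda_{{\bf g}j}$; without \eqref{eq:mcluster:a} relating $g_i^\pi$ to $g_{\pi(i)}$ this bookkeeping would fail. So the order of operations is: (1) record the summation identity using \eqref{eq:weightsSumToOne} and Proposition~\ref{prop:cluster:basic}, deducing $c\perp\mathbf{1}$ and $L\lambda^\circ=0$ for a strictly positive $\lambda^\circ$; (2) identify $L^*$ with $P-Q$ and invoke Perron–Frobenius for irreducibility to get $\ker L^* = \mathbb{R}\mathbf{1}$, hence $\operatorname{range} L = \mathbf{1}^\perp \ni c$; (3) shift a signed solution by $t\lambda^\circ$ to land in $\mathbb{R}^I_{\ge 0}$; (4) repeat with the auxiliary $\lambda_{\bf g}$ variables and the normalization for part (b).
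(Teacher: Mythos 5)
Your overall strategy for part (a) — characterize $\operatorname{range}(L)$ via $\ker(L^{\top})^{\perp}$, show $c$ lies in the range, and then shift a signed solution by a strictly positive kernel element $\lambda^{\circ}=\pi\otimes\mathbf 1$ — is a genuinely different route from the paper's, which instead assumes infeasibility, applies Farkas' lemma to the nonnegative system, and derives a contradiction. Both routes must at some point exploit the aggregation over $i$ and strong connectivity of $\mathbb H$; where the paper's version cleanly packages this, yours has two substantive gaps.

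First, the claim $\ker L^{*}=\mathbb R\mathbf 1$ is not established and is in fact false in general. You invoke Perron--Frobenius for $\tfrac12(P+Q)$, but $L^{*}u=0$ is the equation $Pu=Qu$, which is \emph{not} the fixed-point equation $(P+Q)u=2u$. Writing $P=P_0\otimes I_m$ and $Q=I\otimes Q_0$, the eigenvalues of $P-Q$ are all differences $\alpha-\beta$ with $\alpha\in\operatorname{spec}(P_0)$, $\beta\in\operatorname{spec}(Q_0)=\{1,-\tfrac{1}{m-1}\}$; whenever $P_0$ has $-\tfrac{1}{m-1}$ as an eigenvalue, the kernel of $P-Q$ is strictly larger than $\mathbb R\mathbf 1$. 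What is actually true, and all you need, is the weaker statement $c\perp\ker L^{\top}$: if $L^{\top}y=0$, sum the equations over $i\in[m]$; with $u_{\bf g}=\sum_i y_{{\bf g}i}$ and \eqref{eq:weightsSumToOne} one obtains $\sum_{{\bf h}}w({\bf g},{\bf h})u_{\bf h}=u_{\bf g}$, so $u$ is harmonic on the strongly connected $\mathbb H$, hence constant, hence $\langle c,y\rangle=C\sum_{\bf g}c_{\bf g}=0$. This is precisely the aggregation step the paper performs inside its Farkas argument (there applied to the \emph{in}equality $L^{\top}y\ge 0$); without it your ``range'' claim is unsupported, and with it the Perron--Frobenius detour is unnecessary.

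Second, part (b) is not ``the same.'' The homogeneous system now couples $\lambda_I$ with the auxiliary variables $\lambda_{\mathbb H}$ via $(L\lambda_I)_{({\bf g},i)}=c_i\lambda_{\bf g}$, and the positive kernel element you propose (with $\mathbb H$-component proportional to $\pi$) would itself require solving this coupled system with a positive $\lambda_I$ — i.e.\ its existence is exactly the content of (b). Shifting by $(\pi\otimes\mathbf 1,0)$ fixes the sign of the $I$-part but leaves the $\mathbb H$-part untouched, so nonnegativity of $\lambda_{\mathbb H}$ is not obtained. The paper's proof of (b) is correspondingly more involved: it normalizes $u\equiv 0$ without loss of generality, uses \eqref{eq:zSumsToOne} to upgrade the Farkas inequalities \eqref{eq:LAJHGASFAFA} to equalities \eqref{eq:GJKSADF''}, and then derives a sign contradiction for $r_{\bf g}=\sum_i c_i y_{{\bf g}i}$. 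None of this is captured by your sketch. As a minor point, your parenthetical ``$\mathbb H$ is a sink (so all in-edges to ${\bf g}\in\mathbb H$ come from within $\mathbb H$)'' has the direction reversed: a sink component can have in-edges from outside $\mathbb H$; what the sink property guarantees is that all \emph{out}-edges stay inside $\mathbb H$, which is what makes the restricted chain stochastic (hence $\pi$ well-defined) and makes the dual argument closed under the out-edge sums.
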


A proof of this statement is given in Section~\ref{sec:proof:finiteValued:3}, and is based on Farkas' Lemma.

Now let us fix a function $f\in\Gamma$ of arity $n$.
Given labellings $x^1,\ldots,x^m$, we define labellings $x^{{\bf g}i}$ for all $({\bf g},i)\in I$ via
\begin{equation}
(x^{{\bf g} 1},\ldots,x^{{\bf g} m})={\bf g}(x^1,\ldots,x^m)
\label{eq:xgi:def}
\end{equation}
Note that $x^{{\bf g}i}$ is a function of $(x^1,\ldots,x^m)$; for brevity of notation, this dependence is not shown.
For a vector $\lambda\in\mathbb R^{\mathbb H}$ and an index $i\in[m]$ we define the function $F^\lambda_i$ via
\begin{equation}
F^\lambda_i(x^1,\ldots,x^m)=\sum_{{\bf g}\in \mathbb H}\lambda_{\bf g}f(x^{{\bf g}i})
\qquad\quad
\forall x^1,\ldots,x^m\in D^n
\label{eq:Fi:def}
\end{equation}

\begin{lemmaX} 
(a) It holds that $f^m(x^{{\bf g}'1},\ldots,x^{{\bf g}'m})=f^m(x^{{\bf g}''1},\ldots,x^{{\bf g}''m})$ for all
 ${\bf g}',{\bf g}''\in\mathbb H$ and $x^1,\ldots,x^m\in D^n$. 
\\
(b) There exists a probability distribution $\lambda$ over $\mathbb H$ 
such that
$F^\lambda_{i'}(x^1,\ldots,x^m)=F^\lambda_{i''}(x^1,\ldots,x^m)$ for all $i',i''\in[m]$ and $x^1,\ldots,x^m\in D^n$.
\label{th:cyclicPol}
\end{lemmaX}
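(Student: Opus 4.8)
The plan is to deduce both parts from one family of ``local'' inequalities attached to the edges of $(\mathbb G,E)$, weighted by the coefficients supplied by Lemma~\ref{lemma:lambdaWeights}. Fix $f\in\Gamma$ of arity $n$ and labellings $x^1,\dots,x^m\in D^n$; since $\Gamma$ is finite-valued, every quantity below is a finite real number, which will be used repeatedly. For ${\bf g}\in\mathbb G$ and $i\in[m]$ abbreviate $\phi_i({\bf g})=f(x^{{\bf g}i})$, so that $f^m(x^{{\bf g}1},\dots,x^{{\bf g}m})=\frac1m\sum_{i\in[m]}\phi_i({\bf g})$ and, for $\lambda\in\mathbb R^{\mathbb H}$, $F^\lambda_i(x^1,\dots,x^m)=\sum_{{\bf g}\in\mathbb H}\lambda_{\bf g}\phi_i({\bf g})$. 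The first step is to record that
\[
\sum_{{\bf h}:({\bf g},{\bf h})\in E} w({\bf g},{\bf h})\,\phi_i({\bf h})\ \le\ \frac1{m-1}\sum_{j\in[m]\setminus\{i\}}\phi_j({\bf g})
\qquad\text{for all }{\bf g}\in\mathbb G,\ i\in[m].\quad(\star)
\]
This is immediate from the definitions: writing $(y^1,\dots,y^m)={\bf g}(x^1,\dots,x^m)$, the definition of ${\bf g}^s$ (see~\eqref{eq:gh}) gives $\phi_i({\bf g}^s)=f(s((y^1,\dots,y^m)_{-i}))$; the definition of $w$ groups $\sum_{s\in\supp(\omega)}\omega(s)\phi_i({\bf g}^s)$ into the left-hand side of $(\star)$; and since $\omega$ is a symmetric fractional polymorphism of arity $m-1$, this sum is at most $f^{m-1}((y^1,\dots,y^m)_{-i})$, which is the right-hand side.

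For part (a), assume ${\bf g}'\neq{\bf g}''$ (otherwise there is nothing to prove) and take $\lambda\in\mathbb R^{I}_{\ge 0}$ as in Lemma~\ref{lemma:lambdaWeights}(a). Multiply $(\star)$ by $\lambda_{{\bf g}i}\ge 0$ and sum over $({\bf g},i)\in I$. Here the fact that $\mathbb H$ is a \emph{sink} of the DAG of strongly connected components is used: every edge leaving a node of $\mathbb H$ ends inside $\mathbb H$, so both sides may be reindexed over $I$ — the left-hand side becomes $\sum_{({\bf g},i)\in I}\phi_i({\bf g})\sum_{{\bf h}:({\bf h},{\bf g})\in E}w({\bf h},{\bf g})\lambda_{{\bf h}i}$ (swapping the endpoints of each edge) and the right-hand side becomes $\sum_{({\bf g},i)\in I}\phi_i({\bf g})\cdot\frac1{m-1}\sum_{j\neq i}\lambda_{{\bf g}j}$ (swapping $i$ and $j$). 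Subtracting, the combined inequality reads $\sum_{({\bf g},i)\in I}\phi_i({\bf g})\bigl(\sum_{{\bf h}:({\bf h},{\bf g})\in E}w({\bf h},{\bf g})\lambda_{{\bf h}i}-\frac1{m-1}\sum_{j\neq i}\lambda_{{\bf g}j}\bigr)\le 0$; by Lemma~\ref{lemma:lambdaWeights}(a) the parenthesis equals $c_{\bf g}=[{\bf g}={\bf g}']-[{\bf g}={\bf g}'']$, which does not depend on $i$, so the inequality collapses to $\sum_{i\in[m]}\phi_i({\bf g}')\le\sum_{i\in[m]}\phi_i({\bf g}'')$. Exchanging ${\bf g}'$ and ${\bf g}''$ gives the reverse inequality, and dividing by $m$ proves (a).

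For part (b), I would instead take $\lambda$ to be a stationary distribution of the matrix $(w({\bf g},{\bf h}))_{{\bf g},{\bf h}\in\mathbb H}$ (with $w({\bf g},{\bf h})=0$ when $({\bf g},{\bf h})\notin E$). Because $\mathbb H$ is a sink component, \eqref{eq:weightsSumToOne} shows that the rows of this matrix still sum to $1$, and since $\mathbb H$ is strongly connected the matrix is irreducible; hence such a $\lambda$ exists, is unique, and is strictly positive on $\mathbb H$. (Equivalently, the existence of a suitable nonnegative $\lambda$ is what Lemma~\ref{lemma:lambdaWeights}(b) provides via Farkas' lemma.) Multiplying $(\star)$ by $\lambda_{\bf g}\ge 0$ and summing over ${\bf g}\in\mathbb H$, stationarity turns the left-hand side into $\sum_{{\bf h}\in\mathbb H}\phi_i({\bf h})\sum_{{\bf g}\in\mathbb H}\lambda_{\bf g}w({\bf g},{\bf h})=\sum_{{\bf h}\in\mathbb H}\lambda_{\bf h}\phi_i({\bf h})=F^\lambda_i(x^1,\dots,x^m)$, while the right-hand side becomes $\frac1{m-1}\sum_{j\neq i}F^\lambda_j(x^1,\dots,x^m)$. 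Putting $S=\sum_{j\in[m]}F^\lambda_j(x^1,\dots,x^m)$, this reads $F^\lambda_i\le(S-F^\lambda_i)/(m-1)$, i.e.\ $F^\lambda_i\le S/m$ for every $i\in[m]$; since the $m$ finite real numbers $F^\lambda_1,\dots,F^\lambda_m$ sum to $S$ and are each at most $S/m$, they must all equal $S/m$. As this holds for every $f\in\Gamma$ and all $x^1,\dots,x^m$, part (b) follows.

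I expect part (b) to be the main obstacle. Applying Lemma~\ref{lemma:lambdaWeights}(b) to a single pair $i',i''$ only yields a distribution (depending on that pair) for which $F^\lambda_{i'}\le F^\lambda_{i''}$, and these one-sided inequalities for different pairs need not be realised by a common distribution; passing to the stationary distribution of the sink component is what produces one $\lambda$ working for all pairs at once, and it is precisely the sink property that makes the restricted transition matrix genuinely stochastic. Finally, finite-valuedness is indispensable: if some $\phi_i({\bf g})$ equalled $+\infty$, both the cancellation in part (a) and the ``each term at most the average, hence all equal'' step in part (b) would break down — consistent with Example~\ref{example:STPcycle}, where the analogous statement fails for a general-valued language.
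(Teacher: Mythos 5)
Your proposal is correct. Part (a) follows the paper's argument essentially verbatim: you set up the per-node polymorphism inequality $(\star)$ (which is exactly the inequality preceding eq.~\eqref{eq:FullInequality} in the paper, combined via Proposition~\ref{prop:asjgasdg} and the definition of $w$), multiply by the Farkas weights from Lemma~\ref{lemma:lambdaWeights}(a), sum over $I$, and reindex — correctly noting that the reindexing is legitimate only because $\mathbb H$ is a sink (so that every $\bf h$ with $({\bf g},{\bf h})\in E$ and ${\bf g}\in\mathbb H$ again lies in $\mathbb H$). This reproduces eq.~\eqref{eq:FullInequality} and the conclusion of part (a).

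For part (b) you take a genuinely different and, in my view, cleaner route. The paper proves (b) by a second Farkas argument (Lemma~\ref{lemma:lambdaWeights}(b)), which produces a coupled system of auxiliary weights $\lambda_{{\bf g}i}$ together with a probability vector $\lambda_{\bf g}$, plugs them into eq.~\eqref{eq:FullInequality}, and then still needs an extra permutation argument to show that the $\mathbb H$-component of $\lambda$ can be taken independent of the pair $(i',i'')$. You instead let $\lambda$ be a stationary distribution of the restricted matrix $(w({\bf g},{\bf h}))_{{\bf g},{\bf h}\in\mathbb H}$. This matrix is row-stochastic precisely because $\mathbb H$ is a sink (so~\eqref{eq:weightsSumToOne} survives restriction), and strong connectivity gives irreducibility, hence existence of a stationary $\lambda$ (in fact existence alone would suffice here, since Lemma~\ref{lemma:RangeFASFNA} only needs $\lambda$ to be a probability distribution on $\mathbb H$, not strictly positive). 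Multiplying $(\star)$ by $\lambda_{\bf g}$, stationarity telescopes the left side to $F^\lambda_i$, yielding $F^\lambda_i\le (S-F^\lambda_i)/(m-1)$ for all $i$, i.e.\ each $F^\lambda_i\le S/m$; since they are finite reals summing to $S$, they are all equal. This bypasses Lemma~\ref{lemma:lambdaWeights}(b) and the permutation trick entirely, trading one Farkas computation for an appeal to Perron--Frobenius, and makes it transparent exactly where the sink and strong-connectivity properties of $\mathbb H$ enter. The cost is that (a) and (b) are no longer derived from a single master inequality as in the paper, but the net argument for (b) is shorter. Your concluding remarks — that finite-valuedness is needed both for the cancellation in (a) and for the ``each at most the average, hence all equal'' step in (b), and that Example~\ref{example:STPcycle} shows the statement fails without it — are also correct.
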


A proof of Lemma~\ref{th:cyclicPol} is given in section~\ref{sec:proof:finiteValued:4}. 
The idea of the proof is as follows. 
Let us fix ${\bf g}',{\bf g}''$ and labellings $x^1,\ldots,x^m\in D^n$.
We will write down inequalities for the fractional polymorphism $\omega$ applied
to $m-1$ labellings $(x^{{\bf g} 1},\ldots,x^{{\bf g} m})_{-i}$ with $({\bf g},i)\in I$.
We will then take a linear combination of these inequalities with weights $\lambda_{{\bf g}i}\ge 0$
constructed in Lemma~\ref{lemma:lambdaWeights}(a); this will give inequality $f^m(x^{{\bf g}'1},\ldots,x^{{\bf g}'m})\le f^m(x^{{\bf g}''1},\ldots,x^{{\bf g}''m})$.
This inequality should hold for all choices of  ${\bf g}',{\bf g}''$, therefore it must actually be an equality.
Part (b) of Lemma \ref{th:cyclicPol} will be proved in a similar way.

With Lemma~\ref{th:cyclicPol} we will finally be able to prove the following (see Section~\ref{sec:proof:finiteValued:5}).
\begin{lemmaX}
Let  ${\bf g}^\ast$ be a mapping in $\mathbb G^\ast$ and
 ${\bf p}\in\calO^{(m\rightarrow m)}$ be any mapping such
that   ${\bf p}(\alpha)$ is a permutation of $\alpha$ for all $\alpha\in D^m$.
Denote
$$Range_n({\bf g}^\ast)=\{{\bf g}^\ast(x^1,\ldots,x^m)\:|\:x^1,\ldots,x^m\in D^n\}$$
For any function $f\in\Gamma$ of arity $n$ and any $(x^1,\ldots,x^m)\in Range_n({\bf g}^\ast)$ it holds that $f^m(x^1,\ldots,x^m)=f^m({\bf p}(x^1,\ldots,x^m))$.
\label{th:permutationInvariance}
\end{lemmaX}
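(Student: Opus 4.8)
The plan is to fix ${\bf g}^\ast\in\mathbb G^\ast$ and let $\mathbb H\in{\sf Sinks}(\mathbb G,E)$ be the strongly connected sink component with ${\bf g}^\ast\in\mathbb H$; fix $f\in\Gamma$ of arity $n$ and $x^1,\dots,x^m\in D^n$, and write $(y^1,\dots,y^m)={\bf g}^\ast(x^1,\dots,x^m)$, a generic element of $Range_n({\bf g}^\ast)$. The goal is $f^m(y^1,\dots,y^m)=f^m({\bf p}(y^1,\dots,y^m))$.

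First I would carry out a reduction showing that ${\bf p}(y^1,\dots,y^m)$ is itself in $Range_n({\bf g}^\ast)$. For a coordinate $k\in[n]$, ${\bf p}$ sends the column $(y^1_k,\dots,y^m_k)\in D^m$ to a permutation of itself, say to $(y^{\pi_k(1)}_k,\dots,y^{\pi_k(m)}_k)$ with $\pi_k\in S_m$ determined by that column. By Proposition~\ref{prop:cluster:basic}, permuting the arguments of $g^\ast_i$ by $\pi$ yields $g^\ast_{\pi(i)}$, equivalently $g^\ast_{\pi(j)}(a_1,\dots,a_m)=g^\ast_j(a_{\pi(1)},\dots,a_{\pi(m)})$; applying this with $(a_1,\dots,a_m)=(x^1_k,\dots,x^m_k)$ gives $({\bf p}(y^1,\dots,y^m))^j_k=g^\ast_{\pi_k(j)}(x^1_k,\dots,x^m_k)=g^\ast_j(x^{\pi_k(1)}_k,\dots,x^{\pi_k(m)}_k)$. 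Hence, defining $\tilde x^l\in D^n$ by $\tilde x^l_k=x^{\pi_k(l)}_k$, we get ${\bf p}(y^1,\dots,y^m)={\bf g}^\ast(\tilde x^1,\dots,\tilde x^m)$, and $(\tilde x^1,\dots,\tilde x^m)$ is obtained from $(x^1,\dots,x^m)$ by permuting the entries within each coordinate. So it suffices to prove that $f^m({\bf g}^\ast(\cdot))$ is invariant under such coordinatewise permutations of the input.

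For this second step I would use Lemma~\ref{th:cyclicPol} and finite-valuedness. By part (a), $f^m({\bf g}(x^1,\dots,x^m))$ is the same for every ${\bf g}\in\mathbb H$, and by part (b) there is a probability distribution $\lambda$ on $\mathbb H$ for which $F^\lambda_i$ is independent of $i$; averaging over $i$ and using (a) together with $\sum_{\bf g}\lambda_{\bf g}=1$ identifies this common value as $f^m({\bf g}^\ast(x^1,\dots,x^m))$, so $\sum_{{\bf g}\in\mathbb H}\lambda_{\bf g}f(g_i(x^1,\dots,x^m))=f^m({\bf g}^\ast(x^1,\dots,x^m))$ for every position $i$ (and likewise with $\tilde x$ in place of $x$). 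Now I would apply the fractional-polymorphism inequality for $\omega$ to the $(m-1)$-element subsets of the labellings $g_i(x^1,\dots,x^m)$ and of the labellings $g_i(\tilde x^1,\dots,\tilde x^m)$, and take the $\lambda$-weighted combination of these inequalities, interpreting the resulting sums by means of the two identities above; this produces an inequality relating $f^m({\bf g}^\ast(\tilde x^1,\dots,\tilde x^m))$ and $f^m({\bf g}^\ast(x^1,\dots,x^m))$. Since the relationship between $(x^1,\dots,x^m)$ and $(\tilde x^1,\dots,\tilde x^m)$ is symmetric — it is undone by another coordinatewise permutation, equivalently by another valid ${\bf p}'$ applied to ${\bf p}(y^1,\dots,y^m)\in Range_n({\bf g}^\ast)$ — the reverse inequality follows identically, and hence $f^m({\bf p}(y^1,\dots,y^m))=f^m(y^1,\dots,y^m)$.

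The main obstacle is this last step. The mapping ${\bf p}\circ{\bf g}^\ast$ need not be a member of $\mathbb G$ at all, so Lemma~\ref{th:cyclicPol}(a) cannot be invoked for it directly, and one must route everything through the Farkas-derived weights $\lambda$ of Lemma~\ref{th:cyclicPol}(b) and one extra application of some $s\in\supp(\omega)$, matching the coordinate-dependent choice of permutation made by ${\bf p}$ against the single, coordinate-independent re-indexing afforded by Proposition~\ref{prop:cluster:basic}. It is precisely here that finite-valuedness is indispensable: it is what lets the fractional-polymorphism inequalities be added, subtracted and chained into equalities with no obstruction from the value $\infty$, the sink property of ${\bf g}^\ast$ ruling out any strict decrease. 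Example~\ref{example:STPcycle} confirms that the statement genuinely fails for general-valued languages, so no argument avoiding this hypothesis can work.
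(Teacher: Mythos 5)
Your reduction is correct and clean, and in fact your identity $F^\lambda(x^1,\ldots,x^m)=f^m({\bf g}^\ast(x^1,\ldots,x^m))$ for \emph{all} $x^1,\ldots,x^m\in D^n$ is a neat packaging of what the paper obtains in two steps (Lemma~\ref{lemma:g:retraction} plus Lemma~\ref{lemma:RangeFASFNA}). You also correctly use Proposition~\ref{prop:cluster:basic} to write ${\bf p}(y^1,\ldots,y^m)={\bf g}^\ast(\tilde x^1,\ldots,\tilde x^m)$ where each column of $(\tilde x^1,\ldots,\tilde x^m)$ is a coordinatewise permutation of the corresponding column of $(x^1,\ldots,x^m)$. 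This reduces the lemma to showing $F^\lambda(x^1,\ldots,x^m)=F^\lambda(\tilde x^1,\ldots,\tilde x^m)$.

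The gap is in that final step, which is precisely the content of the paper's Lemma~\ref{lemma:HGKAJSDHAGAS}, and the argument you propose for it does not go through. You want to apply the $\omega$ inequality again to $(m-1)$-element subsets and take a $\lambda$-weighted combination; but the $\omega$ inequalities for $(m-1)$-subsets of $\{g_1(x),\ldots,g_m(x)\}$ only involve $x$-derived labellings, and likewise for $\tilde x$ --- they never compare the two sides, so no linear combination of them can produce an inequality between $f^m({\bf g}^\ast(x))$ and $f^m({\bf g}^\ast(\tilde x))$. Moreover, taking the $\lambda$-weighted combination of exactly these inequalities is what \emph{proves} Lemma~\ref{th:cyclicPol}, which you already invoked; it does not supply new information. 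The paper's argument is different and does not use $\omega$ at all in this step: it observes that it suffices to handle a single transposition $(ij)$ within a single coordinate $v$, and that since $m\geq 3$ there is a $k\notin\{i,j\}$; for any ${\bf g}\in\mathbb H$, Proposition~\ref{prop:cluster:basic} gives $g^\pi_k=g_{\pi(k)}=g_k$, so the labelling $x^{{\bf g}k}$ is literally unchanged by the transposition, hence $F^\lambda_k$ is unchanged, hence $F^\lambda=F^\lambda_k$ is unchanged. You never locate this transposition trick, nor do you flag the essential role of $m\geq 3$. Your remark that finite-valuedness is ``indispensable precisely here'' is also misplaced: finite-valuedness is used in deriving Lemma~\ref{th:cyclicPol} (turning the chained inequalities into equalities); once that lemma is in hand, the permutation-invariance step is a purely combinatorial consequence of Proposition~\ref{prop:cluster:basic} and requires no further use of the hypothesis.
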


This will imply Theorem~\ref{thm:finite:gen}.
Indeed, we can construct an $m$-ary symmetric fractional polymorphism of $\Gamma$
as follows. Take the vector $\rho^\ast$ from Lemma~\ref{th:rho:suppH}, take a  mapping ${\bf p}\in\calO^{(m\rightarrow m)}$
that orders tuples $\alpha=(a_1,\ldots,a_m)\in D^m$ according to some total order on $D$,
and define the following vector.
$$
\rho=\sum_{{\bf g}\in \supp(\rho^\ast)}\rho^\ast({\bf g})\chi_{{\bf p}\circ{\bf g}}
$$
Then, $\Gamma$ admits $\rho$ since for any $f \in \Gamma$ and
for any  labellings $x^1,\ldots,x^m\in D^n$ we have
\begin{eqnarray*}
\sum_{{\bf h}\in \supp(\rho)} \!\!\!\!\rho({\bf h})f^m({\bf h}(x^1,\ldots,x^m))
&=&\!\!\!\sum_{{\bf g}\in \supp(\rho^\ast)}\!\!\! \rho^\ast({\bf g})f^m({\bf p}({\bf g}(x^1,\ldots,x^m)))\\
&=&\!\!\!\sum_{{\bf g}\in \supp(\rho^\ast)}\!\!\! \rho^\ast({\bf g})f^m({\bf g}(x^1,\ldots,x^m))
\le f^m(x^1,\ldots,x^m)
\end{eqnarray*}
\normalsize
Note, for any ${\bf h}=(h_1,\ldots,h_m)\in \supp(\rho)$, the operations $h_1,\ldots,h_m$ are symmetric.
Indeed, we have ${\bf h}={\bf p}\circ{\bf g}$ for some ${\bf g}\in \mathbb G^\ast$.
If $\alpha\in D^m$ and $\pi$ is a permutation of the set $[m]$ then  
${\bf h}(\alpha^\pi)={\bf p}({\bf g}(\alpha^\pi))={\bf p}([{\bf g}(\alpha)]^\pi)={\bf p}({\bf g}(\alpha))={\bf h}(\alpha)$
which implies the claim.

A symmetric $m$-ary fractional polymorphism of $\Gamma$ is finally given by
$\omega_m=\sum_{{\bf g}\in{\mathbb G}} \rho({\bf g}) \sum_{g\in{\bf g}} \frac{1}{m} \chi_g$.

It remains to prove Lemmas~\ref{lemma:lambdaWeights}, \ref{th:cyclicPol} and \ref{th:permutationInvariance}.


\subsection{Proof of Lemma~\ref{lemma:lambdaWeights}}\label{sec:proof:finiteValued:3}

\renewcommand{\thelemmaRESTATED}{\ref{lemma:lambdaWeights}}
The lemma has two parts. For each part we will use a similar technique, namely we will assume
the opposite and then derive a contradiction by using Farkas' lemma.
\begin{lemmaRESTATED}[restated] 
(a) There exists a vector $\lambda\in\mathbb R^{I}_{\ge 0}$ that satisfies 
\begin{eqnarray}
\sum_{{\bf h}:({\bf h},{\bf g})\in E}w({\bf h},{\bf g})\lambda_{{\bf h} i}
-\sum_{j\in[m]-\{i\}}\frac{\lambda_{{\bf g} j}}{m-1} &=& c_{{\bf g}} 
\hspace{20pt} \forall ({\bf g},i)\in I 
\end{eqnarray}
where $c_{\bf g}=[{\bf g}={\bf g}']-[{\bf g}={\bf g}'']$.
\end{lemmaRESTATED}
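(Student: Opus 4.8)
The plan is to treat~\eqref{eq:lambdaWeights:b} as a finite system of linear equalities in the nonnegative unknowns $(\lambda_{{\bf g}i})_{({\bf g},i)\in I}$ — finite because $\calO^{(m\rightarrow m)}$, and hence $\mathbb G$ and its subset $\mathbb H$, is finite — and to argue by Farkas' lemma~\cite{Schrijver86:ILP}. Concretely, I would assume for contradiction that no nonnegative $\lambda$ satisfies~\eqref{eq:lambdaWeights:b}. Then Farkas' lemma supplies a vector $y=(y_{{\bf g}i})_{({\bf g},i)\in I}\in\mathbb R^{I}$ with
\[
\sum_{{\bf g}:({\bf h},{\bf g})\in E} w({\bf h},{\bf g})\,y_{{\bf g}j} \;-\; \frac{1}{m-1}\sum_{i\in[m]\setminus\{j\}} y_{{\bf h}i} \;\le\; 0\qquad\forall({\bf h},j)\in I,
\]
together with $\sum_{({\bf g},i)\in I} c_{\bf g}\,y_{{\bf g}i}>0$, i.e.\ $\sum_{i\in[m]} y_{{\bf g}'i} > \sum_{i\in[m]} y_{{\bf g}''i}$.

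The key step is then to \emph{sum} these dual inequalities over $j\in[m]$. Setting $Y_{\bf g}=\sum_{i\in[m]} y_{{\bf g}i}$ for ${\bf g}\in\mathbb H$ and using $\sum_{j\in[m]}\sum_{i\ne j} y_{{\bf h}i}=(m-1)Y_{\bf h}$, the aggregated inequality collapses to
\[
\sum_{{\bf g}:({\bf h},{\bf g})\in E} w({\bf h},{\bf g})\,Y_{\bf g}\;\le\; Y_{\bf h}\qquad\forall{\bf h}\in\mathbb H .
\]
Since $\mathbb H$ is a \emph{sink} strongly connected component, every edge leaving a node of $\mathbb H$ lands in $\mathbb H$, and by~\eqref{eq:weightsSumToOne} the weights $w({\bf h},\cdot)$ sum to $1$; so the left-hand side is a convex combination of the values of $Y$ on $\mathbb H$, i.e.\ $Y$ is superharmonic for the Markov chain on $\mathbb H$ given by the weights $w$.

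I would finish with the discrete maximum principle: choosing ${\bf h}^\ast\in\mathbb H$ with $Y_{{\bf h}^\ast}=\min_{\mathbb H} Y$ forces a convex combination of values $\ge Y_{{\bf h}^\ast}$ to be $\le Y_{{\bf h}^\ast}$, so every out-neighbour of ${\bf h}^\ast$ — all edge weights being strictly positive — also attains that minimum; strong connectivity of $\mathbb H$ then propagates this along directed paths to every node, so $Y$ is constant on $\mathbb H$. This gives $\sum_i y_{{\bf g}'i}=Y_{{\bf g}'}=Y_{{\bf g}''}=\sum_i y_{{\bf g}''i}$, contradicting the strict inequality from Farkas' lemma; hence~\eqref{eq:lambdaWeights:b} is feasible, proving part (a).

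The main obstacle I anticipate is spotting the right aggregation — the unweighted sum over the coordinate index $i\in[m]$ — which is exactly what decouples the $|I|$ per-coordinate dual inequalities into a single superharmonicity statement for the scalar potential $Y$ on the sink component; after that reduction, the rest is the routine maximum-principle argument for an irreducible finite chain. The same scheme should handle part (b): there the right-hand side $c_i\lambda_{\bf g}$ is homogeneous in the coordinate index and one adds the normalisation $\sum_{{\bf g}\in\mathbb H}\lambda_{\bf g}=1$, so the Farkas alternative carries one extra dual scalar, but aggregating over $i\in[m]$ should again force a potential on $\mathbb H$ to be constant and contradict the dual objective.
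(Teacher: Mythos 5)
Your proposal is correct and follows essentially the same route as the paper's own proof: assume infeasibility, apply Farkas' lemma, sum the dual inequalities over the coordinate index $i\in[m]$ to obtain a scalar potential $Y_{\bf g}=\sum_i y_{{\bf g}i}$ (the paper calls it $u_{\bf g}$), and then use the positivity of the edge weights together with the fact that $\mathbb H$ is a strongly connected sink to force $Y$ to be constant, contradicting the strict Farkas inequality. The only differences are cosmetic --- you take the opposite sign convention in Farkas (so you minimise where the paper maximises) and phrase the conclusion as a discrete maximum principle rather than an $\arg\max$ argument; the substance is identical.
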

\begin{proof}
Suppose that the claim does not hold. By Farkas' lemma~\cite{Schrijver86:ILP} there exists 
a vector $y\in{\mathbb R}^I$ 
such that
\begin{subequations}\label{eq:partb:Farkas}
\begin{eqnarray}
\sum_{({\bf g},i)\in I} c_{{\bf g}}y_{{\bf g}i} &<& 0 \label{eq:partb:Farkas:a} \\
\sum_{{\bf h}:({\bf g},{\bf h})\in E}w({\bf g},{\bf h})y_{{\bf h} i} 
-\sum_{j\in[m]-\{i\}}\frac{y_{{\bf g} j}}{m-1} &\ge & 0\hspace{20pt} \forall ({\bf g},i)\in I\label{eq:parta:LAJHGASFAFA}
\end{eqnarray}
\end{subequations}
Denote $u_{\bf g}=\sum_{i\in[m]}y_{{\bf g} i}$. 
Summing inequalities \eqref{eq:parta:LAJHGASFAFA} over $i\in[m]$ gives
\begin{eqnarray}
\sum_{{\bf h}:({\bf g},{\bf h})\in E}w({\bf g},{\bf h})u_{{\bf h}} 
-u_{\bf g} &\ge & 0\hspace{20pt} \forall {\bf g}\in \mathbb H\label{eq:GIKJHDF}
\end{eqnarray}
Denote ${\mathbb H}^\ast=\arg\max \{u_{\bf g}\:|\:{\bf g}\in \mathbb H\}$. From~\eqref{eq:weightsSumToOne} and~\eqref{eq:GIKJHDF}
we conclude that ${\bf g}\in {\mathbb H}^\ast$ implies ${\bf h}\in {\mathbb H}^\ast$ for all $({\bf g},{\bf h})\in E$. Therefore, ${\mathbb H}^\ast=\mathbb H$ (since $\mathbb H$ is a strongly connected component of $G$).

We showed that $u_{\bf g}=C$ for all ${\bf g}\in \mathbb H$ where $C\in\mathbb R$ is some constant.
But then the expression on the LHS of~\eqref{eq:partb:Farkas:a} equals $C-C=0$, a contradiction.
\end{proof}

\begin{lemmaRESTATED}[restated] 
(b) There exists a vector $\lambda\in\mathbb R^{I\cup \mathbb H}_{\ge 0}$ that satisfies
\begin{subequations}
\begin{eqnarray}
\sum_{{\bf h}:({\bf h},{\bf g})\in E}w({\bf h},{\bf g})\lambda_{{\bf h} i}
-\sum_{j\in[m]-\{i\}}\frac{\lambda_{{\bf g} j}}{m-1} &=& c_i \lambda_{{\bf g}}
\hspace{20pt} \forall ({\bf g},i)\in I \\ 
\sum_{{\bf g} \in \mathbb H}\lambda_{{\bf g}} &=& 1 
\end{eqnarray}
\end{subequations}
where $c_i=[i=i']-[i=i'']$.
\end{lemmaRESTATED}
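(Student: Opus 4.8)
The plan is to follow the template of part (a): assume the displayed system has no solution $\lambda\ge 0$ and derive a contradiction from Farkas' lemma, now carrying along the extra variables $\lambda_{\bf g}$, ${\bf g}\in\mathbb H$, and the normalisation equality $\sum_{{\bf g}\in\mathbb H}\lambda_{\bf g}=1$. Dualising the equalities indexed by $({\bf g},i)\in I$ with free multipliers $y_{{\bf g}i}$, and the normalisation equality with a free multiplier $z$, infeasibility produces $y\in\mathbb R^I$ and $z\in\mathbb R$ satisfying three conditions: (i) $z<0$ (this is the value $b^\top y$ of the Farkas certificate, since the right-hand side is supported on the normalisation row); (ii) for every $({\bf g},i)\in I$ the column inequality for $\lambda_{{\bf g}i}$, namely $\sum_{{\bf h}:({\bf g},{\bf h})\in E}w({\bf g},{\bf h})y_{{\bf h}i}-\sum_{j\in[m]-\{i\}}\frac{y_{{\bf g}j}}{m-1}\ge 0$, which is literally the inequality \eqref{eq:parta:LAJHGASFAFA} from part (a); and (iii) for every ${\bf g}\in\mathbb H$ the column inequality for $\lambda_{\bf g}$, namely $z-\sum_{i\in[m]}c_iy_{{\bf g}i}\ge 0$. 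Since $c_i=[i=i']-[i=i'']$ and $i'\ne i''$, condition (iii) together with (i) gives $d_{\bf g}:=y_{{\bf g}i'}-y_{{\bf g}i''}=\sum_{i\in[m]}c_iy_{{\bf g}i}\le z<0$ for every ${\bf g}\in\mathbb H$; so $d_{\bf g}<0$ throughout $\mathbb H$.

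Next I would mine condition (ii). Put $u_{\bf g}=\sum_{i\in[m]}y_{{\bf g}i}$ and sum (ii) over $i\in[m]$; since $\sum_{i\in[m]}\sum_{j\in[m]-\{i\}}\frac{y_{{\bf g}j}}{m-1}=u_{\bf g}$, this yields $\sum_{{\bf h}:({\bf g},{\bf h})\in E}w({\bf g},{\bf h})u_{\bf h}-u_{\bf g}\ge 0$ for all ${\bf g}\in\mathbb H$, which is exactly inequality \eqref{eq:GIKJHDF}. Invoking \eqref{eq:weightsSumToOne} and the fact that $\mathbb H$ is a sink strongly connected component (so all the edges counted above stay inside $\mathbb H$), the same maximum argument used in part (a) forces $u$ to be constant on $\mathbb H$, say $u_{\bf g}=C$. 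Now rewrite (ii) for a fixed $i$ using $\sum_{j\in[m]-\{i\}}\frac{y_{{\bf g}j}}{m-1}=\frac{u_{\bf g}-y_{{\bf g}i}}{m-1}$ and multiply by $m-1>0$, obtaining $(m-1)\sum_{{\bf h}:({\bf g},{\bf h})\in E}w({\bf g},{\bf h})y_{{\bf h}i}+y_{{\bf g}i}\ge C$. Summing these $m$ inequalities gives exactly $(m-1)C+C=mC$ on the left, so every one of them is an equality: $(m-1)\sum_{{\bf h}:({\bf g},{\bf h})\in E}w({\bf g},{\bf h})y_{{\bf h}i}+y_{{\bf g}i}=C$ for all $({\bf g},i)\in I$.

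The crux is then to subtract the identity for $i=i''$ from that for $i=i'$, which cancels $C$ and produces the linear relation $(m-1)\sum_{{\bf h}:({\bf g},{\bf h})\in E}w({\bf g},{\bf h})d_{\bf h}+d_{\bf g}=0$, i.e.\ $\sum_{{\bf h}:({\bf g},{\bf h})\in E}w({\bf g},{\bf h})d_{\bf h}=-\tfrac{1}{m-1}d_{\bf g}$ for every ${\bf g}\in\mathbb H$. I finish with a one-line extremal estimate: let $\hat d=\max_{{\bf g}\in\mathbb H}d_{\bf g}$, attained at some ${\bf g}^\sharp$; since the weights $w({\bf g}^\sharp,\cdot)$ are nonnegative, sum to $1$, and are supported inside $\mathbb H$, the left-hand side at ${\bf g}^\sharp$ is a convex combination of numbers $\le\hat d$, hence $-\tfrac{1}{m-1}\hat d\le\hat d$, which forces $\hat d\ge 0$. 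This contradicts $d_{\bf g}<0$ for all ${\bf g}\in\mathbb H$, so the original system does have a solution $\lambda\in\mathbb R^{I\cup\mathbb H}_{\ge 0}$.

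I expect the only delicate part to be the Farkas bookkeeping at the start — correctly reading off the two families of column conditions (for the $\lambda_{{\bf g}i}$'s and for the $\lambda_{\bf g}$'s) and tracking the sign of the normalisation multiplier $z$ — together with the observation of which identities to subtract so that the constant $C$ cancels and the $-\tfrac{1}{m-1}$-eigenvalue-type relation for $d$ emerges; everything after that, including the reuse of the part-(a) argument that $u$ is constant on the sink component and the final extremal inequality, is routine.
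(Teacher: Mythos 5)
Your proof is correct and essentially mirrors the paper's argument: the same Farkas dualisation producing $y\in\mathbb R^I$ and $z<0$, the same reuse of part (a) to force $u_{\bf g}=\sum_i y_{{\bf g}i}$ constant on the sink component $\mathbb H$, and the same derivation of forced equalities in place of the dual inequalities, ending with a sign contradiction on $d_{\bf g}=r_{\bf g}<0$. The only cosmetic deviations are that the paper first normalises $C=0$ by shifting $y$ and then derives the equality version of the dual inequality via a second summation over $i\in[m]-\{k\}$, whereas you keep $C$ and deduce equality by noting that the sum over all $i$ is tight; and your final step is an extremal (maximum-principle) estimate where the paper simply observes that a convex combination of strictly negative numbers plus a strictly negative term cannot vanish; these are interchangeable.
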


\begin{proof}
Suppose that the claim does not hold. By Farkas' lemma~\cite{Schrijver86:ILP} there exist 
a vector $y\in{\mathbb R}^I$ and a scalar $z\in\mathbb R$
such that
\begin{subequations}\label{eq:Farkas}
\begin{eqnarray}
z &<& 0 \label{eq:Farkas:a} \\
z - \sum_{i\in[m]}c_i y_{{\bf g} i}&\ge& 0 \hspace{20pt} \forall {\bf g}\in \mathbb H \label{eq:Farkas:b}\\
\sum_{{\bf h}:({\bf g},{\bf h})\in E}w({\bf g},{\bf h})y_{{\bf h} i} 
-\sum_{j\in[m]-\{i\}}\frac{y_{{\bf g} j}}{m-1} &\ge & 0\hspace{20pt} \forall ({\bf g},i)\in I\label{eq:LAJHGASFAFA}
\end{eqnarray}
\end{subequations}
Denote $u_{\bf g}=\sum_{i\in[m]}y_{{\bf g} i}$. 
Using the same argument as in part (a) 
we conclude that $u_{\bf g}=C$ for all ${\bf g}\in \mathbb H$ where $C\in\mathbb R$ is some constant.
We can assume w.l.o.g.\ that this constant is zero. Indeed, this can be achieved by subtracting $C/m$
from values  $y_{{\bf g} i}$ for all $({\bf g},i)\in I$ with ${\bf g}\in\mathbb H$; it can be checked (using eq.~\eqref{eq:weightsSumToOne})
that this operation preserves inequalities~\eqref{eq:Farkas}.
We thus have
\begin{eqnarray}
\sum_{j\in[m]-\{i\}}y_{{\bf g} j}&=&-y_{{\bf g} i}\qquad\quad\forall ({\bf g},i)\in I
\label{eq:zSumsToOne}
\end{eqnarray}
Substituting this into~\eqref{eq:LAJHGASFAFA} gives
\begin{eqnarray}
\sum_{{\bf h}:({\bf g},{\bf h})\in E}w({\bf g},{\bf h})y_{{\bf h} i} 
+\frac{y_{{\bf g} i}}{m-1} &\ge & 0\hspace{20pt} \forall ({\bf g},i)\in I\label{eq:GJKSADF}
\end{eqnarray}
Consider $k\in[m]$. Summing~\eqref{eq:GJKSADF} over $i\in[m]-\{k\}$ and then using~\eqref{eq:zSumsToOne} yields
\begin{eqnarray}
\sum_{{\bf h}:({\bf g},{\bf h})\in E}w({\bf g},{\bf h})(-y_{{\bf h} k})
+\frac{-y_{{\bf g} k}}{m-1} &\ge & 0\hspace{20pt} \forall ({\bf g},k)\in I\label{eq:GJKSADF'}
\end{eqnarray}
Combining~\eqref{eq:GJKSADF} and \eqref{eq:GJKSADF'} gives
\begin{eqnarray}
\sum_{{\bf h}:({\bf g},{\bf h})\in E}w({\bf g},{\bf h})y_{{\bf h} i} 
+\frac{y_{{\bf g} i}}{m-1} &= & 0\hspace{20pt} \forall ({\bf g},i)\in I\label{eq:GJKSADF''}
\end{eqnarray}
Denote $r_{{\bf g}}=\sum_{i\in[m]}c_i y_{{\bf g} i}$ for ${\bf g}\in \mathbb H$. Summing \eqref{eq:GJKSADF''}
over $i\in[m]$ with appropriate coefficients gives
\begin{eqnarray}
\sum_{{\bf h}:({\bf g},{\bf h})\in E}w({\bf g},{\bf h})r_{{\bf h}} 
+\frac{r_{{\bf g}}}{m-1} &= & 0\hspace{20pt} \forall {\bf g}\in \mathbb H\label{eq:GHAD}
\end{eqnarray}
From \eqref{eq:Farkas:a} and \eqref{eq:Farkas:b} we conclude that $r_{\bf g}<0$ for all ${\bf g}\in \mathbb H$, and thus eq.~\eqref{eq:GHAD}
cannot hold, a contradiction.
\end{proof}

\subsection{Proof of Lemma~\ref{th:cyclicPol}}\label{sec:proof:finiteValued:4}
Let us fix  a function $f\in\Gamma$ of arity $n$ and a connected component $\mathbb H\in{\sf Sinks}({\mathbb G,E})$.
In this subsection we will prove the following. 
\renewcommand{\thelemmaRESTATED}{\ref{th:cyclicPol}}
\begin{lemmaRESTATED}[equivalent statement]
(a) Inequality
\begin{subequations}
\begin{eqnarray}
\sum_{i\in[m]}f(x^{{\bf g}'i})-\sum_{i\in[m]}f(x^{{\bf g}''i})\le 0
\label{eq:cyclicPol:a}
\end{eqnarray}
holds for any distinct mappings ${\bf g}',{\bf g}''\in \mathbb H$  and any $x^1,\ldots,x^m\in D^n$.
\\
(b) There exists a probability distribution $\lambda$ over $\mathbb H$ 
such that 
\begin{eqnarray}
\sum_{{\bf g}\in \mathbb H}\lambda_{\bf g}f(x^{{\bf g}i'})-\sum_{{\bf g}\in \mathbb H}\lambda_{\bf g}f(x^{{\bf g}i''})\le 0
\label{eq:cyclicPol:b}
\end{eqnarray}
\end{subequations}
 for any distinct indices $i',i''\in[m]$ and any $x^1,\ldots,x^m\in D^n$.
\end{lemmaRESTATED}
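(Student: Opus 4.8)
The plan is to derive everything from a single family of inequalities obtained by applying the fixed $(m-1)$-ary symmetric fractional polymorphism $\omega$ to the appropriate deleted tuples, and then to take exactly the linear combinations furnished by Lemma~\ref{lemma:lambdaWeights}.

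\textbf{Step 1: the basic inequalities.} Fix $f\in\Gamma$ of arity $n$ and $x^1,\dots,x^m\in D^n$. Because $\Gamma$ is finite-valued we have $\dom f=D^n$, so Definition~\ref{def:FP} applies to $\omega$ and \emph{arbitrary} tuples — this is where finite-valuedness is used. For $({\bf g},i)\in I$, apply Definition~\ref{def:FP} to $\omega$ and the $m-1$ labellings $(x^{{\bf g}j})_{j\in[m]\setminus\{i\}}$, with $x^{{\bf g}i}$ defined by \eqref{eq:xgi:def}. Unwinding the definitions of $\alpha^s$ and ${\bf g}^s$ (see \eqref{eq:gh}) shows that $s\big((x^{{\bf g}j})_{j\neq i}\big)=x^{{\bf g}^s i}$; grouping the sum over $s\in\supp(\omega)$ by the value ${\bf h}={\bf g}^s$, and using the edge weights $w$ together with the fact that $\mathbb H$ is a sink component (so ${\bf g}^s\in\mathbb H$), one obtains
\[
\sum_{{\bf h}:({\bf g},{\bf h})\in E} w({\bf g},{\bf h})\, f(x^{{\bf h}i}) \;\le\; \frac{1}{m-1}\sum_{j\in[m]\setminus\{i\}} f(x^{{\bf g}j}) \qquad \text{for all }({\bf g},i)\in I. \qquad (\star)
\]

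\textbf{Step 2: part (a).} Fix distinct ${\bf g}',{\bf g}''\in\mathbb H$ and take the vector $\lambda\in\mathbb R^{I}_{\ge 0}$ provided by Lemma~\ref{lemma:lambdaWeights}(a). Sum the inequalities $(\star)$ with coefficients $\lambda_{{\bf g}i}$. Collecting, for each $({\bf g},i)$, the coefficient of $f(x^{{\bf g}i})$: on the left it is $\sum_{{\bf h}:({\bf h},{\bf g})\in E} w({\bf h},{\bf g})\lambda_{{\bf h}i}$ and on the right it is $\sum_{j\in[m]\setminus\{i\}}\lambda_{{\bf g}j}/(m-1)$, and by the defining equation \eqref{eq:lambdaWeights:b} their difference is $c_{\bf g}=[{\bf g}={\bf g}']-[{\bf g}={\bf g}'']$. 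Hence $\sum_{i\in[m]} f(x^{{\bf g}'i})-\sum_{i\in[m]} f(x^{{\bf g}''i})\le 0$, which is \eqref{eq:cyclicPol:a}. Exchanging ${\bf g}'$ and ${\bf g}''$ gives the reverse inequality, so the two sums are equal and $f^m(x^{{\bf g}'1},\dots,x^{{\bf g}'m})=f^m(x^{{\bf g}''1},\dots,x^{{\bf g}''m})$, recovering the form stated in Lemma~\ref{th:cyclicPol}(a).

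\textbf{Step 3: part (b).} Apply Lemma~\ref{lemma:lambdaWeights}(b) with the fixed pair $(i',i'')=(1,2)$ to obtain $\lambda\in\mathbb R^{I\cup\mathbb H}_{\ge 0}$ with $\sum_{{\bf g}\in\mathbb H}\lambda_{\bf g}=1$; its restriction to $\mathbb H$ is the required probability distribution. Summing $(\star)$ with coefficients $\lambda_{{\bf g}i}$ exactly as in Step~2, the coefficient of $f(x^{{\bf g}i})$ now telescopes via \eqref{eq:lambdaWeights} to $c_i\lambda_{\bf g}$ with $c_i=[i=1]-[i=2]$, giving $F^\lambda_{1}(x^1,\dots,x^m)\le F^\lambda_{2}(x^1,\dots,x^m)$ for all $x^1,\dots,x^m$, where $F^\lambda_i$ is as in \eqref{eq:Fi:def}. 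To upgrade this to an arbitrary ordered distinct pair $(i',i'')$, note that every ${\bf g}\in\mathbb H$ satisfies \eqref{eq:mcluster:a} (Proposition~\ref{prop:cluster:basic}): substituting $(x^{\pi(1)},\dots,x^{\pi(m)})$ for $(x^1,\dots,x^m)$ sends $x^{{\bf g}i}$ to $x^{{\bf g}\pi(i)}$, so $F^\lambda_i$ precomposed with this substitution equals $F^\lambda_{\pi(i)}$. Choosing $\pi\in S_m$ with $\pi(1)=i'$ and $\pi(2)=i''$ and precomposing the inequality $F^\lambda_{1}\le F^\lambda_{2}$ with this substitution yields $F^\lambda_{i'}\le F^\lambda_{i''}$, which is \eqref{eq:cyclicPol:b}; taking both orders gives $F^\lambda_{i'}=F^\lambda_{i''}$, the form stated in Lemma~\ref{th:cyclicPol}(b).

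\textbf{Main obstacle.} The bulk of the care goes into Steps~1–2: correctly identifying $s\big((x^{{\bf g}j})_{j\neq i}\big)$ with $x^{{\bf g}^s i}$, and then verifying that the linear combination dictated by Lemma~\ref{lemma:lambdaWeights} collapses term-by-term onto $c_{\bf g}$ (respectively $c_i\lambda_{\bf g}$). The only genuinely new idea is the permutation trick in Step~3, which turns a single pair into all pairs and which relies essentially on Proposition~\ref{prop:cluster:basic}.
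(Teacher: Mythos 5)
Your proof is correct and follows the same strategy as the paper: derive the basic polymorphism inequality $(\star)$ for each $({\bf g},i)\in I$, take the linear combination with the Farkas certificate from Lemma~\ref{lemma:lambdaWeights}, and collapse term-by-term onto $c_{\bf g}$ (resp.\ $c_i\lambda_{\bf g}$). The one place you genuinely diverge is the symmetrisation in Step~3: the paper constructs a separate certificate $\lambda^{i'i''}$ for each pair by permuting the $I$-indexed coordinates of $\lambda^{12}$ and then re-verifies that it solves the system of Lemma~\ref{lemma:lambdaWeights}(b), whereas you fix $\lambda^{12}$, derive $F^\lambda_1\le F^\lambda_2$, and then use Proposition~\ref{prop:cluster:basic} to push a permutation of $[m]$ onto the tuple $(x^1,\dots,x^m)$, turning that single inequality into $F^\lambda_{i'}\le F^\lambda_{i''}$ for every pair. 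Both routes reach the same conclusion; yours is slightly cleaner because it avoids re-checking the permuted certificate and instead exploits the symmetry identity $g^\pi_i=g_{\pi(i)}$ directly. Your stated reason for finite-valuedness (it guarantees $\dom f=D^n$ so $(\star)$ applies to arbitrary tuples) is a different emphasis from the paper's explicit remark (that it licenses moving terms across the inequality), but both are genuine consequences of the same hypothesis and both are needed.
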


Since inequality \eqref{eq:cyclicPol:a} holds for any pair of distinct mappings ${\bf g}',{\bf g}''\in \mathbb H$,
we conclude that in \eqref{eq:cyclicPol:a} we actually must have an equality (and similarly for \eqref{eq:cyclicPol:b}).
Therefore, the statement above is indeed equivalent to the original formulation of Lemma~\ref{th:cyclicPol},
which had equalities. (Note, we have also moved terms from the right-hand side of the original equalities to the left-hand side
with the negative sign; for that we have used the fact the $\Gamma$ is finite-valued.)

We will need the following observation.
\begin{propositionX}
If $\bh=\bg^s$ where $\bg\in \mathbb H$, $s\in \supp(\omega)$ then $x^{{\bf h} i}=s((x^{{\bf g} 1},\ldots,x^{{\bf g} m})_{-i})$ for $i\in[m]$
where $(x^{{\bf g} 1},\ldots,x^{{\bf g} m})_{-i}$ is the sequence of $m-1$ labellings obtained by removing the $i$-th labelling.
\label{prop:asjgasdg}
\end{propositionX}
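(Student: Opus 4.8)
The statement is a pure unwinding of the definitions, so the plan is simply to trace through them carefully. Recall that, by~\eqref{eq:xgi:def}, the labelling $x^{{\bf g}i} \in D^n$ associated to ${\bf g} = (g_1,\dots,g_m)$ is obtained by applying $g_i$ coordinate-wise: $x^{{\bf g}i}_j = g_i(x^1_j,\dots,x^m_j)$ for every coordinate $j \in [n]$. First I would invoke the formula~\eqref{eq:gh}, which says that ${\bf h} = {\bf g}^s$ means $h_i = s \circ {\bf g}_{-i}$, where ${\bf g}_{-i} = (g_1,\dots,g_{i-1},g_{i+1},\dots,g_m)$ is the tuple of operations obtained from ${\bf g}$ by deleting the $i$-th one.

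Then, for a fixed $i \in [m]$ and a fixed coordinate $j \in [n]$, writing $\beta = (x^1_j,\dots,x^m_j) \in D^m$, I would compute
\[
x^{{\bf h}i}_j = h_i(\beta) = s\bigl({\bf g}_{-i}(\beta)\bigr) = s\bigl(g_1(\beta),\dots,g_{i-1}(\beta),g_{i+1}(\beta),\dots,g_m(\beta)\bigr) = s\bigl(x^{{\bf g}1}_j,\dots,x^{{\bf g}(i-1)}_j,x^{{\bf g}(i+1)}_j,\dots,x^{{\bf g}m}_j\bigr),
\]
where the last equality again uses~\eqref{eq:xgi:def}. The right-hand side is exactly the $j$-th coordinate of the labelling obtained by applying the $(m-1)$-ary operation $s$ coordinate-wise to the $m-1$ labellings $(x^{{\bf g}1},\dots,x^{{\bf g}m})_{-i}$. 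Since $j$ was arbitrary, this yields $x^{{\bf h}i} = s\bigl((x^{{\bf g}1},\dots,x^{{\bf g}m})_{-i}\bigr)$, which is the claim. (Equivalently, one can argue in one line from the identity ${\bf g}^s(\beta) = [{\bf g}(\beta)]^s$ together with the definition $\alpha^s = (s(\alpha_{-1}),\dots,s(\alpha_{-m}))$ applied to $\alpha = {\bf g}(\beta) = (x^{{\bf g}1}_j,\dots,x^{{\bf g}m}_j)$, reading off the $i$-th component.)

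There is no genuine obstacle here; the only thing requiring care is not conflating the two levels of indexing — the index $i \in [m]$ ranging over the operations inside a collection versus the index $j \in [n]$ ranging over the coordinates of a labelling in $D^n$ — and remembering that every operation is applied to labellings coordinate by coordinate. This proposition is exactly what is needed to rewrite the inputs to $\omega$ in the forthcoming proof of Lemma~\ref{th:cyclicPol}.
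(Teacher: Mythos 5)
Your proof is correct and is essentially the same as the paper's: both fix a coordinate (your $j$, the paper's $v$), reduce the claim to a statement about $m$-tuples in $D^m$, and read off the result from the definition of ${\bf g}^s$ applied componentwise. Your main computation uses the form ${\bf g}^s=(s\circ{\bf g}_{-1},\dots,s\circ{\bf g}_{-m})$ from \eqref{eq:gh}, while the paper's proof uses the equivalent form ${\bf g}^s(\alpha)=[{\bf g}(\alpha)]^s$ — the alternative you yourself note parenthetically — so the two are interchangeable.
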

\begin{proof} Consider a coordinate $v\in [n]$, and denote $\alpha=(x^1_v,\ldots,x^m_v)$,
 $\beta=(x^{{\bf g} 1}_v,\ldots,x^{{\bf g} m}_v)$,
$\gamma=(x^{{\bf h} 1}_v,\ldots,x^{{\bf h} m}_v)$.
By definition~\eqref{eq:xgi:def}, $\beta={\bf g}(\alpha)$ and $\gamma={\bf h}(\alpha)$.
Therefore, $\gamma={\bf g}^s(\alpha)=[{\bf g}(\alpha)]^s=\beta^s$.
In other words, the $i$th component of $\gamma$ equals $s(\beta_{-i})$, which is what we needed to show.
\end{proof}

Consider $m-1$ labellings  $(x^{{\bf g} 1},\ldots,x^{{\bf g} m})_{-i}$  for $({\bf g},i)\in I$.
Applying the polymorphism inequality to these labellings 
gives
\begin{eqnarray*}
 \sum_{s\in \supp(\omega)} \omega(s)f(s((x^{{\bf g} 1},\ldots,x^{{\bf g} m})_{-i})) & \le & \frac{1}{m-1}\sum_{j\in[m]-\{i\}}f(x^{{\bf g} j}) 
\end{eqnarray*}
Let us multiply this inequality by weight $\lambda_{{\bf g} i}\ge 0$ (to be defined later), and apply Proposition~\ref{prop:asjgasdg}
and the fact that  $w({\bf g},{\bf h})=\sum_{s\in \supp(\omega):{\bf h}={\bf g}^s}\omega(s)$:
\begin{eqnarray*}
\lambda_{{\bf g} i} \sum_{{\bf h}:({\bf g},{\bf h})\in E} w({\bf g},{\bf h})f(x^{{\bf h} i}) - \frac{\lambda_{{\bf g} i}}{m-1}\sum_{j\in[m]-\{i\}}f(x^{{\bf g} j}) & \le & 0 \hspace{23pt} \forall ({\bf g},i)\in I
\end{eqnarray*}
Summing these inequalities over $({\bf g},i)\in I$ gives
\begin{eqnarray}
\sum_{({\bf g},i)\in I}\left[\sum_{{\bf h}:({\bf h},{\bf g})\in E} w({\bf h},{\bf g})\lambda_{{\bf h} i}
-\sum_{j\in[m]-\{i\}}\frac{\lambda_{{\bf g} j}}{m-1}\right] f(x^{{\bf g} i}) &\le & 0
\label{eq:FullInequality}
\end{eqnarray}

Plugging weights $\lambda_{{\bf g}i}$ from Lemma~\ref{lemma:lambdaWeights}(a)
into~\ref{eq:FullInequality} gives inequality \eqref{eq:cyclicPol:a}. 
This proves Lemma~\ref{th:cyclicPol}(a). 

Similarly, we can plug weights $\lambda_{{\bf g}i}$ from Lemma~\ref{lemma:lambdaWeights}(b)
into~\eqref{eq:FullInequality} and get inequality \eqref{eq:cyclicPol:b}. However, we need an additional argument
in order  to establish Lemma~\ref{th:cyclicPol}(b) with this strategy.
Indeed, the vector $\lambda$ in Lemma~\ref{lemma:lambdaWeights}(b) depends on the pair $(i',i'')$;
let us denote it as $\lambda^{i'i''}$. We need to show that these vectors 
can be chosen in such a way that for a given ${\bf g}\in \mathbb H$,
the components $\lambda^{i'i''}_{\bf g}$
are the same for all pairs $(i',i'')$. This can be done as follows.
Take the vector $\lambda^{12}$ constructed in Lemma~\ref{lemma:lambdaWeights}(b). For a pair of distinct indices $(i',i'')\ne (1,2)$
select a permutation $\pi$ of $[m]$ with $\pi(i')=1$, $\pi(i'')=2$, and define vector $\lambda^{i'i''}$ via
$$
\lambda^{i'i''}_{\bf g}=\lambda^{12}_{\bf g}\qquad\forall{\bf g}\in \mathbb H
\hspace{100pt}
\lambda^{i'i''}_{{\bf g}i}=\lambda^{12}_{{\bf g}\pi(i)}\qquad\forall({\bf g},i)\in I
$$
Clearly, the vector $\lambda^{i'i''}$ satisfies conditions of Lemma~\ref{lemma:lambdaWeights}(b) for the pair $(i',i'')$.
Thus, Lemma~\ref{lemma:lambdaWeights}(b) indeed implies Lemma~\ref{th:cyclicPol}(b). 


\subsection{Proof of Lemma~\ref{th:permutationInvariance}}\label{sec:proof:finiteValued:5}
In this subsection we prove the following. 
\renewcommand{\thelemmaRESTATED}{\ref{th:permutationInvariance}}
\begin{lemmaRESTATED}[restated]
Let  ${\bf g}^\ast$ be a mapping in $\mathbb G^\ast$ and
 ${\bf p}\in\calO^{(m\rightarrow m)}$ be any mapping such
that   ${\bf p}(\alpha)$ is a permutation of $\alpha$ for all $\alpha\in D^m$.
Denote
$$Range_n({\bf g}^\ast)=\{{\bf g}^\ast(x^1,\ldots,x^m)\:|\:x^1,\ldots,x^m\in D^n\}$$
For any function $f\in\Gamma$ of arity $n$ and any $(x^1,\ldots,x^m)\in Range_n({\bf g}^\ast)$ it holds that $f^m(x^1,\ldots,x^m)=f^m({\bf p}(x^1,\ldots,x^m))$.
\end{lemmaRESTATED}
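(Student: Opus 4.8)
The plan is to prove Lemma~\ref{th:permutationInvariance} by first transferring the permuting map ${\bf p}$ from the output side of ${\bf g}^\ast$ to its input side, and then analysing the effect of applying ${\bf p}$ one transposition at a time. So fix $f\in\Gamma$ of arity $n$, a mapping ${\bf g}^\ast\in\mathbb G^\ast$ lying in some sink component $\mathbb H\in{\sf Sinks}(\mathbb G,E)$, and a tuple $(x^1,\dots,x^m)={\bf g}^\ast(z^1,\dots,z^m)\in Range_n({\bf g}^\ast)$. For each $\gamma\in D^m$ I would pick a permutation $\sigma_\gamma\in S_m$ with ${\bf p}(\gamma)=\gamma^{\sigma_\gamma}$ (possible since ${\bf p}(\gamma)$ is a rearrangement of the entries of $\gamma$), and for each coordinate $v$ set $\delta_v={\bf g}^\ast(z^1_v,\dots,z^m_v)\in D^m$. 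Define $(\tilde z^1_v,\dots,\tilde z^m_v)=(z^1_v,\dots,z^m_v)^{\sigma_{\delta_v}}$ for every $v$. A coordinatewise computation using Proposition~\ref{prop:cluster:basic} — which is exactly the equivariance ${\bf g}(\beta^\pi)=[{\bf g}(\beta)]^\pi$ for ${\bf g}\in\mathbb G$ — then gives ${\bf g}^\ast(\tilde z^1,\dots,\tilde z^m)={\bf p}(x^1,\dots,x^m)$. Since $f^m({\bf g}^\ast(z^1,\dots,z^m))=f^m(x^1,\dots,x^m)$ by definition, it remains to prove that $f^m({\bf g}^\ast(z^1,\dots,z^m))=f^m({\bf g}^\ast(\tilde z^1,\dots,\tilde z^m))$.

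Each $\sigma_{\delta_v}$ is a product of transpositions, and permuting the $v$-th coordinate's entries by a product of transpositions can be realised one transposition at a time, so $(\tilde z^1,\dots,\tilde z^m)$ is obtained from $(z^1,\dots,z^m)$ by a finite chain of moves, each replacing, at a single coordinate $v_0$, the tuple $(w^1_{v_0},\dots,w^m_{v_0})$ by its image under one transposition $\tau=(i',i'')$ of $[m]$. Thus it suffices to prove: for arbitrary $w^1,\dots,w^m\in D^n$, coordinate $v_0$ and transposition $\tau=(i',i'')$, if $w'$ is the result of one such move, then $f^m({\bf g}^\ast(w^1,\dots,w^m))=f^m({\bf g}^\ast(w'^1,\dots,w'^m))$. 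Here I would choose $j\in[m]\setminus\{i',i''\}$, which is possible precisely because $m-1\ge 2$, hence $m\ge 3$. By Lemma~\ref{th:cyclicPol}(a), $f^m({\bf h}(w^1,\dots,w^m))$ is the same for every ${\bf h}\in\mathbb H$ (in particular equal to $f^m({\bf g}^\ast(w^1,\dots,w^m))$), so with the probability distribution $\lambda$ over $\mathbb H$ supplied by Lemma~\ref{th:cyclicPol}(b),
\begin{align*}
f^m({\bf g}^\ast(w^1,\dots,w^m))
&=\sum_{{\bf h}\in\mathbb H}\lambda_{\bf h}\,f^m({\bf h}(w^1,\dots,w^m))
=\frac1m\sum_{i\in[m]}F^\lambda_i(w^1,\dots,w^m)\\
&=F^\lambda_j(w^1,\dots,w^m)=\sum_{{\bf h}\in\mathbb H}\lambda_{\bf h}\,f(w^{{\bf h}j}),
\end{align*}
and likewise $f^m({\bf g}^\ast(w'^1,\dots,w'^m))=\sum_{{\bf h}\in\mathbb H}\lambda_{\bf h}\,f(w'^{{\bf h}j})$. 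Finally $w'^{{\bf h}j}=w^{{\bf h}j}$ for every ${\bf h}\in\mathbb H$: they agree at every coordinate $v\neq v_0$ by construction, and at $v_0$, writing $\beta=(w^1_{v_0},\dots,w^m_{v_0})$, Proposition~\ref{prop:cluster:basic} gives $w'^{{\bf h}j}_{v_0}=({\bf h}(\beta^\tau))_j=([{\bf h}(\beta)]^\tau)_j=({\bf h}(\beta))_{\tau(j)}=({\bf h}(\beta))_j=w^{{\bf h}j}_{v_0}$ since $\tau(j)=j$. Hence the two weighted sums coincide, which proves the single-transposition claim; chaining it along the chain of moves yields $f^m({\bf g}^\ast(z^1,\dots,z^m))=f^m({\bf g}^\ast(\tilde z^1,\dots,\tilde z^m))=f^m({\bf p}(x^1,\dots,x^m))$, as required.

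The crux of the argument — and the only place $m\ge 3$ is genuinely used — is the choice of a \emph{free} column index $j\notin\{i',i''\}$: Lemma~\ref{th:cyclicPol} lets us express $f^m({\bf g}^\ast(\cdot))$ through the single column $F^\lambda_j$, and a transposition that avoids that column leaves the column (and hence the value) unchanged. The point to be careful about is that the single-transposition claim must be stated for \emph{all} tuples $w^1,\dots,w^m\in D^n$, not only those in $Range_n({\bf g}^\ast)$, because the intermediate tuples produced while turning $(z^1,\dots,z^m)$ into $(\tilde z^1,\dots,\tilde z^m)$ need not be of the form ``${\bf p}$ applied to something in the range''; this causes no difficulty since both parts of Lemma~\ref{th:cyclicPol} hold for all inputs in $D^n$. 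The transfer step in the first paragraph is routine once one has Proposition~\ref{prop:cluster:basic}, and the reduction to transpositions in the second paragraph is pure bookkeeping.
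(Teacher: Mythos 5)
Your proof is correct, and while it rests on the same essential "free column" trick as the paper (choosing $j\notin\{i',i''\}$, which is what $m\geq 3$ is for), the way you assemble the pieces is genuinely different. The paper works with $F^\lambda$ evaluated directly at the tuple $(x^1,\dots,x^m)$: Lemma~\ref{lemma:HGKAJSDHAGAS} shows that permuting the labels at a single coordinate leaves $F^\lambda$ unchanged; Lemma~\ref{lemma:g:retraction} shows that tuples in $Range_n({\bf g}^\ast)$ are fixed points of some ${\bf g}\in\mathbb H$; Lemma~\ref{lemma:RangeFASFNA} combines these to conclude $f^m=F^\lambda$ on $Range_n({\bf g}^\ast)$; and the final chain $f^m(x)=F^\lambda(x)=F^\lambda({\bf p}(x))=f^m({\bf p}(x))$ requires knowing that ${\bf p}(x)\in Range_n({\bf g}^\ast)$ as well, which the paper states "follows mechanically" and omits. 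You instead transfer ${\bf p}$ to the preimage side using the equivariance of Proposition~\ref{prop:cluster:basic}, establish the pullback identity $f^m({\bf g}^\ast(w))=F^\lambda(w)$ for \emph{every} $w\in D^n$ — which drops Lemma~\ref{lemma:g:retraction} entirely, since you compare $f^m({\bf h}(w))$ to $f^m({\bf g}^\ast(w))$ via Lemma~\ref{th:cyclicPol}(a) rather than to $f^m(w)$ — and then run a transposition chain on the preimage tuples. The payoff of your arrangement is that the range-membership verification the paper defers is simply not needed: your intermediate tuples are unconstrained elements of $(D^n)^m$, as you yourself flag. What the paper's arrangement buys is a quotable standalone statement that $f^m$ and $F^\lambda$ agree on the range, at the cost of the fixed-point lemma and the omitted check. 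Both proofs are valid; yours is modestly tighter on this particular lemma.
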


Fix function $f\in\Gamma$ of arity $n$, and 
let $\mathbb H\in{\sf Sinks}({\mathbb G,E})$ be the strongly connected component that contains ${\bf g}^\ast$.
Let $\lambda\in\mathbb R^{\mathbb H}_{\ge 0}$ be a vector constructed in Lemma~\ref{th:cyclicPol}(b).
We denote $F^\lambda(x^1,\ldots,x^m)=F^\lambda_{i}(x^1,\ldots,x^m)$ for $i\in[m]$.
\begin{lemmaX}
The following transformation does not change $F^\lambda(x^1,\ldots,x^m)$:
pick a coordinate $v\in [n]$ and permute the labels 
$(x^1_v,\ldots,x^m_v)$.
\label{lemma:HGKAJSDHAGAS}
\end{lemmaX}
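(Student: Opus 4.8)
The plan is to prove the statement by reducing to a transposition of two of the labels at coordinate $v$ and then observing that such a transposition fixes at least one of the ``output components'' of every ${\bf g}\in\mathbb H$. Since every permutation of $[m]$ is a product of transpositions, and since the transformations ``permute the labels at coordinate $v$'' compose (permuting by $\sigma_1$ and then by $\sigma_2$ is permuting by a single permutation), it suffices to prove the claim when $(x^1_v,\dots,x^m_v)$ is permuted by a transposition $\pi=(a\,b)$; one then applies this special case repeatedly to an arbitrary permutation. Write $(\tilde x^1,\dots,\tilde x^m)$ for the result, so $\tilde x^j_w=x^j_w$ for $w\neq v$, while $\tilde x^a_v=x^b_v$, $\tilde x^b_v=x^a_v$, and $\tilde x^j_v=x^j_v$ for $j\notin\{a,b\}$.

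The key step uses Proposition~\ref{prop:cluster:basic}. Since $m-1\ge 2$ we have $m\ge 3$, so we may fix an index $c\in[m]\setminus\{a,b\}$. Take an arbitrary ${\bf g}=(g_1,\dots,g_m)\in\mathbb H$ and consider the $c$-th component $\tilde x^{{\bf g}c}$ of ${\bf g}(\tilde x^1,\dots,\tilde x^m)$. For a coordinate $w\neq v$ it agrees with $x^{{\bf g}c}$ because the inputs at $w$ are unchanged; at coordinate $v$ we have $\tilde x^{{\bf g}c}_v=g_c(x^{\pi(1)}_v,\dots,x^{\pi(m)}_v)=g_c^\pi(x^1_v,\dots,x^m_v)$, and by~\eqref{eq:mcluster:a} together with $\pi(c)=c$ we get $g_c^\pi=g_{\pi(c)}=g_c$, hence $\tilde x^{{\bf g}c}_v=x^{{\bf g}c}_v$. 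Therefore $\tilde x^{{\bf g}c}=x^{{\bf g}c}$ for every ${\bf g}\in\mathbb H$.

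The lemma now follows at once. Using that $F^\lambda_i$ is independent of $i$ (Lemma~\ref{th:cyclicPol}(b)), applied both to $(\tilde x^1,\dots,\tilde x^m)$ and to $(x^1,\dots,x^m)$,
\[
F^\lambda(\tilde x^1,\dots,\tilde x^m)=F^\lambda_c(\tilde x^1,\dots,\tilde x^m)=\sum_{{\bf g}\in\mathbb H}\lambda_{\bf g}\,f(\tilde x^{{\bf g}c})=\sum_{{\bf g}\in\mathbb H}\lambda_{\bf g}\,f(x^{{\bf g}c})=F^\lambda_c(x^1,\dots,x^m)=F^\lambda(x^1,\dots,x^m).
\]

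As for where the difficulty lies: there is essentially no obstacle once Proposition~\ref{prop:cluster:basic} and Lemma~\ref{th:cyclicPol}(b) are in hand — and the latter is precisely the place where finite-valuedness of $\Gamma$ is genuinely exploited. The only points requiring care are routine: checking that permuting labels at a single coordinate is a right action on the tuple (so the reduction to transpositions is legitimate) and the bookkeeping for how a transposition at coordinate $v$ acts on $g_c(\cdot)$ via~\eqref{eq:mcluster:a}. The real conceptual content is the simple observation that a transposition of labels at one coordinate leaves the component $x^{{\bf g}c}$ (for any $c$ outside the transposition) entirely unchanged, and that component already determines $F^\lambda$.
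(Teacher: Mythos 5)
Your proof is correct and takes essentially the same route as the paper: reduce to a transposition, pick an index $c$ (the paper calls it $k$) outside its support, use Proposition~\ref{prop:cluster:basic} to see that $x^{{\bf g}c}$ is invariant under the swap, and conclude via the index-independence of $F^\lambda_i$ from Lemma~\ref{th:cyclicPol}(b). The only cosmetic difference is that you rewrite the argument of $g_c$ as $g_c^\pi$ applied to $x$ and then replace $g_c^\pi$ by $g_{\pi(c)}=g_c$, whereas the paper starts from $g_k=g_{\pi(k)}=g_k^\pi$ and unwinds in the other direction; the two are the same computation.
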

\begin{proof}
It suffices to prove the claim for a permutation $\pi$ which swaps
the labels $x^i_v$ and $x^j_v$ for $i,j\in[m]$ (since any other permutation
can be obtained by repeatedly applying such swaps).
Since $m\ge 3$ there exists an index $k\in[m]-\{i,j\}$. 
We claim that for any ${\bf g}=(g_1,\ldots,g_m)\in\mathbb H$,
the
labelling $x^{{\bf g}k}=g_k(x^1,\ldots,x^m)$ is not affected by the swap above.
Indeed, it suffices to check this for coordinate $v$ (for other coordinates the claim is trivial).
Denoting the new labellings as $\tilde x^i$ and $\tilde x^{{\bf g}k}$, we can write
\begin{eqnarray*}
\tilde x^{{\bf g}k}_v=g_k(\tilde x^1_v,\ldots,\tilde x^m_v)
&=&g_{\pi(k)}(\tilde x^1_v,\ldots,\tilde x^m_v) \qquad\qquad \mbox{// since $\pi(k)=k$} \\
&=&g^\pi_{k}(\tilde x^1_v,\ldots,\tilde x^m_v) \qquad\qquad \hspace{10pt} \mbox{// by Proposition \ref{prop:cluster:basic}} \\
&=&g_{k}(\tilde x^{\pi(1)}_v,\ldots,\tilde x^{\pi(m)}_v)
=g_k(x^1_v,\ldots,x^m_v)=x^{{\bf g}k}_v
\end{eqnarray*}
Since the labellings $x^{{\bf g}k}$ do not change,
the value of $F^\lambda_k(x^1,\ldots,x^m)$ is also not affected by the swap (see its definition in eq.~\eqref{eq:Fi:def}.)
The lemma is proved.
\end{proof}

\begin{lemmaX}\label{lemma:g:retraction}
If $(x^1,\ldots,x^m)\in Range_n({\bf g}^\ast)$ then 
$(x^1,\ldots,x^m)=(x^{{\bf g}1},\ldots,x^{{\bf g}m})$ for some ${\bf g}\in \mathbb H$.
\end{lemmaX}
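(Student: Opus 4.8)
The plan is to recognise the graph $(\mathbb G,E)$ as (essentially) the left Cayley graph of a finite transformation monoid, and then to exploit the classical observation that a \emph{sink} strongly connected component of such a graph behaves like a minimal left ideal. Concretely, for $s\in\supp(\omega)$ write $\mathds{1}^s\in\calO^{(m\rightarrow m)}$ for the mapping $\beta\mapsto(s(\beta_{-1}),\dots,s(\beta_{-m}))$. A direct check from \eqref{eq:gh} shows that ${\bf g}^s=(\mathds{1}^s)\circ{\bf g}$ for every ${\bf g}\in\calO^{(m\rightarrow m)}$. Iterating, ${\bf g}^{s_1\dots s_k}=(\mathds{1}^{s_k})\circ\cdots\circ(\mathds{1}^{s_1})\circ{\bf g}$, so $\mathbb G$ is exactly the submonoid of $(\calO^{(m\rightarrow m)},\circ)$ generated by $A:=\{\mathds{1}^s\mid s\in\supp(\omega)\}$ (with $\mathds{1}$ the empty product). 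In particular $\mathbb G\circ\mathbb G=\mathbb G$, the edges of $(\mathbb G,E)$ are precisely the pairs $({\bf g},a\circ{\bf g})$ with $a\in A$, and hence for every ${\bf g}\in\mathbb G$ the set of vertices reachable from ${\bf g}$ by directed walks of length $\ge 0$ is $\mathbb G\circ{\bf g}:=\{{\bf w}\circ{\bf g}\mid{\bf w}\in\mathbb G\}$.

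Next I would fix the component $\mathbb H\in{\sf Sinks}(\mathbb G,E)$ containing ${\bf g}^\ast$ and observe that $\mathbb G\circ{\bf h}=\mathbb H$ for every ${\bf h}\in\mathbb H$: since $\mathbb H$ is a sink, no directed walk leaves it, so $\mathbb G\circ{\bf h}\subseteq\mathbb H$; since $\mathbb H$ is strongly connected, every vertex of $\mathbb H$ is reachable from ${\bf h}$, so $\mathbb H\subseteq\mathbb G\circ{\bf h}$. Applying this with ${\bf h}={\bf g}^\ast$ gives $\mathbb G\circ{\bf g}^\ast=\mathbb H$; since ${\bf g}^\ast\in\mathbb H\subseteq\mathbb G$ and $\mathbb G$ is closed under composition, ${\bf g}^\ast\circ{\bf g}^\ast\in\mathbb G\circ{\bf g}^\ast=\mathbb H$, and applying the observation once more with ${\bf h}={\bf g}^\ast\circ{\bf g}^\ast$ yields $\mathbb G\circ({\bf g}^\ast\circ{\bf g}^\ast)=\mathbb H\ni{\bf g}^\ast$. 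Hence there is ${\bf w}\in\mathbb G$ with ${\bf w}\circ{\bf g}^\ast\circ{\bf g}^\ast={\bf g}^\ast$; setting ${\bf g}:={\bf w}\circ{\bf g}^\ast$ we obtain ${\bf g}\in\mathbb G\circ{\bf g}^\ast=\mathbb H$ together with the key identity ${\bf g}\circ{\bf g}^\ast={\bf g}^\ast$.

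To finish, I would write the hypothesis $(x^1,\dots,x^m)\in Range_n({\bf g}^\ast)$ as $(x^1,\dots,x^m)={\bf g}^\ast(y^1,\dots,y^m)$ for suitable $y^1,\dots,y^m\in D^n$ (all applications componentwise). Then, for the mapping ${\bf g}\in\mathbb H$ produced above, the definition \eqref{eq:xgi:def} gives $(x^{{\bf g}1},\dots,x^{{\bf g}m})={\bf g}(x^1,\dots,x^m)=({\bf g}\circ{\bf g}^\ast)(y^1,\dots,y^m)={\bf g}^\ast(y^1,\dots,y^m)=(x^1,\dots,x^m)$, which is exactly the assertion of the lemma.

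The step I expect to be the main obstacle is the one producing ${\bf g}$ \emph{inside} $\mathbb H$. Walking once around a cycle through ${\bf g}^\ast$ in $\mathbb H$ readily yields a product of generators ${\bf w}_0$ with ${\bf w}_0\circ{\bf g}^\ast={\bf g}^\ast$ — and such a ${\bf w}_0$ already fixes $(x^1,\dots,x^m)$ — but ${\bf w}_0$ itself need not belong to $\mathbb H$, whereas the mappings we hand to Lemma~\ref{th:permutationInvariance} must lie in $\mathbb H$. The trick above circumvents this: ${\bf g}^\ast\circ{\bf g}^\ast$ is automatically in $\mathbb H$, and since $\mathbb H$ behaves as a minimal left ideal it can be ``left-inverted modulo ${\bf g}^\ast$'' to return to ${\bf g}^\ast$, yielding an honest element of $\mathbb H$. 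A minor subtlety worth flagging is that ${\bf g}^\ast$ alone need not fix $(x^1,\dots,x^m)$, so one genuinely builds a new mapping rather than reusing ${\bf g}^\ast$.
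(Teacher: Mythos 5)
Your proof is correct and follows essentially the same route as the paper: both reduce the lemma to finding ${\bf g}\in\mathbb H$ with ${\bf g}\circ{\bf g}^\ast={\bf g}^\ast$, both observe that ${\bf g}^s=(\mathds{1}^s)\circ{\bf g}$ so that $\mathbb G$ acts on the left and $\mathbb G\circ\mathbb H\subseteq\mathbb H$, both place ${\bf g}^\ast\circ{\bf g}^\ast$ in $\mathbb H$ and use strong connectivity to find ${\bf w}\in\mathbb G$ with ${\bf w}\circ{\bf g}^\ast\circ{\bf g}^\ast={\bf g}^\ast$, and both set ${\bf g}={\bf w}\circ{\bf g}^\ast$. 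The monoid/Cayley-graph framing you add is a pleasant conceptual gloss but the argument is the same.
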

\begin{proof}
It suffices to show that there exists ${\bf g}\in\mathbb H$ with ${\bf g}\circ{\bf g}^\ast={\bf g}^\ast$.

Note that $\mathds{1}^{s_1\ldots s_k}\circ {\bf h}={\bf h}^{s_1\ldots s_k}$
for any $s_1,\ldots,s_k\in \supp(\omega)$ and ${\bf h}\in\calO^{(m\rightarrow m)}$, since
for any $\alpha\in D^m$ we have
$$
[ \mathds{1}^{s_1\ldots s_{k}}\circ{\bf h}](\alpha)
= \mathds{1}^{s_1\ldots s_{k}}({\bf h}(\alpha))
=  [\mathds{1}({\bf h}(\alpha))]^{s_1\ldots s_{k}}
=  [{\bf h}(\alpha)]^{s_1\ldots s_{k}}
=  {\bf h}^{s_1\ldots s_{k}}(\alpha).
$$
Therefore, conditions ${\bf g}\in \mathbb G$, ${\bf h}\in\mathbb H$ imply that ${\bf g}\circ{\bf h}\in\mathbb H$
(since ${\bf g}$ can be written as ${\bf g}=\mathds{1}^{s_1\ldots s_k}$ and there are no edges leaving $\mathbb H$).

Since $\mathbb H$ is strongly connected, there is a path in $(\mathbb G,E)$ from ${\bf g}^\ast\circ{\bf g}^\ast\in\mathbb H$ to ${\bf g}^\ast\in\mathbb H$,
i.e.\ $[{\bf g}^\ast\circ{\bf g}^\ast]^{s_1\ldots s_k}={\bf g}^\ast$ for some
$s_1,\ldots,s_k\in \supp(\omega)$.
Equivalently, ${\bf h}\circ{\bf g}^\ast\circ{\bf g}^\ast={\bf g}^\ast$ where ${\bf h}=\mathds{1}^{s_1\ldots s_k}$.
It can be checked that mapping ${\bf g}={\bf h}\circ{\bf g}^\ast$ has the desired properties.
\end{proof}

\begin{lemmaX}
If $(x^1,\ldots,x^m)\in Range_n({\bf g}^\ast)$ then $f^m(x^1,\ldots,x^m)=F^\lambda(x^1,\ldots,x^m)$.
\label{lemma:RangeFASFNA}
\end{lemmaX}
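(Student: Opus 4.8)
The plan is to read the statement off directly from the two structural facts already established: Lemma~\ref{lemma:g:retraction}, which says that a tuple $(x^1,\ldots,x^m)\in Range_n({\bf g}^\ast)$ can be written as $(x^{{\bf g}1},\ldots,x^{{\bf g}m})$ for some ${\bf g}\in\mathbb H$ (equivalently, that some ${\bf g}\in\mathbb H$ fixes this tuple), and Lemma~\ref{th:cyclicPol}, which controls both $f^m$ along $\mathbb H$ and the functions $F^\lambda_i$.

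First I would unfold $F^\lambda$. By definition $F^\lambda(x^1,\ldots,x^m)$ equals $F^\lambda_i(x^1,\ldots,x^m)$ for every $i\in[m]$ — this is well-defined exactly because of Lemma~\ref{th:cyclicPol}(b) — so it also equals the average $\frac{1}{m}\sum_{i\in[m]}F^\lambda_i(x^1,\ldots,x^m)$. Expanding via the definition~\eqref{eq:Fi:def} of $F^\lambda_i$ and exchanging the two sums gives
\[
F^\lambda(x^1,\ldots,x^m)=\sum_{{\bf h}\in\mathbb H}\lambda_{\bf h}\cdot\frac{1}{m}\sum_{i\in[m]}f(x^{{\bf h}i})=\sum_{{\bf h}\in\mathbb H}\lambda_{\bf h}\,f^m(x^{{\bf h}1},\ldots,x^{{\bf h}m}).
\]
Then I would apply Lemma~\ref{th:cyclicPol}(a): for any two mappings in $\mathbb H$ the numbers $f^m(x^{{\bf h}1},\ldots,x^{{\bf h}m})$ coincide, so they all equal one common value; since $\lambda$ is a probability distribution on $\mathbb H$, the sum above collapses to that value, i.e.\ $F^\lambda(x^1,\ldots,x^m)=f^m(x^{{\bf g}1},\ldots,x^{{\bf g}m})$ for any single ${\bf g}\in\mathbb H$ we choose to pick. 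Finally, taking the particular ${\bf g}\in\mathbb H$ supplied by Lemma~\ref{lemma:g:retraction}, for which $(x^{{\bf g}1},\ldots,x^{{\bf g}m})=(x^1,\ldots,x^m)$, and substituting yields $F^\lambda(x^1,\ldots,x^m)=f^m(x^1,\ldots,x^m)$, which is the claim.

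I do not expect a genuine obstacle here: the real work has been done in Lemmas~\ref{lemma:g:retraction} and~\ref{th:cyclicPol}, and what remains is bookkeeping. The only points that need any care are the two appeals to the fact that $\lambda$ is a bona fide probability distribution on $\mathbb H$ — once to pass from ``$F^\lambda=F^\lambda_i$ for each $i$'' to the average over $i$, and once to collapse $\sum_{{\bf h}\in\mathbb H}\lambda_{\bf h}V$ to $V$ — both of which are guaranteed by the form of $\lambda$ produced in Lemma~\ref{th:cyclicPol}(b). Note that Lemma~\ref{lemma:HGKAJSDHAGAS} is not used for the present statement; it is the permutation-invariance of $F^\lambda$ recorded there that will be combined with this lemma afterwards to finish the proof of Lemma~\ref{th:permutationInvariance}.
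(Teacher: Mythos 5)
Your proof is correct and follows essentially the same route as the paper: unfold $F^\lambda$ as $\sum_{{\bf h}\in\mathbb H}\lambda_{\bf h}\,f^m(x^{{\bf h}1},\ldots,x^{{\bf h}m})$, use Lemma~\ref{th:cyclicPol}(a) to see that every summand equals a single common value, use the fact that $\lambda$ is a probability distribution to collapse the sum, and finally use Lemma~\ref{lemma:g:retraction} to identify that common value with $f^m(x^1,\ldots,x^m)$. The paper merely performs these same steps in a slightly different order (establishing $f^m(x^{{\bf g}1},\ldots,x^{{\bf g}m})=f^m(x^1,\ldots,x^m)$ for all ${\bf g}\in\mathbb H$ up front before the chain of equalities), and your closing remark that Lemma~\ref{lemma:HGKAJSDHAGAS} plays no role here is also accurate.
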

\begin{proof}
From Theorem~\ref{th:cyclicPol}(a) and Lemma~\ref{lemma:g:retraction} we get that $f^m(x^{{\bf g}1},\ldots,x^{{\bf g}m})=f^m(x^1,\ldots,x^m)$ 
for all ${\bf g}\in\mathbb H$.
Using this fact and the definition of $F^\lambda_i(\cdot)$, we can write
\begin{eqnarray*}
F^\lambda(x^1,\ldots,x^m)
&=&\frac{1}{m}\sum_{i\in[m]}F^\lambda_i(x^1,\ldots,x^m)
=\frac{1}{m}\sum_{{\bf g}\in\mathbb H}\lambda_{\bf g}\sum_{i\in[m]}f(x^{{\bf g}i})\\
&=&\sum_{{\bf g}\in\mathbb H}\lambda_{\bf g}f^m(x^{{\bf g}1},\ldots,x^{{\bf g}m})
=\sum_{{\bf g}\in\mathbb H}\lambda_{\bf g}f^m(x^1,\ldots,x^m)
=f^m(x^1,\ldots,x^m)
\end{eqnarray*}
The lemma follows.
\end{proof}

We can finally establish Lemma~\ref{th:permutationInvariance}.
For labelings $(x^1,\ldots,x^m)\in Range_n({\bf g}^\ast)$ we can write
$$
f^m(x^1,\ldots,x^m) \stackrel{\mbox{\tiny(1)}}=
F^\lambda(x^1,\ldots,x^m) \stackrel{\mbox{\tiny(2)}}=
F^\lambda({\bf p}(x^1,\ldots,x^m)) \stackrel{\mbox{\tiny(3)}}=
f^m({\bf p}(x^1,\ldots,x^m))
$$
where equalities (1) and (3) follow from Lemma~\ref{lemma:RangeFASFNA},
and (2) follows from Lemma~\ref{lemma:HGKAJSDHAGAS}.
Note, to be able to apply Lemma~\ref{lemma:RangeFASFNA} in (3), we
need the condition ${\bf p}(x^1,\ldots,x^m)\in Range_n({\bf g}^\ast)$.
The proof of this condition follows mechanically from the assumption
$(x^1,\ldots,x^m)\in Range_n({\bf g}^\ast)$ and Proposition \ref{prop:cluster:basic},
and is omitted.


\section*{Acknowledgements} We thank Andrei Krokhin for helpful discussions
and for communicating the result of Raghavendra~\cite{Raghavendra}. We also
thank the anonymous referees for their diligent work on improving the
presentation of the paper.
%


\begin{thebibliography}{10}

\bibitem{Barto11:lics}
Libor Barto.
\newblock The dichotomy for conservative constraint satisfaction problems
  revisited.
\newblock In {\em Proceedings of the 26th {I}{E}{E}{E} Symposium on Logic in
  Computer Science (LICS'11)}, pages 301--310. IEEE Computer Society, 2011.

\bibitem{Barto12:stoc}
Libor Barto and Marcin Kozik.
\newblock Robust {S}atisfiability of {C}onstraint {S}atisfaction {P}roblems.
\newblock In {\em Proceedings of the 44th Annual ACM Symposium on Theory of
  Computing (STOC'12)}, pages 931--940. ACM, 2012.

\bibitem{Barto14:jacm}
Libor Barto and Marcin Kozik.
\newblock Constraint {S}atisfaction {P}roblems {S}olvable by {L}ocal
  {C}onsistency {M}ethods.
\newblock {\em Journal of the ACM}, 61(1), 2014.
\newblock Article No. 3.

\bibitem{Barto09:siam}
Libor Barto, Marcin Kozik, and Todd Niven.
\newblock The {C}{S}{P} dichotomy holds for digraphs with no sources and no
  sinks (a positive answer to a conjecture of {B}ang-{J}ensen and {H}ell).
\newblock {\em {SIAM} Journal on Computing}, 38(5):1782--1802, 2009.

\bibitem{Bulatov06:3-elementJACM}
Andrei Bulatov.
\newblock A dichotomy theorem for constraint satisfaction problems on a
  3-element set.
\newblock {\em Journal of the ACM}, 53(1):66--120, 2006.

\bibitem{Bulatov05:classifying}
Andrei Bulatov, Andrei Krokhin, and Peter Jeavons.
\newblock Classifying the {C}omplexity of {C}onstraints using {F}inite
  {A}lgebras.
\newblock {\em {SIAM} Journal on Computing}, 34(3):720--742, 2005.

\bibitem{Bulatov11:conservative}
Andrei~A. Bulatov.
\newblock Complexity of conservative constraint satisfaction problems.
\newblock {\em ACM Transactions on Computational Logic}, 12(4), 2011.
\newblock Article 24.

\bibitem{Chekuri04:sidma}
Chandra Chekuri, Sanjeev Khanna, Joseph Naor, and Leonid Zosin.
\newblock A linear programming formulation and approximation algorithms for the
  metric labeling problem.
\newblock {\em SIAM Journal on Discrete Mathematics}, 18(3):608--625, 2004.

\bibitem{cccjz13:sicomp}
David~A. Cohen, Martin~C. Cooper, P\'aid\'i Creed, Peter Jeavons, and Stanislav
  {\noopsort{ZZ}\v{Z}}ivn\'y.
\newblock An algebraic theory of complexity for discrete optimisation.
\newblock {\em SIAM Journal on Computing}, 42(5):915--1939, 2013.

\bibitem{Cohen06:expressive}
David~A. Cohen, Martin~C. Cooper, and Peter~G. Jeavons.
\newblock An {A}lgebraic {C}haracterisation of {C}omplexity for {V}alued
  {C}onstraints.
\newblock In {\em Proceedings of the 12th {I}nternational {C}onference on
  {P}rinciples and {P}ractice of {C}onstraint {P}rogramming ({C}{P}'06)},
  volume 4204 of {\em Lecture Notes in Computer Science}, pages 107--121.
  Springer, 2006.

\bibitem{Cohen08:Generalising}
David~A. Cohen, Martin~C. Cooper, and Peter~G. Jeavons.
\newblock Generalising submodularity and {H}orn clauses: {T}ractable
  optimization problems defined by tournament pair multimorphisms.
\newblock {\em Theoretical Computer Science}, 401(1-3):36--51, 2008.

\bibitem{Cohen06:complexitysoft}
David~A. Cohen, Martin~C. Cooper, Peter~G. Jeavons, and Andrei~A. Krokhin.
\newblock The {C}omplexity of {S}oft {C}onstraint {S}atisfaction.
\newblock {\em Artificial Intelligence}, 170(11):983--1016, 2006.

\bibitem{ccjz11:mfcs}
David~A. Cohen, P\'aid\'i Creed, Peter~G. Jeavons, and Stanislav
  {\noopsort{ZZ}\v{Z}}ivn\'y.
\newblock An algebraic theory of complexity for valued constraints:
  {E}stablishing a {G}alois connection.
\newblock In {\em Proceedings of the 36th International Symposium on
  Mathematical Foundations of Computer Science (MFCS'11)}, volume 6907 of {\em
  Lecture Notes in Computer Science}, pages 231--242. Springer, 2011.

\bibitem{Cooper08:minimizing}
Martin~C. Cooper.
\newblock Minimization of {L}ocally {D}efined {S}ubmodular {F}unctions by
  {O}ptimal {S}oft {A}rc {C}onsistency.
\newblock {\em Constraints}, 13(4):437--458, 2008.

\bibitem{Cooper10:osac}
Martin~C. Cooper, Simon de~Givry, Mart\'i S\'anchez, Thomas Schiex, Matthias
  Zytnicki, and Tom\'a\v{s} Werner.
\newblock Soft arc consistency revisited.
\newblock {\em Artificial Intelligence}, 174(7--8):449--478, 2010.

\bibitem{Cooper04:aij-arc}
Martin~C. Cooper and Thomas Schiex.
\newblock Arc consistency for soft constraints.
\newblock {\em Artificial Intelligence}, 154(1-2):199--227, 2004.

\bibitem{Crama11:book}
Yves Crama and Peter~L. Hammer.
\newblock {\em Boolean {F}unctions - {T}heory, {A}lgorithms, and
  {A}pplications}.
\newblock Cambridge University Press, 2011.

\bibitem{cz11:cp-mwc}
P\'aid\'i Creed and Stanislav {\noopsort{ZZ}\v{Z}}ivn\'y.
\newblock On minimal weighted clones.
\newblock In {\em {P}roceedings of the 17th {I}nternational {C}onference on
  {P}rinciples and {P}ractice of {C}onstraint {P}rogramming ({C}{P}'11)},
  volume 6876 of {\em Lecture Notes in Computer Science}, pages 210--224.
  Springer, 2011.

\bibitem{Creignouetal:siam01}
Nadia Creignou, Sanjeev Khanna, and Madhu Sudan.
\newblock {\em Complexity {C}lassification of {B}oolean {C}onstraint
  {S}atisfaction {P}roblems}, volume~7 of {\em SIAM Monographs on Discrete
  Mathematics and Applications}.
\newblock SIAM, 2001.

\bibitem{Dalmau15:soda}
V\'ictor Dalmau, Andrei Krokhin, and Rajsekar Manokaran.
\newblock Towards a characterization of constant-factor approximable {M}in
  {C}{S}{P}s.
\newblock In {\em Proceedings of the Twenty-Sixth Annual {ACM-SIAM} Symposium
  on Discrete Algorithms (SODA'15)}. {SIAM}, 2015.

\bibitem{Dalmau13:robust}
V\'{\i}ctor Dalmau and Andrei~A. Krokhin.
\newblock {Robust Satisfiability for CSPs: Hardness and Algorithmic Results}.
\newblock {\em ACM Transactions on Computation Theory}, 5(4), 2013.
\newblock Article No. 15.

\bibitem{Dalmau99:set}
V\'ictor Dalmau and Justin Pearson.
\newblock Set {F}unctions and {W}idth 1 {P}roblems.
\newblock In {\em Proceedings of the 5th {I}nternational {C}onference on
  {C}onstraint {P}rogramming ({C{P}'99})}, volume 1713 of {\em Lecture Notes in
  Computer Science}, pages 159--173. Springer, 1999.

\bibitem{DeinekoJKK08}
Vladimir Deineko, Peter Jonsson, Mikael Klasson, and Andrei Krokhin.
\newblock The approximability of {Max} {C{S}{P}} with fixed-value constraints.
\newblock {\em Journal of the ACM}, 55(4), 2008.
\newblock Article 16.

\bibitem{Ene13:soda}
Alina Ene, Jan Vondr{\'{a}}k, and Yi~Wu.
\newblock Local distribution and the symmetry gap: Approximability of multiway
  partitioning problems.
\newblock In {\em Proceedings of the Twenty-Fourth Annual {ACM-SIAM} Symposium
  on Discrete Algorithms (SODA'13)}, pages 306--325. {SIAM}, 2013.

\bibitem{Feder98:monotone}
Tom\'as Feder and Moshe~Y. Vardi.
\newblock The {C}omputational {S}tructure of {M}onotone {M}onadic {S{N}{P}} and
  {C}onstraint {S}atisfaction: {A} {S}tudy through {D}atalog and {G}roup
  {T}heory.
\newblock {\em {SIAM} Journal on Computing}, 28(1):57--104, 1998.

\bibitem{Freuder78:synthesizing}
Eugene~C. Freuder.
\newblock Synthesizing {C}onstraint {E}xpressions.
\newblock {\em Communications of the {ACM}}, 21(11):958--966, 1978.

\bibitem{Fujishige06:bisubmodular}
Satoru Fujishige and Satoru Iwata.
\newblock Bisubmodular {F}unction {M}inimization.
\newblock {\em {SIAM} Journal on Discrete Mathematics}, 19(4):1065--1073, 2005.

\bibitem{Gottlob09:icalp}
Georg Gottlob, Gianluigi Greco, and Francesco Scarcello.
\newblock {T}ractable {O}ptimization {P}roblems through {H}ypergraph-{B}ased
  {S}tructural {R}estrictions.
\newblock In {\em Proceedings of the 36th International Colloquium on Automata,
  Languages and Programming (ICALP'09), Part II}, volume 5556 of {\em Lecture
  Notes in Computer Science}, pages 16--30. Springer, 2009.

\bibitem{Gridchyn:ICCV13}
Igor Gridchyn and Vladimir Kolmogorov.
\newblock Potts model, parametric maxflow and k-submodular functions.
\newblock In {\em Proceedings of the 14th IEEE International Conference on
  Computer Vision (ICCV'13)}, pages 2320--2327. IEEE, 2013.

\bibitem{Grohe07:compl}
Martin Grohe.
\newblock The complexity of homomorphism and constraint satisfaction problems
  seen from the other side.
\newblock {\em Journal of the {ACM}}, 54(1):1--24, 2007.

\bibitem{Hell08:survey}
Pavol Hell and Jaroslav Ne\v{s}et\v{r}il.
\newblock Colouring, constraint satisfaction, and complexity.
\newblock {\em Computer Science Review}, 2(3):143--163, 2008.

\bibitem{Hirai13:soda}
Hiroshi Hirai.
\newblock {D}iscrete {C}onvexity and {P}olynomial {S}olvability in {M}inimum
  0-{E}xtension {P}roblems.
\newblock In {\em Proceedings of the 24th Annual ACM-SIAM Symposium on Discrete
  Algorithms (SODA'13)}, pages 1770--1778. SIAM, 2013.

\bibitem{Huber12:ksub}
Anna Huber and Vladimir Kolmogorov.
\newblock Towards {M}inimizing $k$-{S}ubmodular {F}unctions.
\newblock In {\em Proceedings of the 2nd International Symposium on
  Combinatorial Optimization (ISCO'12)}, volume 7422 of {\em Lecture Notes in
  Computer Science}, pages 451--462. Springer, 2012.

\bibitem{hkp14:sicomp}
Anna Huber, Andrei Krokhin, and Robert Powell.
\newblock Skew {b}isubmodularity and {v}alued {C}{S}{P}s.
\newblock {\em SIAM Journal on Computing}, 43(3):1064--1084, 2014.

\bibitem{Idziak10:siam}
Pawel~M. Idziak, Petar Markovic, Ralph McKenzie, Matthew Valeriote, and Ross
  Willard.
\newblock Tractability and learnability arising from algebras with few
  subpowers.
\newblock {\em SIAM Journal on Computing}, 39(7):3023--3037, 2010.

\bibitem{Iwata08:sfm-survey}
Satoru Iwata.
\newblock Submodular {F}unction {M}inimization.
\newblock {\em Mathematical Programming}, 112(1):45--64, 2008.

\bibitem{Iwata01:submodular}
Satoru Iwata, Lisa Fleischer, and Satoru Fujishige.
\newblock A combinatorial strongly polynomial algorithm for minimizing
  submodular functions.
\newblock {\em Journal of the {ACM}}, 48(4):761--777, 2001.

\bibitem{jkz14:survey}
Peter Jeavons, Andrei Krokhin, and Stanislav {\noopsort{ZZ}\v{Z}}ivn\'y.
\newblock The complexity of valued constraint satisfaction.
\newblock {\em Bulletin of the European Association for Theoretical Computer
  Science (EATCS)}, 113:21--55, 2014.

\bibitem{Jeavons98:algebraic}
Peter~G. Jeavons.
\newblock On the {A}lgebraic {S}tructure of {C}ombinatorial {P}roblems.
\newblock {\em Theoretical Computer Science}, 200(1-2):185--204, 1998.

\bibitem{Jeavons97:closure}
Peter~G. Jeavons, David~A. Cohen, and Marc Gyssens.
\newblock Closure {P}roperties of {C}onstraints.
\newblock {\em Journal of the {ACM}}, 44(4):527--548, 1997.

\bibitem{Johnson07}
Jason~K. Johnson, Dmitry~M. Malioutov, and Alan~S. Willsky.
\newblock Lagrangian relaxation for {MAP} estimation in graphical models.
\newblock In {\em Allerton Conference on Communication, Control and Computing},
  pages 64--73, 2007.

\bibitem{Jonsson09:tcs}
Peter Jonsson, Andrei~A. Krokhin, and Fredrik Kuivinen.
\newblock Hard constraint satisfaction problems have hard gaps at location 1.
\newblock {\em Theoretical Computer Science}, 410(38-40):3856--3874, 2009.

\bibitem{Jonsson11:cp}
Peter Jonsson, Fredrik Kuivinen, and Johan Thapper.
\newblock Min {C}{S}{P} on {F}our {E}lements: {M}oving {B}eyond
  {S}ubmodularity.
\newblock In {\em {P}roceedings of the 17th {I}nternational {C}onference on
  {P}rinciples and {P}ractice of {C}onstraint {P}rogramming ({C}{P}'11)},
  volume 6876 of {\em Lecture Notes in Computer Science}, pages 438--453.
  Springer, 2011.

\bibitem{Khanna01:approximability}
Sanjeev Khanna, Madhu Sudan, Luca Trevisan, and David Williamson.
\newblock The approximability of constraint satisfaction problems.
\newblock {\em {SIAM} Journal on Computing}, 30(6):1863--1920, 2001.

\bibitem{khot10:coco}
Subhash Khot.
\newblock On the unique games conjecture (invited survey).
\newblock In {\em Proceedings of the 25th Annual IEEE Conference on
  Computational Complexity (CCC'10)}, pages 99--121. IEEE Computer Society,
  2010.

\bibitem{Carleton05}
Carleton~L. Kingsford, Bernard Chazelle, and Mona Singh.
\newblock Solving and analyzing side-chain positioning problems using linear
  and integer programming.
\newblock {\em Bioinformatics}, 21(7):1028--1039, 2005.

\bibitem{Kolaitis00:jcss}
Phokion~G. Kolaitis and Moshe~Y. Vardi.
\newblock Conjunctive-{Q}uery {C}ontainment and {C}onstraint {S}atisfaction.
\newblock {\em Journal of Computer and System Sciences}, 61(2):302--332, 2000.

\bibitem{Kolmogorov11:mfcs}
Vladimir Kolmogorov.
\newblock Submodularity on a tree: {U}nifying $l^{\sharp}$-convex and
  bisubmodular functions.
\newblock In {\em Proceedings of the 36th International Symposium on
  Mathematical Foundations of Computer Science (MFCS'11)}, volume 6907 of {\em
  Lecture Notes in Computer Science}, pages 400--411. Springer, 2011.

\bibitem{Kolmogorov13:icalp}
Vladimir Kolmogorov.
\newblock The power of linear programming for finite-valued {C}{S}{P}s: a
  constructive characterization.
\newblock In {\em Proceedings of the 40th International Colloquium on Automata,
  Languages and Programming (ICALP'13)}, volume 7965 of {\em Lecture Notes in
  Computer Science}, pages 625--636. Springer, 2013.

\bibitem{kz13:jacm}
Vladimir Kolmogorov and Stanislav {\noopsort{ZZ}\v{Z}}ivn\'y.
\newblock The complexity of conservative valued {C}{S}{P}s.
\newblock {\em Journal of the ACM}, 60(2), 2013.
\newblock Article 10.

\bibitem{Komodakis-PAMI-2011}
N.~Komodakis, N.~Paragios, and G.~Tziritas.
\newblock {MRF} energy minimization and beyond via dual decomposition.
\newblock {\em IEEE Transactions on Pattern Analysis and Machine Intelligence},
  33(3):531--552, 2011.

\bibitem{Koster98}
Arie Koster, Stan~P.M. {v}an Hoesel, and Antoon~W.J. Kolen.
\newblock The partial constraint satisfaction problem: Facets and lifting
  theorems.
\newblock {\em Operations Research Letters}, 23(3--5):89--97, 1998.

\bibitem{Krokhin08:max}
Andrei Krokhin and Benoit Larose.
\newblock Maximizing {S}upermodular {F}unctions on {P}roduct {L}attices, with
  {A}pplication to {M}aximum {C}onstraint {S}atisfaction.
\newblock {\em SIAM Journal on Discrete Mathematics}, 22(1):312--328, 2008.

\bibitem{Kuivinen11:do-diamonds}
Fredrik Kuivinen.
\newblock On the complexity of submodular function minimisation on diamonds.
\newblock {\em Discrete Optimization}, 8(3):459--477, 2011.

\bibitem{Kun12:itcs}
G{\'a}bor Kun, Ryan O'Donnell, Suguru Tamaki, Yuichi Yoshida, and Yuan Zhou.
\newblock Linear programming, width-1 {C}{S}{P}s, and robust satisfaction.
\newblock In {\em Proceedings of the 3rd Innovations in Theoretical Computer
  Science (ITCS'12)}, pages 484--495. ACM, 2012.

\bibitem{Lauritzen96}
Steffen~L. Lauritzen.
\newblock {\em Graphical Models}.
\newblock Oxford University Press, 1996.

\bibitem{Mackworth77:consistency}
Alan~K. Mackworth.
\newblock Consistency in {N}etworks of {R}elations.
\newblock {\em Artificial Intelligence}, 8:99--118, 1977.

\bibitem{Marx13:jacm}
D\'aniel Marx.
\newblock Tractable hypergraph properties for constraint satisfaction and
  conjunctive queries.
\newblock {\em Journal of the ACM}, 60(6), 2013.
\newblock Article No. 42.

\bibitem{McCormick10:bisubmodular}
S.~Thomas McCormick and Satoru Fujishige.
\newblock Strongly polynomial and fully combinatorial algorithms for
  bisubmodular function minimization.
\newblock {\em Mathematical Programming}, 122(1):87--120, 2010.

\bibitem{Montanari74:constraints}
Ugo Montanari.
\newblock Networks of {C}onstraints: {F}undamental properties and applications
  to picture processing.
\newblock {\em Information Sciences}, 7:95--132, 1974.

\bibitem{Raghavendra08:stoc}
Prasad Raghavendra.
\newblock Optimal algorithms and inapproximability results for every {C}{S}{P}?
\newblock In {\em Proceedings of the 40th Annual {A}{C}{M} Symposium on Theory
  of Computing (STOC'08)}, pages 245--254. ACM, 2008.

\bibitem{Raghavendra}
Prasad Raghavendra.
\newblock {Approximating {NP}-hard Problems: Efficient Algorithms and their
  Limits}.
\newblock {\em PhD Thesis}, 2009.

\bibitem{Schaefer78:complexity}
Thomas~J. Schaefer.
\newblock The {C}omplexity of {S}atisfiability {P}roblems.
\newblock In {\em Proceedings of the 10th {A}nnual {A}{C}{M} {S}ymposium on
  {T}heory of {C}omputing ({S}{T}{O}{C}'78)}, pages 216--226. ACM, 1978.

\bibitem{Schrijver86:ILP}
Alexander Schrijver.
\newblock {\em Theory of linear and integer programming}.
\newblock John Wiley \& Sons, Inc., 1986.

\bibitem{Schrijver00:submodular}
Alexander Schrijver.
\newblock A {C}ombinatorial {A}lgorithm {M}inimizing {S}ubmodular {F}unctions
  in {S}trongly {P}olynomial {T}ime.
\newblock {\em Journal of Combinatorial Theory, Series B}, 80(2):346--355,
  2000.

\bibitem{Sherali1990}
H.~D. Sherali and W.~P. Adams.
\newblock A hierarchy of relaxations between the continuous and convex hull
  representations for zero-one programming problems.
\newblock {\em SIAM Journal of Discrete Mathematics}, 3(3):411--430, 1990.

\bibitem{Schlesinger76}
Michail~I. Shlezinger.
\newblock {S}yntactic analysis of two-dimensional visual signals in noisy
  conditions.
\newblock {\em Cybernetics and Systems Analysis}, 12(4):612--628, 1976.
\newblock Translation from Russian.

\bibitem{Sontag-optbook2011}
David Sontag, Amir Globerson, and Tommi Jaakkola.
\newblock Introduction to dual decomposition for inference.
\newblock In Suvrit Sra, Sebastian Nowozin, and Stephen~J. Wright, editors,
  {\em Optimization for Machine Learning}. MIT Press, 2011.

\bibitem{tz12:focs}
Johan Thapper and Stanislav {\noopsort{ZZ}\v{Z}}ivn\'y.
\newblock The power of linear programming for valued {C}{S}{P}s.
\newblock In {\em Proceedings of the 53rd Annual IEEE Symposium on Foundations
  of Computer Science (FOCS'12)}, pages 669--678. IEEE, 2012.

\bibitem{tz13:stoc}
Johan Thapper and Stanislav {\noopsort{ZZ}\v{Z}}ivn\'y.
\newblock The complexity of finite-valued {C}{S}{P}s.
\newblock In {\em Proceedings of the 45th ACM Symposium on the Theory of
  Computing (STOC'13)}, pages 695--704. ACM, 2013.

\bibitem{Uppman13:icalp}
Hannes Uppman.
\newblock The {C}omplexity of {T}hree-{E}lement {M}in-{S}ol and {C}onservative
  {M}in-{C}ost-{H}om.
\newblock In {\em Proceedings of the 40th International Colloquium on Automata,
  Languages, and Programming (ICALP'13)}, volume 7965 of {\em Lecture Notes in
  Computer Science}, pages 804--815. Springer, 2013.

\bibitem{Wahlstroem:SODA14}
Magnus Wahlstr{\"o}m.
\newblock Half-integrality, {LP}-branching and {FPT} algorithms.
\newblock In {\em Proceedings of the 25th Annual ACM-SIAM Symposium on Discrete
  Algorithms (SODA'14)}, pages 1762--1781. SIAM, 2014.

\bibitem{Wainwright05}
M.~Wainwright, T.~Jaakkola, and A.~Willsky.
\newblock {MAP} estimation via agreement on trees: message passing and linear
  programming.
\newblock {\em IEEE Transactions on Information Theory}, 51(11):3697--3717,
  2005.

\bibitem{Wainwright08}
Martin~J. Wainwright and Michael~I. Jordan.
\newblock Graphical models, exponential families, and variational inference.
\newblock {\em Foundations and Trends in Machine Learning}, 1(1-2):1--305,
  2008.

\bibitem{Werner07:pami}
Tom{\'a}\v{s} Werner.
\newblock {A} {L}inear {P}rogramming {A}pproach to {M}ax-{S}um {P}roblem: {A}
  {R}eview.
\newblock {\em IEEE Transactions on Pattern Analysis and Machine Intelligence},
  29(7):1165--1179, 2007.

\bibitem{z12:complexity}
Stanislav {\noopsort{ZZ}\v{Z}}ivn\'y.
\newblock {\em The complexity of valued constraint satisfaction problems}.
\newblock Cognitive Technologies. Springer, 2012.

\bibitem{zcj09:dam}
Stanislav {\noopsort{ZZ}\v{Z}}ivn\'y, David~A. Cohen, and Peter~G. Jeavons.
\newblock The {E}xpressive {P}ower of {B}inary {S}ubmodular {F}unctions.
\newblock {\em Discrete Applied Mathematics}, 157(15):3347--3358, 2009.

\end{thebibliography}

\newcommand{\noopsort}[1]{}

\appendix

\section{STP Multimorphisms Imply Submodularity}\label{sec:STP}

In this section, we consider symmetric tournament pair (STP)
multimorphisms~\cite{Cohen08:Generalising} mentioned in Section~\ref{subsec:examples}.

\begin{definitionX}
(a) A pair of operations $\langle\sqcap,\sqcup\rangle$ with $\sqcap,\sqcup:D^2\rightarrow D$
is called a \emph{symmetric tournament pair} (STP) if
\begin{subequations}
\begin{eqnarray}
&a\sqcap b=b\sqcap a,\;\; a\sqcup b=b\sqcup a&\hspace{40pt}\forall a,b\in D\hspace{20pt}\mbox{(commutativity)} \\
&\{a\sqcap b,a\sqcup b\}=\{a,b\}&\hspace{40pt}\forall a,b\in D\hspace{20pt}\mbox{(conservativity)}
\end{eqnarray}
\end{subequations}
(b) Pair $\langle\sqcap,\sqcup\rangle$ is called a \emph{submodularity operation} if there exists a total order on $D$ for which  $a\sqcap b=\min\{a,b\}$, $a\sqcup b=\max\{a,b\}$ for all $a,b\in D$. \\
(c) Language $\Gamma$ admits $\langle\sqcap,\sqcup\rangle$ (or $\langle\sqcap,\sqcup\rangle$ is a multimorphism of $\Gamma$) if every function $f\in\Gamma$ of arity $n$ satisfies
\begin{equation*}
f(x\sqcap y)+f(x\sqcup y)\le f(x)+f(y)\qquad\quad\forall x,y\in D^n
\end{equation*}
\end{definitionX}
It has been shown in~\cite{Cohen08:Generalising} that if $\Gamma$ admits an STP multimorphism then $\VCSP(\Gamma)$
can be solved in polynomial time. STP multimorphisms also appeared in the dichotomy result of~\cite{kz13:jacm}:
\begin{theoremX}
Suppose a finite-valued language $\Gamma$ is \emph{conservative}, i.e.\ it contains
all possible unary cost functions $u:D\rightarrow\{0,1\}$. Then $\Gamma$ either admits
an STP multimorphism or it is NP-hard.
\end{theoremX}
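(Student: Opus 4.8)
The plan is to prove the two directions of the dichotomy separately; essentially all the work lies in the hard direction. For the easy direction, suppose $\Gamma$ admits an STP multimorphism $\langle\sqcap,\sqcup\rangle$. Since $\Gamma$ is finite-valued, the remainder of this appendix upgrades $\langle\sqcap,\sqcup\rangle$ to a submodularity multimorphism of $\Gamma$, so $\Gamma$ is submodular on $D$ with respect to some total order; the meet of that order is a semilattice operation, hence by Corollary~\ref{cor:tractability} the BLP solves $\Gamma$ and $\VCSP(\Gamma)$ is tractable. (Alternatively, this is immediate from~\cite{Cohen08:Generalising}.)

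For the hard direction, assume $\Gamma$ admits no STP multimorphism; we must show NP-hardness. The first step reduces to two-element subsets. For a pair $P=\{a,b\}\subseteq D$, let $\Gamma|_P$ be the Boolean finite-valued language obtained by restricting every $f\in\Gamma$ to tuples over $P$ (identifying $P$ with $\{0,1\}$); by conservativity $\Gamma|_P$ contains every Boolean unary cost function. If $\Gamma|_P$ is NP-hard for some $P$, then so is $\Gamma$, by a standard padding argument: to an instance over a finite subset of $\Gamma|_P$, add for each variable $v$ polynomially many copies of the term $u(x_v)$ with $u\colon D\to\{0,1\}$ equal to $0$ on $P$ and $1$ off $P$ (available by conservativity), which forces every optimal assignment into $P$ while preserving the optimum, so that the modified instance lies over a finite subset of $\Gamma$. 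Hence we may assume $\Gamma|_P$ is tractable for every $P$. The dichotomy for Boolean finite-valued VCSP~\cite{Cohen06:complexitysoft}, combined with the presence of all Boolean unaries in $\Gamma|_P$ (which rules out the remaining tractable Boolean classes, namely the constant and majority-type multimorphisms), forces $\Gamma|_P$ to be submodular with respect to one of the two orders on $P$.

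It remains to glue this per-pair information into a single STP multimorphism of $\Gamma$ on all of $D$ — the technical heart of the argument. Here I would work with binary fractional polymorphisms rather than multimorphisms, since these form a convex set. First, conservativity forces every operation in the support of any binary fractional polymorphism of $\Gamma$ to be conservative: a unary cost function supported on a value outside $\{a,b\}$ would otherwise be violated at the pair $\{a,b\}$. Next, a Farkas' lemma / LP-duality argument over the polytope of binary fractional polymorphisms — in the spirit of Section~\ref{sec:construction} and Lemma~\ref{lemma:lambdaWeights} — shows that either $\Gamma$ admits a binary fractional polymorphism whose support, restricted to each pair $P$, realises a conservative commutative dual pair compatible with the order produced above, or else some finite nonnegative combination of the polymorphism-defining inequalities over a finite $\Gamma'\subseteq\Gamma$ is contradictory; using finite-valuedness to move terms freely across the inequality, that combination is a $\VCSP(\Gamma')$ instance (augmented with Boolean unaries) whose optimum is strictly separated, and the separation encodes a scaled copy of an NP-hard Boolean optimisation problem such as Max-Cut (cf.~\cite{Jonsson09:tcs}), giving NP-hardness. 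In the former case, finite-valuedness together with a clone-theoretic argument akin to Theorem~\ref{th:BLP:generate} distills an actual STP multimorphism $\langle\sqcap,\sqcup\rangle$ of $\Gamma$ from that fractional polymorphism, contradicting our assumption.

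The main obstacle is precisely this gluing step. Local submodularity on every two-element subset does not by itself yield submodularity on $D$, since a single cost function may be supported on tuples mixing three or more labels; the delicate parts are the Farkas extraction of a Max-Cut gadget when no suitable binary fractional polymorphism exists, and the distillation of an honest STP multimorphism when one does. Both are carried out in full in~\cite{kz13:jacm}.
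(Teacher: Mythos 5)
The paper does not prove this theorem; it states it as a cited result from~\cite{kz13:jacm}, so there is no in-paper proof against which to match your proposal. Your easy direction is correct: an STP multimorphism yields a submodularity multimorphism by Theorem~\ref{th:STP}, and then Corollary~\ref{cor:tractability} (or~\cite{Cohen08:Generalising} directly) gives tractability. The reduction to two-element sub-domains via padding with the crisp indicator unary is sound for finite-valued languages, and invoking the Boolean finite-valued classification of~\cite{Cohen06:complexitysoft} together with the presence of all Boolean unaries to force per-pair submodularity is the right move.

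However, you explicitly defer the technical heart of the NP-hardness direction --- gluing the per-pair orientations into a single STP multimorphism of $\Gamma$ and, when that fails, extracting a Max-Cut-style gadget by a Farkas/LP-duality argument --- to~\cite{kz13:jacm}. That gluing step is essentially the entire content of~\cite{kz13:jacm} and cannot be certified from a one-paragraph description; as you yourself note, per-pair submodularity does not by itself yield a global multimorphism, and that is precisely what remains unargued. Your sketch is consistent with the known proof and your acknowledgement of the outsourced step is honest and appropriate given that the paper itself treats this theorem as an import rather than a contribution, but as a proof it is incomplete: the core lemma is gestured at, not established.
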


In this paper we prove the following.

\begin{theoremX}
If a finite-valued language $\Gamma$ admits an STP multimorphism then it also admits a submodularity multimorphism.
\label{th:STP}
\end{theoremX}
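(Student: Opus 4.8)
The plan is to start from the tournament that underlies the STP multimorphism and to straighten it into a linear order one arc at a time, using finite-valuedness to cancel terms. Given an STP multimorphism $\langle\sqcap,\sqcup\rangle$ of $\Gamma$, associate to it the tournament $T$ on $D$ with an arc $a\to b$ whenever $a\neq b$ and $a\sqcap b=a$ (so $a\sqcup b=b$). The crux will be the following \emph{flip lemma}: if $a\to b\to c\to a$ is a directed $3$-cycle of $T$, then the pair $\langle\sqcap',\sqcup'\rangle$ obtained from $\langle\sqcap,\sqcup\rangle$ by reversing only the arc on the pair $\{a,c\}$ (so that now $a\sqcap' c=a$, $a\sqcup' c=c$, and everything else is unchanged) is again an STP multimorphism of $\Gamma$. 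It is clearly still commutative and conservative, so the content is the multimorphism inequality.

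To prove the flip lemma, fix $f\in\Gamma$ of arity $n$ and $x,y\in D^n$, and let $S=\{i:\{x_i,y_i\}=\{a,c\}\}$. Write $P=x\sqcap y$, $Q=x\sqcup y$, $P'=x\sqcap' y$, $Q'=x\sqcup' y$. One checks that $P,Q,P',Q'$ all agree off $S$, while on $S$ one has $P|_S\equiv c$, $Q|_S\equiv a$, $P'|_S\equiv a$, $Q'|_S\equiv c$; thus $(P',Q')$ is just $(P,Q)$ with the $S$-coordinates swapped between the two tuples. Now introduce the auxiliary tuples $\tilde P,\tilde Q$ obtained from $P,Q$ by setting every $S$-coordinate to the third value $b$. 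A short case check (using $a\sqcap b=a$, $b\sqcap c=b$, $a\sqcap c=c$ and their $\sqcup$-counterparts, split over coordinates of type $\{a,b\}$, $\{b,c\}$, $\{a,c\}$, and equal-pair) shows $\sqcap(\tilde P,Q)=P'$, $\sqcup(\tilde P,Q)=\tilde Q$, $\sqcap(\tilde Q,P)=\tilde P$, $\sqcup(\tilde Q,P)=Q'$. Applying the multimorphism inequality of $\langle\sqcap,\sqcup\rangle$ to the pairs $(\tilde P,Q)$ and $(\tilde Q,P)$ gives $f(P')+f(\tilde Q)\le f(\tilde P)+f(Q)$ and $f(\tilde P)+f(Q')\le f(\tilde Q)+f(P)$; adding these two and cancelling $f(\tilde P)+f(\tilde Q)$ — legitimate because $\Gamma$ is finite-valued, so every term is a real number — yields $f(P')+f(Q')\le f(P)+f(Q)\le f(x)+f(y)$, which is exactly the multimorphism inequality for $\langle\sqcap',\sqcup'\rangle$.

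With the flip lemma available, I would finish by a potential argument on $T$. Every tournament has a Hamiltonian path, so relabel $D=\{v_1,\dots,v_k\}$ with $v_1\to v_2\to\cdots\to v_k$ in $T$, and call an arc $v_j\to v_i$ with $i<j$ a \emph{back arc}. If $T$ has no back arc, then $T$ is the transitive tournament of the order $v_1<\cdots<v_k$, i.e. $\sqcap=\min$ and $\sqcup=\max$, and $\Gamma$ admits a submodularity multimorphism. Otherwise, pick a back arc $v_j\to v_i$ with $j-i$ minimal; then $j-i\ge 2$ (else it clashes with a path arc), and minimality forces the arc between $v_{i+1}$ and $v_j$ to be $v_{i+1}\to v_j$ (otherwise $v_j\to v_{i+1}$ would be a shorter back arc), so $v_i\to v_{i+1}\to v_j\to v_i$ is a $3$-cycle. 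Reversing $\{v_i,v_j\}$ via the flip lemma turns this back arc into the forward arc $v_i\to v_j$ and changes no other arc; in particular the path $v_1\to\cdots\to v_k$ survives and the number of back arcs strictly decreases. Iterating terminates with a back-arc-free tournament, producing a total order on $D$ for which $\langle\min,\max\rangle$ is a multimorphism of $\Gamma$.

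The main obstacle is the flip lemma. The non-obvious insight is that reversing one arc of a $3$-cycle amounts to a coordinate-block swap of $x\sqcap y$ and $x\sqcup y$, together with the choice of the two auxiliary tuples $\tilde P,\tilde Q$ whose multimorphism inequalities telescope; finite-valuedness is used exactly at the cancellation, and indeed Example~\ref{example:STPcycle} shows the analogous statement fails for general-valued languages. The Hamiltonian-path bookkeeping in the last step is routine, and the case $|D|\le 2$ is trivial since then $T$ is automatically transitive.
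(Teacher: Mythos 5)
Your proposal is correct, and it follows the same overall strategy as the paper---establish a ``flip lemma'' allowing one arc of a directed $3$-cycle in the associated tournament to be reversed while preserving the multimorphism property, then show that finitely many such flips make the tournament transitive---but the proofs of both halves are genuinely different. For the flip lemma, you work in a single step: you observe that the flipped pair $(P',Q')$ equals $(P,Q)$ with the $\{a,c\}$-block of coordinates swapped, introduce the two auxiliary tuples $\tilde P,\tilde Q$ in which those coordinates are replaced by the third vertex $b$, and telescope two applications of the original multimorphism inequality, cancelling $f(\tilde P)+f(\tilde Q)$; the paper instead passes through an intermediate non-commutative pair $\langle\wedge,\vee\rangle$ and runs two separate telescoping steps with two different families of auxiliary labellings. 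Your single-step argument is shorter and the choice of $\tilde P,\tilde Q$ (fill the swap block with the third cycle vertex) is the same key idea, just deployed once rather than twice. For the acyclicization, you invoke R\'edei's theorem (every tournament has a Hamiltonian path), note that a back arc of minimal span must close a $3$-cycle with a path arc, and let the number of back arcs serve as a strictly decreasing potential; the paper instead builds up the vertex set incrementally, at each stage flipping out-edges of the newly added vertex until no $3$-cycle through it remains. Both are correct; yours is a single global measure argument whereas the paper's is an induction on $|D|$, and neither is obviously preferable. You correctly identify the one place finite-valuedness enters (the cancellation of the auxiliary terms), which is the same place the paper uses it, and you correctly flag Example~\ref{example:STPcycle} as the reason this cannot extend to general-valued languages.
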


This fact is already known; in particular, footnote~2 in~\cite{kz13:jacm} mentions
that this result is implicitly contained in \cite{Cohen08:Generalising}, and sketches a proof strategy.
However, to our knowledge a formal proof has never appeared in the literature.
This paper fills this gap. Our proof is different from the one suggested in~\cite{kz13:jacm},
and inspired some of the proof techniques used in the main part of this paper.

\subsection{Proof of Theorem \ref{th:STP}}
Consider a directed graph $G=(D,E)$. We say that $G$ is a \emph{tournament}
if for each pair of distinct labels $a,b\in D$ exactly one of the edges $(a,b)$, $(b,a)$ belongs to $E$.
We define a one-to-one correspondence between STP multimorphisms
$\langle\sqcap,\sqcup\rangle$ and tournaments $G=(D,E)$
as follows: 
$$
(a,b)\in E \qquad\Leftrightarrow \qquad (a\sqcap b,a\sqcup b)=(a,b)\qquad\quad\forall a,b\in D,a\ne b
$$
It can be seen that $\langle\sqcap,\sqcup\rangle$ is a submodularity
multimorphism if and only if the corresponding graph $G$
is acyclic.
\begin{lemmaX}
Suppose a finite-valued language $\Gamma$ admits an STP multimorphism $\langle\sqcap,\sqcup\rangle$ corresponding to a tournament $G=(D,E)$,
and suppose that $G$ has a 3-cycle: $(a,b),(b,c),(c,a)\in E$.
Let $\hat G$ be the graph obtained from $G$ by reversing the orientation of edge $(a,b)$, and let $\langle\hat\sqcap,\hat\sqcup\rangle$
be the corresponding STP multimorphism. Then $\Gamma$ admits $\langle\hat\sqcap,\hat\sqcup\rangle$.
\end{lemmaX}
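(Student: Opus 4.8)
The plan is to unfold the definition and prove the multimorphism inequality for $\langle\hat\sqcap,\hat\sqcup\rangle$ directly, using the inequalities supplied by $\langle\sqcap,\sqcup\rangle$ applied to a small family of auxiliary labellings that exploit the third vertex $c$ of the $3$-cycle. So fix $f\in\Gamma$ of arity $n$ and $x,y\in D^n$; we must show $f(x\hat\sqcap y)+f(x\hat\sqcup y)\le f(x)+f(y)$. Let $S=\{i : \{x_i,y_i\}=\{a,b\}\}$ and split $S=S_a\cup S_b$ with $S_a=\{i\in S : x_i=a\}$ and $S_b=\{i\in S : x_i=b\}$. Since $\hat G$ differs from $G$ only in the orientation of the edge between $a$ and $b$, $\hat\sqcap$ and $\sqcap$ (resp.\ $\hat\sqcup$ and $\sqcup$) agree on every pair $\ne\{a,b\}$; hence, writing $u=x\sqcap y$, $v=x\sqcup y$ and $\bar S=[n]\setminus S$, the labellings $x\hat\sqcap y$, $x\hat\sqcup y$ agree with $u,v$ on $\bar S$, while on $S$ one has $u_i=a$, $v_i=b$ but $(x\hat\sqcap y)_i=b$, $(x\hat\sqcup y)_i=a$. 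So the target reduces to $f(\hat u)+f(\hat v)\le f(x)+f(y)$, where $\hat u$ agrees with $u$ on $\bar S$ and equals $b$ on $S$, and $\hat v$ agrees with $v$ on $\bar S$ and equals $a$ on $S$.

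The key device: for a labelling $z$ and a label $\ell$ write $z[S_a{:=}\ell]$ for $z$ with all $S_a$-coordinates reset to $\ell$, and similarly $z[S_b{:=}\ell]$, $z[S{:=}\ell]$. When $S_b=\emptyset$ the inequality follows from just two applications of the $\langle\sqcap,\sqcup\rangle$-inequality. Put $L=x[S{:=}c]$ and $M=v[S{:=}c]$. Using the $3$-cycle identities $a\sqcap c=c$, $a\sqcup c=a$, $b\sqcap c=b$, $b\sqcup c=c$ on $S$, together with commutativity and the absorption law of $\sqcap,\sqcup$ on any two-element subset of $D$ on $\bar S$ (giving $x_j\sqcap v_j=x_j$, $x_j\sqcup v_j=v_j$, $y_j\sqcap x_j=u_j$, $y_j\sqcup x_j=v_j$), one checks that $x\sqcap M=L$, $x\sqcup M=\hat v$, $y\sqcap L=\hat u$, $y\sqcup L=M$. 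Hence
\begin{align*}
f(L)+f(\hat v) &\le f(x)+f(M),\\
f(\hat u)+f(M) &\le f(y)+f(L),
\end{align*}
and adding these and cancelling $f(L)+f(M)$ — legitimate because $\Gamma$ is finite-valued, so all quantities involved are finite real numbers — gives $f(\hat u)+f(\hat v)\le f(x)+f(y)$.

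The main obstacle is the general case $S_b\ne\emptyset$: now $\hat u$ equals $b$ on all of $S$ whereas $x$ and $y$ each equal $a$ on part of $S$, so no single pair $(p,q)$ realises $\hat u$ and $\hat v$ simultaneously as $p\sqcap q$, $p\sqcup q$; moreover a coordinate-by-coordinate induction, or a two-stage "first flip $S_a$, then flip $S_b$" reduction, breaks down, since once the $S_a$-block has been flipped the resulting labellings are no longer expressible as a meet/join of any pair. I would deal with this by enlarging the auxiliary family to all labellings of the form "$\sigma$ on $S_a$, $\tau$ on $S_b$, one of $x,y,u,v$ on $\bar S$" with $\sigma,\tau\in\{a,b,c\}$, and exhibiting a nonnegative combination of the corresponding $\langle\sqcap,\sqcup\rangle$-inequalities that telescopes to $f(\hat u)+f(\hat v)\le f(x)+f(y)$; the existence of such weights can be shown either by a direct calculation organised around the $3$-cycle, or more systematically by a Farkas'-lemma / flow argument on a graph whose vertices are these labellings and whose arcs record the action of $\sqcap$ and $\sqcup$, in the spirit of Sections~\ref{sec:proof:finiteValued:3}--\ref{sec:proof:finiteValued:5}. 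Finite-valuedness is essential throughout, both to pass between equalities and inequalities and to add and subtract the finite values $f(\cdot)$ of the auxiliary labellings freely; Example~\ref{example:STPcycle} shows the statement genuinely fails for general-valued $\Gamma$.
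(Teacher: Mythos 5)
Your special case is handled correctly: when $S_b=\emptyset$, the identities $x\sqcap M=L$, $x\sqcup M=\hat v$, $y\sqcap L=\hat u$, $y\sqcup L=M$ all check out (the ``absorption'' facts $x_j\sqcap v_j=x_j$, $x_j\sqcup v_j=v_j$ hold because $\sqcap,\sqcup$ is conservative, by a two-case check, not because it is a lattice), and the telescoping pair of $\langle\sqcap,\sqcup\rangle$-inequalities gives the conclusion for those $x,y$. But as you yourself flag, this does not prove the lemma, which is a claim for \emph{all} $x,y\in D^n$; your argument is carried out only when $x_i=a$, $y_i=b$ for every $i\in S$. The strategies you sketch for the general case (an enlarged family of auxiliary labellings, a Farkas/flow argument) are never executed, so the proof as written has a genuine gap. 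What you missed is that there is a one-line preconditioning that collapses the general case to yours: set $\tilde x=x\sqcap y$, $\tilde y=x\sqcup y$. Then $f(\tilde x)+f(\tilde y)\le f(x)+f(y)$ by the given multimorphism; on $S$ one has $(\tilde x_i,\tilde y_i)=(a,b)$, so the pair $(\tilde x,\tilde y)$ has $S_b=\emptyset$; and $\tilde x\hat\sqcap\tilde y=x\hat\sqcap y$, $\tilde x\hat\sqcup\tilde y=x\hat\sqcup y$ coordinatewise (using commutativity of both pairs and conservativity off $S$). Applying your special case to $(\tilde x,\tilde y)$ then finishes the proof. So the ``two-stage flip $S_a$, then flip $S_b$'' obstruction you anticipated is not the right reduction; the right one is ``first sort by $\langle\sqcap,\sqcup\rangle$, then flip''.

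For comparison, the paper's own proof takes a different two-stage route: it first shows that $\Gamma$ also admits the \emph{non-commutative} conservative pair $\langle\wedge,\vee\rangle$ that acts as the identity on both $(a,b)$ and $(b,a)$ and agrees with $\langle\sqcap,\sqcup\rangle$ elsewhere, and then derives $\langle\hat\sqcap,\hat\sqcup\rangle$ from $\langle\wedge,\vee\rangle$; in each stage it picks auxiliary labellings coordinatewise, sending $(b,a)$ (resp.\ $(a,b)$) to $(c,c)$ and using $\sqcap$-absorption on the remaining coordinates, so it never needs to split $S$ into $S_a$ and $S_b$. Both routes use the same two-inequality telescoping trick and both rely on finite-valuedness to cancel; yours, once the preconditioning step is added, is arguably the more direct since it avoids introducing the intermediate non-commutative multimorphism.
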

\begin{proof}
Let $\langle \wedge,\vee\rangle$ be the following multimorphism: 
$$
(x\wedge y,x\vee y)=
\begin{cases}
(x,y) & \mbox{if }(x,y)\in\{(a,b),(b,a)\} \\
(x\sqcap y,x\sqcup y) & \mbox{if }(x,y)\notin\{(a,b),(b,a)\} \\
\end{cases} 
$$
First, we will prove that $\Gamma$ admits $\langle\wedge,\vee\rangle$ (step 1), and then prove that $\Gamma$ admits $\langle\hat\sqcap,\hat\sqcup\rangle$ (step 2).
We fix below function $f\in\Gamma$ of arity $n$ and labellings $x,y\in D^n$.

\noindent {\bf Step 1~~}
Let us define labellings $x',y'\in D^n$ via
$$
(x'_v,y'_v)=
\begin{cases}
(x_v,x_v\sqcap y_v) & \mbox{if }(x_v,y_v)\ne (b,a) \\
(c,c) & \mbox{if }(x_v,y_v)=(b,a) \\
\end{cases} \qquad\quad\forall v\in[n]
$$
It can be checked that the following identities hold:
\begin{subequations}
\begin{eqnarray}
x'\sqcap y=y' \hspace{16pt} &~~~~~~~~~& x\sqcup y' = x' 
\label{eq:id} 
\\
x\sqcap y'=x\wedge y  &&  x'\sqcup y=x\vee y
\label{eq:id'} 
\end{eqnarray}
\end{subequations}
Let us write multimorphism inequalities for pairs $(x',y)$ and $(x,y')$:
\begin{subequations}
\begin{eqnarray}
\underline{\underline{f(x'\sqcap y)}}+f(x'\sqcup y)&\le& \underline{f(x')}+f(y) \label{eq:A'} \\
f(x\sqcap y')+\underline{f(x\sqcup y')}&\le& f(x)+\underline{\underline{f(y')}}\label{eq:A''} 
\end{eqnarray}
\end{subequations}
Summing \eqref{eq:A'} and \eqref{eq:A''}, cancelling terms using \eqref{eq:id}, and then substituting expressions using~\eqref{eq:id'} gives
\begin{eqnarray}
f(x\wedge y)+f(x\vee y)&\le& f(x)+f(y) \label{eq:A'''} 
\end{eqnarray}

\noindent {\bf Step 2~~}
Let us define labellings $x',y'\in D^n$ via
$$
(x'_v,y'_v)=
\begin{cases}
(x_v\wedge y_v,y_v) & \mbox{if }(x_v,y_v)\ne (a,b) \\
(c,c) & \mbox{if }(x_v,y_v)=(a,b) \\
\end{cases} \qquad\quad\forall v\in[n]
$$
It can be checked that the following identities hold:
\begin{subequations}
\begin{eqnarray}
 x'\vee y\,= y' \hspace{15pt} &~~~~~~~~~& x\wedge y' =  x' 
\label{eq:Bid} 
\\
x\vee y'=x\, \hat\sqcup\, y &&  x'\wedge y=x\,\hat\sqcap\, y
\label{eq:Bid'} 
\end{eqnarray}
\end{subequations}
Let us write multimorphism inequalities for pairs $(x',y)$ and $(x, y')$:
\begin{subequations}
\begin{eqnarray}
f( x'\wedge y)+\underline{\underline{f( x'\vee y)}}&\le& \underline{f( x')}+f(y) \label{eq:B'} \\
\underline{f(x\wedge  y')}+f(x\vee y')&\le& f(x)+\underline{\underline{f( y')}}\label{eq:B''} 
\end{eqnarray}
\end{subequations}
Summing \eqref{eq:B'} and \eqref{eq:B''}, cancelling terms using \eqref{eq:Bid}, and then substituting expressions using~\eqref{eq:Bid'} gives
\begin{eqnarray}
f( x\,\hat\sqcap\, y)+f(x\,\hat\sqcup\,  y)&\le& f(x)+f(y) \label{eq:B'''} 
\end{eqnarray}
\end{proof}

We call the operation of reversing the orientation of edge $(a,b)\in E$ in a graph $G=(D,E)$
a {\em valid flip} if $(a,b)$ belongs to a $3$-cycle.
To prove Theorem~\ref{th:STP}, it thus suffices to show the following:
\begin{itemize}
\item {\em For any tournament $G$ there exists a sequence of valid flips that makes it acyclic.}
\end{itemize}
Such sequence can be constructed as follows: (1) start with a subset $B\subseteq D$
with $|B|=3$; (2) perform valid flips in $G[B]$ to make it acyclic, where $G[B]=(B,E[B])$ is the subgraph of $G$
induced by $B$; (3) if $B\ne D$, add a vertex $c\in D-B$ to $B$ and repeat step 2.
The lemma below shows how to implement step 2.
\begin{lemmaX}
Suppose that $G=(B',E)$ is a tournament, $B'=B\cup\{c\}$ with $c\notin B$ and subgraph $G[B]$ is acyclic.
Then there exists a sequence of valid flips that makes $G$ acyclic.
\end{lemmaX}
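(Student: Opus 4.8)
The plan is to induct on $k=|B|$, exploiting the acyclicity of $G[B]$ to linearly order $B$, then using valid flips to ``bubble'' the vertex $c$ to one end of this order, and finally peeling off an extreme vertex to invoke the induction hypothesis. Since $G[B]$ is acyclic it is a transitive tournament, so write $B=\{b_1,\dots,b_k\}$ with $(b_i,b_j)\in E$ whenever $i<j$. The base case $k\le 1$ is trivial, since a tournament on at most two vertices is already acyclic and the empty sequence of flips works.

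For the inductive step I would first run a \emph{bubbling phase} that, using only valid flips, reaches a configuration in which either $(b_1,c)\in E$ or $(c,b_k)\in E$. Suppose the current graph has $(c,b_1)\in E$ and $(b_k,c)\in E$, and set $j=\max\{i:(c,b_i)\in E\}$; then $1\le j\le k-1$, and $(c,b_j),(b_j,b_{j+1}),(b_{j+1},c)\in E$, so $c\to b_j\to b_{j+1}\to c$ is a $3$-cycle and reversing $(c,b_j)$ is a valid flip. This flip does not touch $G[B]$, so it remains transitive; it strictly decreases the nonnegative integer $|\{i:(c,b_i)\in E\}|$; and since such flips only ever reverse edges of the form $(c,b_i)\mapsto(b_i,c)$, the predicate ``$(c,b_k)\in E$'' stays false once false, while ``$(b_1,c)\in E$'' can only become true. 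Hence after finitely many (at most $k$) such flips we arrive at a configuration with $(b_1,c)\in E$ or $(c,b_k)\in E$.

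Now split into two symmetric cases. If $(b_1,c)\in E$, then $b_1$ beats $b_2,\dots,b_k$ by transitivity and beats $c$ by assumption, so $b_1$ is a source of $G$, and $G$ is acyclic iff $G-b_1$ is. The tournament $G-b_1$ has vertex set $(B\setminus\{b_1\})\cup\{c\}$, its restriction to $B\setminus\{b_1\}$ is still transitive, and it has $k-1$ ``$B$-vertices'', so by the induction hypothesis there is a sequence of valid flips making $G-b_1$ acyclic. Each such flip reverses an edge lying on a $3$-cycle of the current $G-b_1$; as the edges incident to $b_1$ are never altered, that same $3$-cycle sits inside the current $G$, so the flip is valid in $G$ too, and at the end $G-b_1$ is acyclic with $b_1$ still a source, hence $G$ is acyclic. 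The case $(c,b_k)\in E$ is identical with $b_k$ playing the role of a sink.

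The one point that needs care is the bubbling phase: one must pin down a monotone potential together with a flip that is guaranteed to lie on a $3$-cycle of the \emph{current} graph, and check that the phase terminates precisely at one of the two extremal configurations. Everything else is routine, the reduction step being justified by the standard fact that deleting a source (or a sink) from a digraph preserves acyclicity and does not destroy any $3$-cycle among the remaining vertices. I do not expect a genuine obstacle beyond this bookkeeping.
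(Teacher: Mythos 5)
Your proof is correct, but it takes a genuinely different route from the paper's. The paper's argument is a one-shot potential argument with no induction: it observes that any cycle in $G$ must pass through $c$ (since $G[B]$ is acyclic), say as $\dots\to b\to c\to a\to\dots$, and that transitivity of $G[B]$ then forces $(a,b)\in E$, producing the $3$-cycle $c\to a\to b\to c$; it flips the out-edge $(c,a)$, notes that the out-degree of $c$ strictly decreases, and concludes termination at an acyclic graph after at most $|B|$ flips. Your proof instead inducts on $|B|$: you fix the transitive order $b_1<\dots<b_k$, run a ``bubbling'' phase whose purpose is merely to make $b_1$ a source (or $b_k$ a sink), peel that vertex off, and recurse. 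Both arguments hinge on exactly the same monotone potential (the out-degree of $c$), and your explicit construction of the $3$-cycle $c\to b_j\to b_{j+1}\to c$ from $j=\max\{i:(c,b_i)\in E\}$ is a concrete instance of the $3$-cycle the paper extracts abstractly from an arbitrary cycle. The trade-off: the paper's version is shorter and already yields acyclicity in a single pass of at most $k$ flips, whereas yours nests the bubbling inside a recursion, costing up to $O(k^2)$ flips; on the other hand, your version is slightly more constructive in that the flip to perform at each step is pinned down by an explicit formula rather than by ``pick any cycle.'' The step you flagged as needing care --- that bubbling terminates at one of the two extremal configurations --- is indeed sound: the potential $|\{i:(c,b_i)\in E\}|$ strictly decreases, flips never create edges of the form $(c,b_i)$, and since $(b_k,c)\in E$ is preserved throughout, the loop can only exit with $(b_1,c)\in E$. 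The remaining step (a source or sink can be deleted without destroying the $3$-cycles used by the recursive flips, and restoring it preserves acyclicity) is also correct.
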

\begin{proof}
Suppose that $G$ has a cycle $\calC$, then it must pass through $c$ (since $G[B]$ is acyclic):
$\calC=\ldots\!\rightarrow\! b\!\rightarrow\! c\!\rightarrow\! a\!\rightarrow\! \ldots$.
Since there is a path from $a$ to $b$ in $G[B]$, we must have $(a,b)\in E$
(again, due to acyclicity of $G[B]$).
Thus, $c\!\rightarrow\! a\!\rightarrow\! b\!\rightarrow\! c$ is a 3-cycle in $G$.

Let us repeat the following procedure while possible: pick such cycle and flip edge $(c,a)$ to $(a,c)$.
This operation decreases the number of edges  in $G$ coming out of $c$. Therefore,
it must terminate after a finite number of steps and yield an acyclic graph $G$.
\end{proof}

\end{document}